\title{Window parity games: an alternative approach toward parity games with time bounds\thanks{Q.~Hautem is supported by a FRIA fellowship, M.~Randour is an F.R.S.-FNRS Postdoctoral researcher.}}
\author{V\'eronique Bruy\`ere\inst{1} \and Quentin Hautem\inst{1} \and Mickael Randour\inst{1,2}}	 
\institute{
Computer Science Department, Université de Mons (UMONS), Belgium\\
\and Computer Science Department, Universit\'e libre de Bruxelles (ULB), Belgium}
\newcommand{\N}{\mathbb{N}}
\newcommand{\Nzero}{\N_0}
\newcommand{\ssetminus}{\! \setminus \!}
\newcommand{\stronger}{\preceq}
\newcommand{\strongerm}{\preceq_m}
\newcommand{\smallest}{$\stronger$-smallest}
\newcommand{\smaller}{$\stronger$-smaller}
\newcommand{\playerOne}{\ensuremath{\mathcal{P}_1} } 
\newcommand{\playerTwo}{\ensuremath{\mathcal{P}_2 } }
\newcommand{\playerI}{\ensuremath{\mathcal{P}_i} }
\newcommand{\Gdev}{(V_1,V_2, E)}   
\newcommand{\Plays}{\mathsf{Plays}}
\newcommand{\Out}{\mathsf{Out}}
\newcommand{\Obj}{\Omega}  
\newcommand{\WinG}[3]{{\mathsf{Win}}_{#1}^{#3}({#2})}  
\newcommand{\pifactor}[2]{\pi{[{#1},{#2}]}}
\newcommand{\Reach}{\mathsf{Reach}}
\newcommand{\Safe}{\mathsf{Safe}}
\newcommand{\Buchi}{\mathsf{Buchi}}
\newcommand{\CoBuchi}{\mathsf{CoBuchi}}
\newcommand{\GenReach}{\mathsf{GenReach}}
\newcommand{\Par}{\mathsf{Parity}}
\newcommand{\Inf}{\mathsf{Inf}}
\newcommand{\DirFWP}{\mathsf{DirFixWP}}
\newcommand{\FWP}{\mathsf{FixWP}}
\newcommand{\DirBWP}{\mathsf{DirBndWP}}
\newcommand{\BWP}{\mathsf{BndWP}}
\newcommand{\DirBP}{\mathsf{DirBndPR}}
\newcommand{\BP}{\mathsf{BndPR}}
\newcommand{\DirPR}{\mathsf{DirFixPR}}
\newcommand{\PR}{\mathsf{FixPR}}
\newcommand{\RR}{\mathsf{RR}}
\newcommand{\BndRR}{\mathsf{BndRR}}
\newcommand{\DirBndRR}{\mathsf{DirBndRR}}
\newcommand{\window}[1]{window at position~${#1}$}
\newcommand{\closed}{closed}
\newcommand{\ffclosed}[1]{first-closed}
\newcommand{\good}{$\lambda$-good}
\newcommand{\bad}{bad}
\newcommand{\GoodDec}[1]{${#1}$-good decomposition}
\newcommand{\EGoodDec}[1]{eventually ${#1}$-good decomposition}
\begin{document}
  \maketitle	

\begin{abstract} 
Classical objectives in two-player zero-sum games played on graphs often deal with limit behaviors of infinite plays: e.g., \textit{mean-payoff} and \textit{total-payoff} in the quantitative setting, or \textit{parity} in the qualitative one (a canonical way to encode $\omega$-regular properties). Those objectives offer powerful abstraction mechanisms and often yield nice properties such as memoryless determinacy.
However, their very nature provides no guarantee on time bounds within which something good can be witnessed. 
In this work, we consider two approaches toward inclusion of \textit{time bounds} in parity games. The first one, \textit{parity-response} games, is based on the notion of finitary parity games~\cite{ChatterjeeHH09} and parity games with costs~\cite{DBLP:journals/corr/abs-1207-0663,Weinert016}. The second one, \textit{window parity} games, is inspired by window mean-payoff games~\cite{Chatterjee0RR15}. We compare the two approaches and show that while they prove to be equivalent in some contexts, window parity games offer a more tractable alternative when the time bound is given as a parameter ($\sf P$-c.~vs. $\sf PSPACE$-c.). In particular, it provides a conservative approximation of parity games computable in polynomial time. Furthermore, we extend both approaches to the multi-dimension setting. We give the full picture for both types of games with regard to complexity and memory bounds.
\end{abstract}

\section{Introduction}
\label{sec:intro}

\paragraph{\bf Games on graphs.} Two-player games played on directed graphs constitute an important framework for the synthesis of a suitable controller for a reactive system faced to an uncontrollable environment~\cite{randourECCS}. In this setting, \textit{vertices} of the graph represent states of the system and edges represent transitions between those states. We consider \textit{turn-based two-player} games: each vertex either belongs to the system (the first player, denoted by~$\playerOne$) or the environment (the second player, denoted by $\playerTwo$). A game is played by moving an imaginary pebble from vertex to vertex according to existing transitions: the owner of a vertex decides where to move the pebble. The outcome of the game is an infinite sequence of vertices called \textit{play}. The choices of both players depend on their respective \textit{strategy} which can use an arbitrary amount of memory in full generality. In the classical setting, $\playerOne$ tries to achieve an \textit{objective} (describing a set of \textit{winning plays}) while $\playerTwo$ tries to prevent him from succeeding: hence, our games are \textit{zero-sum}. As all the objectives considered in this paper define Borel sets, Martin's theorem~\cite{Martin75} guarantees determinacy.

\paragraph{\bf Parity games.} Two-player games with $\omega$-regular objectives have been studied extensively in the literature. See for example~\cite{Thomas97,2001automata} for an introduction. A canonical way to represent games with $\omega$-regular conditions is the class of \textit{parity games}: vertices are assigned a non-negative integer \textit{priority} (or \textit{color}), and the objective asks that among the vertices that are seen infinitely often along a play, the minimal priority be even. Parity games have been under close scrutiny for a long time both due to their importance (e.g., they subsume modal $\mu$-calculus model checking~\cite{DBLP:conf/cav/EmersonJS93}) and their intriguing complexity: they belong to the class of problems in $\sf UP \cap coUP$~\cite{Jurdzinski98} and despite many efforts (e.g.,~\cite{Zielonka98,Jurdzinski00,DBLP:journals/siamcomp/JurdzinskiPZ08,DBLP:conf/fsttcs/Schewe07}), whether they belong to $\sf P$ is still an open question. Furthermore, parity games enjoy memoryless determinacy~\cite{EmersonJ88,Zielonka98}. Multi-dimension parity games were studied in~\cite{DBLP:conf/fossacs/ChatterjeeHP07}: in such games, $n$-dimension vectors of priorities are associated to each vertex, and the objective is to satisfy the conjunction of all the one-dimension parity objectives. The complexity of solving those games is higher: deciding if $\playerOne$ (resp.~$\playerTwo$) has a winning strategy is $\sf coNP$-complete (resp.~$\sf NP$-complete) and exponential memory is needed for $\playerOne$ whereas $\playerTwo$ remains memoryless~\cite{DziembowskiJW97,Hor05-GDV}

\paragraph{\bf Time bounds.} In its classical formulation, the parity objective essentially requires that \textit{for each odd priority seen infinitely often, a smaller even priority should also be seen infinitely often}. An odd priority can be seen as a stimulus that must be answered by seeing a smaller even priority. The parity objective has fundamental qualities. The simplicity of its definition totally abstracts timing issues like ``how much time has elapsed between a stimulus and its answer'' and is key to memoryless determinacy. This makes it robust to slight changes in the model which could impact more precise formulations (e.g., counting the number of steps between a stimulus and its answer critically depends on the granularity of the game graph).

Nonetheless, it has been recently argued that in a large number of practical applications, \textit{timing does matter} (e.g.,~\cite{ChatterjeeHH09,DBLP:journals/fmsd/KupfermanPV09,Chatterjee0RR15}). Indeed, in general it does not suffice to know that a ``good behavior'' will \textit{eventually} happen, and one wants to ensure that it can actually be witnessed \textit{within a time frame which is acceptable} with regard to the modeled reactive system. For example, consider a computer server having to grant requests to clients. A classical parity objective can encode that requests should eventually be granted. However, it is clear that in a desired controller, requests should not be placed on hold for an arbitrarily long time. In order to accomodate such requirements, various attempts to associate classical game objectives with time bounds have been recently studied. For example, \textit{window mean-payoff} and \textit{window total-payoff} games provide a framework to reason about quantitative games (e.g., modeling quantities such as energy consumption) with time bounds~\cite{Chatterjee0RR15}. In the qualitative setting, \textit{finitary parity} games~\cite{ChatterjeeHH09,ChatterjeeF13} and \textit{parity games with costs}~\cite{DBLP:journals/corr/abs-1207-0663,Weinert016} provide a similar framework for parity games.

\paragraph{\bf Two approaches.} While window games and finitary parity games (resp.~parity games with costs) share the goal of allowing precise specification of time bounds, their inner mechanisms differ. The aim of our work is three-fold: $(i)$~apply the \textbf{window mechanism to parity games}, $(ii)$~provide a \textbf{thorough comparison} with the existing framework of finitary parity games and parity games with costs, $(iii)$~\textbf{extend both approaches to the multi-dimension setting} (which was left unexplored up to now). Since all
those related papers do not use a uniform terminology, we here use the following taxonomy for the two approaches.
\begin{itemize}
\item \textbf{Window parity (WP).} Intuitively, the \textit{direct fixed WP} objective considers a window of size bounded by $\lambda \in \mathbb{N}_0$ (given as a parameter) sliding over an infinite play and declare this play winning if in all positions, the window is such that the minimal priority within it is even. For \textit{direct bounded WP}, the size of the window is not fixed as a parameter but a play is winning if there exists a bound $\lambda$ for which the condition holds. We also consider the \textit{fixed WP} and \textit{bounded WP} objectives which are essentially prefix-independent variants of the previous ones. All those objectives are based on the window mechanism introduced in~\cite{Chatterjee0RR15} and our work presents the first implementation of this mechanism for parity games.

\item \textbf{Parity-response (PR).} The \textit{direct fixed PR} objective asks that along a play, any odd priority be followed by a smaller even priority in at most $\lambda \in \mathbb{N}_0$ (given as a parameter) steps. As for the WP setting, we also consider the \textit{direct bounded PR} objective where a play is winning if there exists a bound~$\lambda$ such that the condition holds, along with the respective prefix-independent variants: the \textit{fixed PR} and the \textit{bounded PR} objectives. The \textit{bounded PR} objective was studied for one-dimension games (under the name \textit{finitary parity}) in~\cite{ChatterjeeHH09}: deciding the winner is in $\sf P$ and memoryless strategies suffice for $\playerOne$ while $\playerTwo$ needs infinite memory. The \textit{fixed PR} objective for one-dimension games was very recently proved to be $\sf PSPACE$-complete, with exponential memory bounds for both players~\cite{Weinert016} (this work is presented in the more general context of \textit{parity games with costs}). Our work provides the first study of the parity-response approach in multi-dimension games.
\end{itemize}

\renewcommand{\arraystretch}{1.4}
\begin{table}
\begin{center}
\begin{tabular}{|c||c|c|c||c|c|c|}
\cline{2-7}
\multicolumn{1}{c|}{} & \multicolumn{3}{c||}{one-dimension} & \multicolumn{3}{c|}{ multi-dimension} \\
\cline{2-7}
\multicolumn{1}{c|}{} & ~complexity~ & $\playerOne$ mem. & $\playerTwo$ mem. & ~complexity~ & $\playerOne$ mem. & $~\playerTwo$ mem.$~$ \\
\hhline{-|======|}
\textbf{Fixed WP} & \textbf{\textsf{P}-c.} & \multicolumn{2}{c||}{\textbf{polynomial}}  & ~\multirow{4}{*}{\textbf{\textsf{EXPTIME}-c.}}~ & \multicolumn{2}{c|}{\multirow{2}{*}{\textbf{exponential}}} \\
\cline{1-4}
Fixed PR & \textsf{PSPACE}-c. & ~exponential~ & ~$\leq$ exponential~ & & \multicolumn{2}{c|}{} \\
\cline{1-4} \cline{6-7}
~\textbf{Bounded WP}~ & \multirow{2}{*}{\textsf{P}-c.} & \multirow{2}{*}{memoryless} & \multirow{2}{*}{infinite} & & \multirow{2}{*}{$~$\textbf{exponential}$~$} & \multirow{2}{*}{\textbf{infinite}} \\ 
\cline{1-1}
~Bounded PR~ & & & & & & \\
\cline{1-7}
\end{tabular}
\end{center}
\caption{Complexity of deciding the winner and memory required for winning strategies in window parity (WP) and parity-response (PR) games. All results hold for both the \textit{prefix-independent} and the \textit{direct} ({\sf Dir}) variants of all the objectives, except for the memory of $\playerTwo$ in the direct bounded cases: in one-dimension games, linear memory is both sufficient and necessary (for both WP and PR) and in multi-dimension games, exponential memory is both sufficient and necessary. All bounds are tight unless specified by the $\leq$ symbol. New results are in bold.}
\label{table:overview}
\end{table}

\paragraph{\bf Our contributions.} Given the number of variants studied, we give an overview of our results in Table~\ref{table:overview}. Our main contributions are as follows. 
\begin{enumerate}
\item We prove that \textit{bounded WP} and \textit{bounded PR} objectives coincide, even in multi-dimension games (Proposition~\ref{prop:inclusions}).
\item We establish that \textit{bounded WP} (and thus \textit{bounded PR}) games are $\sf P$-hard in one-dimension (Theorem~\ref{thm:finitaryparity}, $\sf P$-membership follows from~\cite{ChatterjeeHH09}) and that they are $\sf EXPTIME$-complete in multi-dimension (Theorem~\ref{prop:multiDirBP}). The $\sf EXPTIME$-membership follows from a reduction to a variant of \textit{request-response games}~\cite{WallmeierHT03} presented in~\cite{ChatterjeeHH09} under the name of \textit{finitary Streett games} (Lemma~\ref{lem:finitarystreett}). The $\sf EXPTIME$-hardness is proved via a reduction from the membership problem in alternating polynomial-space Turing machines (Lemma~\ref{prop:exptimehard}).
\item We show that in multi-dimension \textit{bounded WP} (and thus \textit{bounded PR}) games, exponential memory is both sufficient and necessary for $\playerOne$ while infinite memory is needed for $\playerTwo$ (Theorem~\ref{prop:multiDirBP}). 
\item We prove that one-dimension \textit{fixed WP} games provide a \textbf{conservative approximation of parity games} (Proposition~\ref{prop:inclusions}) \textbf{computable in polynomial time} (Theorem~\ref{prop:WPonedim}). This is in contrast to the $\sf PSPACE$-completeness of \textit{fixed PR} games~\cite{Weinert016} (actually, the proof in~\cite{Weinert016} is for a more general model but already holds for \textit{fixed PR} games). 
\item While \textit{fixed PR} games are $\sf PSPACE$-complete, we establish two polynomial-time algorithms (Theorem~\ref{thm:fixed}) to solve fixed-parameter sub-cases: $(i)$ the bound $\lambda$ is fixed, or $(ii)$ the number of priorities is fixed.
\item In multi-dimension, we prove that both \textit{fixed PR} (Theorem~\ref{prop:multiPR}) and \textit{fixed WP} (Theorem~\ref{prop:multiWP}) games are $\sf EXPTIME$-complete. Membership relies on different techniques and algorithms for each case while hardness is based on the same reduction as for the \textit{bounded} variants (Lemma~\ref{prop:exptimehard}).
\item In one-dimension games, we also establish that for \textit{fixed WP}, polynomial memory is both sufficient and necessary for both players, whereas exponential memory is required for \textit{fixed PR}~\cite{Weinert016}. In multi-dimension games, we prove that for both \textit{fixed PR} and \textit{fixed WP}, exponential memory is both sufficient and necessary for both players. The upper bounds follow from the $\sf EXPTIME$ algorithms mentioned above whereas the lower bounds in one-dimension are shown thanks to appropriate families of games (Example~\ref{ex:necmemory}) and in multi-dimension are obtained through reduction from \textit{generalized reachability games}~\cite{FijalkowH13} (Lemma~\ref{lem:memexp}).
\item We establish the existence of values of $\lambda$ such that the \textit{fixed} objectives become equivalent to the \textit{bounded} ones, both in one-dimension (Theorem~\ref{thm:finitaryparity}) and in multi-dimension (Theorem~\ref{prop:multiDirBP}).
\end{enumerate}
While all the aforementioned results are for the \textit{prefix-independent} variants of our objectives, we also obtain closely related complexities and memory bounds for the \textit{direct} ones (Table~\ref{table:overview}). We obtain our results using a variety of techniques, sometimes inspired by~\cite{ChatterjeeHH09,Chatterjee0RR15}. Our focus is on giving the full picture for the two approaches toward including time bounds in parity games: window parity and parity-response. We sum up the key comparison points in the next paragraph.

\paragraph{\bf Comparison.} The \textit{parity-response} and \textit{window parity} approaches turn out to be equivalent in the \textit{bounded} context, i.e., when the question is the existence of a bound $\lambda \in \mathbb{N}_0$ for which the corresponding \textit{fixed} objective holds. Hence, the focus of the comparison is the \textit{fixed} variants. Observe that those variants are of interest for applications where the time bound is part of the specification: parameter $\lambda$ grants flexibility in the specification as it can be adjusted to specific requirements of the application. Let us review the complexities of the \textit{fixed PR} and \textit{fixed WP} objectives.

In one-dimension games, \textit{fixed PR} is $\sf PSPACE$-complete whereas \textit{\textit{fixed WP} provides a framework with similar flavor that enjoys increased tractability}: it is $\sf P$-complete. Hence, \textit{\textit{fixed WP} does provide a polynomial-time conservative approximation of parity games} (Proposition~\ref{prop:inclusions}). Interestingly, the \textit{fixed WP} objective also permits to approximate the \textit{fixed PR} one in both directions, and in polynomial time: we prove in Proposition~\ref{prop:inclusions} that the \textit{fixed PR} objective for time bound $\lambda$ can be framed by the \textit{fixed WP} objective for two well-chosen values of the time bound $\lambda'$ and $\lambda''$.

In multi-dimension, both \textit{fixed PR} and \textit{fixed WP} games are $\sf EXPTIME$-complete. Nonetheless, while the algorithm for \textit{fixed PR} requires exponential time in \textit{both} the number of dimensions \textit{and} the number of priorities (which can be as large as the game graph), \textit{solving the \textit{fixed WP} case only requires exponential time in the number of dimensions}. This distinction may have \textit{impact on practical applications} where, usually, the size of the model (hence the game graph) can be very large while the specification (hence the number of dimensions) is comparatively small. Note that for both objectives, the multi-dimension algorithms are \textit{pseudo-}polynomial in the time bound $\lambda$, hence also exponential in the length of its binary encoding.

Finally, let us compare \textit{window parity} games with \textit{window mean-payoff (WMP)} games~\cite{Chatterjee0RR15}. First, one could naturally wonder if WP games could be solved by encoding them into WMP games, following a reduction similar in spirit to the one developed by Jurdzinski for classical parity games~\cite{Jurdzinski98}. This is indeed possible, but leads to increased complexities in comparison to the ad hoc analysis developed in this work. For example, multi-dimension \textit{fixed WP} games would require exponential time in the number of priorities too. Second, observe that \textit{fixed WP} games can be solved in polynomial time whatever the bound $\lambda \in \mathbb{N}_0$ whereas \textit{fixed WMP} games require pseudo-polynomial time, i.e., also polynomial in the bound $\lambda$. Finally, multi-dimension \textit{bounded WMP} games are known to be non-primitive-recursive-hard and their decidability is still open~\cite{Chatterjee0RR15}. On the contrary, multi-dimension \textit{bounded WP} games are $\sf EXPTIME$-complete. This suggests that the colossal complexity of \textit{bounded WMP} games is a result of the quantitative nature of mean-payoff mixed with windows, and not an inherent drawback of the window mechanism.

\paragraph{\bf Other related work.} This paper extends its preceding conference version~\cite{BHR16}. In addition to the aforementioned articles, we mention two papers where logical formalisms dealing with time bounds are studied. In~\cite{DBLP:journals/fmsd/KupfermanPV09}, Kupferman et al.~introduced Prompt-LTL, which is strongly linked with the finitary conditions discussed above. In~\cite{DBLP:conf/csl/BaierKKW14}, Baier et al.~also studied an extension of LTL that can express properties based on the window mechanism of~\cite{Chatterjee0RR15}. The study of logical fragments corresponding to our framework of window parity games is an interesting question left open for future work.

\paragraph{\bf Outline.} Section~\ref{sec:preliminaries} presents the needed definitions and known results about classical objectives. Section~\ref{sec:objectives} introduces the different objectives studied in this paper and establishes the links between them. Section~\ref{sec:onedim} and Section~\ref{sec:manydim} respectively present our results for one-dimension and multi-dimension games. 
\section{Preliminaries} \label{sec:preliminaries}

\paragraph{\bf Game structures.}
We consider zero-sum turn-based games played by two players, $\playerOne$ and $\playerTwo$, on a finite directed graph.

\begin{definition}
A \emph{game structure} is a tuple $G = (V_1,V_2,E)$ where
\begin{itemize}
\item $(V,E)$ is a finite directed graph, with $V = V_1 \cup V_2$ the set of vertices and $E \subseteq V \times V$ the set of edges such that for each $v \in V$, there exists $(v,v') \in E$ for some $v' \in V$ (no deadlock),
\item $(V_1,V_2)$ forms a partition of $V$ such that $V_i$ is the set of vertices controlled by player $\playerI$ with $i \in \{1,2\}$.
\end{itemize}
\end{definition}

A \emph{play} of $G$ is an infinite sequence of vertices $\pi = v_0 v_1 \ldots \in V^{\omega}$ such that $(v_k,v_{k+1}) \in E$ for all $k \in \N$. We denote by $\Plays(G)$ the set of plays in $G$. \emph{Histories} of $G$ are finite sequences $\rho = v_0 \ldots v_k \in V^\ast$ defined in the same way. Given a play $\pi = v_0 v_1 \ldots$, the history $v_k \ldots v_{k+l}$ is denoted by $\pifactor{k}{k+l}$; in particular, $v_k = \pi[k]$. We also use notation $\pifactor{k}{\infty}$ for the suffix $v_k v_{k+1} \ldots$ of $\pi$.

\paragraph{\bf Strategies.} A \emph{strategy} $\sigma_i$ for $\playerI$ is a function $\sigma_i\colon V^*V_i \rightarrow V$ assigning to each history $\rho v \in V^*V_i$ a vertex $v' = \sigma_i(\rho v)$ such that $(v,v') \in E$. It is \emph{memoryless} if $\sigma_i(\rho v) = \sigma_i(\rho'v)$ for all histories $\rho v, \rho'v$ ending with the same vertex $v$, that is, if $\sigma_i$ is a function $\sigma_i\colon V_i \rightarrow V$. It is \emph{finite-memory} if it can be encoded by a deterministic \emph{Moore machine} $(M, m_0, \alpha_u, \alpha_n)$ where $M$ is a finite set of states (the memory of the strategy), $m_0 \in M$ is the initial memory state, $\alpha_u\colon M \times V \rightarrow M$ is the update function, and $\alpha_n\colon M \times V_i \rightarrow V$ is the next-action function. The Moore machine defines a strategy $\sigma_i$ such that $\sigma_i(\rho v) = \alpha_n(\widehat{\alpha}_u(m_0,\rho),v)$ for all histories $\rho v \in V^*V_i$, where $\widehat{\alpha}_u$ extends $\alpha_u$ to sequences of vertices as expected. The \textit{size} of the strategy is the size $|M|$ of its Moore machine. Note that a strategy is memoryless when $|M| = 1$.

Given a strategy $\sigma_i$ of $\playerI$, we say that a play $\pi = v_0 v_1 \ldots$ of $G$ is \emph{consistent} with $\sigma_i$ if $v_{k+1} = \sigma_i(v_0 \ldots v_k)$ for all $k \in \N$ such that $v_k \in V_i$. Consistency is naturally extended to histories in a similar fashion. Given an \emph{initial vertex} $v_0$, and a strategy $\sigma_i$ of each player $\playerI$, we have a unique play consistent with both strategies. This play is called the \emph{outcome} of the game and is denoted by $\Out(v_0,\sigma_1,\sigma_2)$.

\paragraph{\bf Objectives and winning sets.} Let $G = (V_1,V_2,E)$ be a game structure. An \emph{objective for $\playerOne$} is a set of plays $\Obj \subseteq \Plays(G)$. A play $\pi$ is \emph{winning} for $\playerOne$ if $\pi \in \Obj$, and losing otherwise (i.e., winning for $\playerTwo$). We thus consider \emph{zero-sum} games such that the objective of player $\playerTwo$ is $\overline{\Obj} = \Plays(G) \ssetminus \Obj$. In the following, we always take the point of view of $\playerOne$ by assuming that $\Obj$ is his objective, and we denote by $(G,\Obj)$ the corresponding \emph{game}. Given an initial vertex $v_0$ of a game $(G,\Obj)$, a strategy $\sigma_1$ for $\playerOne$ is \emph{winning} from $v_0$ if $\Out(v_0,\sigma_1,\sigma_2) \in \Obj$ for all strategies $\sigma_2$ of $\playerTwo$. Vertex $v_0$ is also called \emph{winning} for $\playerOne$ and the \emph{winning set} $\WinG{1}{\Obj}{G}$ is the set of all his winning vertices. Similarly the winning vertices of $\playerTwo$ are those from which $\playerTwo$ can ensure to satisfy his objective $\overline{\Obj}$ against all strategies of $\playerOne$, and $\WinG{2}{\overline{\Obj}}{G}$ is his winning set. If $\WinG{1}{\Obj}{G} \cup \WinG{2}{\overline{\Obj}}{G} = V$, we say that the game is \emph{determined}. It is known that every turn-based game with a Borel objective is determined~\cite{Martin75}. This in particular applies to the objectives studied in this paper.

\paragraph{\bf Decision problem.} Given a game $(G, \Obj)$ and an initial vertex $v_0$, we want to decide whether $\playerOne$ has a winning strategy from $v_0$ for the objective $\Obj$ or not (in which case, $\playerTwo$ has one for $\overline{\Obj}$). We want to study the complexity class of this decision problem as well as the memory requirements of winning strategies of both players. In this paper, we focus on several variants of the parity objective and we consider two settings: the \emph{one-dimension} case with one objective~$\Obj$ and the \emph{multi-dimension} case with the intersection of several objectives $\cap_{m=1}^n \Obj_m$.

\paragraph{\bf Parity objective.} 
Let $G$ be a game structure. Let $\pi$ be a play, we define $\Inf(\pi)$ as the set of vertices seen infinitely often in $\pi$. Formally, $\Inf(\pi) = \{v \in V \mid \forall\, k \geq 0,\: \exists\, l \geq k,\: \pi[l] = v \}$. Given a \emph{priority function} $p\colon V \rightarrow \{0,1,\ldots, d\}$ that maps every vertex to an integer priority where $d$ is even and $d \leq |V| +1$ (w.l.o.g.), the \emph{parity objective} $\Par(p)$ asks that of the vertices that are visited infinitely often, the smallest priority be even. Formally, the parity objective is defined as
\begin{eqnarray*}
\Par(p)= \{ \pi \in \Plays(G) \mid \min_{v \in \Inf(\pi)} p(v) \text{ is even} \}.
\end{eqnarray*}
As smallest even priorities have a specific role in parity objectives, we define a partial order $\stronger$ on priorities as follows. For $c,c' \in \{0,\ldots,d\}$, we have $c \stronger c'$ if and only if $c$ is even and $c \leq c'$. In this case we say that $c$ is \smaller\ than $c'$.

Deciding the winner in parity games is known to be in $\sf UP \cap coUP$~\cite{Jurdzinski98} and whether a polynomial-time algo\-rithm exists is a long-standing open question (e.g.,~\cite{Zielonka98,Jurdzinski00,DBLP:journals/siamcomp/JurdzinskiPZ08,DBLP:conf/fsttcs/Schewe07}). Memoryless strategies suffice for both players to win in parity games~\cite{EmersonJ88,Zielonka98}. The multi-dimension case is studied in~\cite{DBLP:conf/fossacs/ChatterjeeHP07}, with $n$ different priority functions $p_m$, $m \in \{1, \ldots, n\}$, and objective $\cap_{m=1}^n \Obj_m$ such that each $\Obj_m$ is the parity objective defined for~$p_m$. 
Deciding if $\playerOne$ wins in those games is $\sf coNP$-complete, exponential-memory strategies are necessary and sufficient for $\playerOne$, and memoryless strategies suffice for $\playerTwo$~\cite{BuhrkeLV96,DziembowskiJW97,Hor05-GDV,PitermanP06}.

\paragraph*{\bf Other useful objectives.} 

We recall some useful results for several classical objectives. Let $G$ be a game structure and $U \subseteq V$ be a set of vertices. A \emph{reachability} objective $\Reach(U)$ asks to visit a vertex of $U$ at least once, whereas a \emph{safety} objective $\Safe(U)$ asks to visit no vertex of $V \setminus U$. Deciding the winner in reachability games and safety games is known to be $\sf P$-complete with an algorithm in time $O(|V| + |E|)$, and memoryless winning strategies suffice for both objectives and both players~\cite{Beeri80,2001automata,Immerman81}. A \emph{B\"uchi} objective $\Buchi(U)$ asks to visit a vertex of $U$ infinitely often, whereas a \emph{co-B\"uchi} objective $\CoBuchi(U)$ asks to visit no vertex of $V \setminus U$ infinitely often. Deciding the winner in B\"uchi games and co-B\"uchi games is also $\sf P$-complete with an algorithm in time $O(|V|^2)$, and memoryless strategies also suffice for both objectives and both players~\cite{ChatterjeeH14,EmersonJ91,2001automata}. 
Finally, given $U_1,\ldots,U_n$ a family of $n$ subsets of $V$, a \emph{generalized reachability} objective $\GenReach(U_1, . . . , U_n) =  \cap_{m=1}^n \Reach(U_m)$ asks to visit a vertex of $U_m$ at least once, for each $m \in \{1,\ldots,n\}$. Deciding the winner in generalized reachability games is $\sf PSPACE$-complete and exponential-memory strategies are necessary and sufficient for both players~\cite{FijalkowH13}.

\section{Adding time bounds to parity games} \label{sec:objectives}

In this section we introduce the two approaches discussed in this paper: \textit{window parity} (\textsf{WP}) and \textit{parity-response} (\textsf{PR}) games. In Section~\ref{subsec:defs}, we formally define the related objectives. Then, Section~\ref{subsec:decomposition} presents an interesting decomposition of winning plays for the \textsf{WP} objective that will prove to be a useful mechanism to establish solving algorithms. Finally, in Section~\ref{subsec:relationship}, we establish inclusions and equivalences between several variants of the \textsf{WP} and \textsf{PR} objectives.

\subsection{Window parity and parity-response objectives}
\label{subsec:defs}

As stated in Section~\ref{sec:intro}, the intuition for both approaches is as follows. The parity-response objective asks that every priority be followed by a \smaller\ priority in a bounded number of steps. In a window parity game, a \emph{window} with a bounded size is sliding along the play, and one asks to find a \smallest\footnote{Notice the difference: smallest vs.~smaller.} priority inside this window, and this for all positions along the play. We derive four variants for each of these objectives, according to whether the bound is given as a parameter or not (\emph{fixed} or \emph{bounded} variant), and whether the objective must be satisfied directly or eventually (\emph{direct} or \emph{undirect} variant). The \textit{undirect} variants are thus prefix-independent.\footnote{An objective $\Obj$ is \textit{prefix-independent} if for any play $\pi = \rho \pi'$, it holds that $\pi \in \Obj \iff \pi' \in \Obj$.} Formally:

\begin{definition}\label{def:obj}
Given a game structure $G = \Gdev$, a priority function $p\colon V \rightarrow \{0,1,\ldots, d\}$, and a \emph{bound} $\lambda \in \Nzero$, we define the eight following objectives:
\begin{eqnarray*}
\DirPR(\lambda,p) &=& \{ \pi \in \Plays(G) \mid \forall\, j \geq 0,\; \exists\, l \in \{0,\ldots,\lambda-1\},\; p(\pi[j+l]) \stronger p(\pi[j]) \},\\
\DirFWP(\lambda,p) &=& \{ \pi \in \Plays(G) \mid \forall\, j \geq 0,\; \exists\, l \in \{0,\ldots,\lambda-1\},\; \forall\, k \in \{0,\ldots,l\},\; p(\pi[j+l]) \stronger p(\pi[j+k]) \},
\end{eqnarray*}
and given ${\sf X} \in \{\sf{PR,WP}\},$
\begin{eqnarray*}
{\sf FixX}(\lambda,p) &=& \{ \pi \in \Plays(G) \mid \exists\, i \geq 0,\; \pi[i,\infty] \in {\sf DirFixX}(\lambda,p)\}, \\
{\sf DirBndX}(p) &=& \{ \pi \in \Plays(G) \mid \exists\, \lambda \in \Nzero,\; \pi \in {\sf DirFixX}(\lambda,p) \},\\
{\sf BndX}(p) &=& \{ \pi \in \Plays(G) \mid \exists\, i \geq 0,\; \pifactor{i}{\infty} \in {\sf DirBndX}(p) \}.
\end{eqnarray*}
\end{definition}

Thus, in the direct fixed parity-response objective $\DirPR(\lambda,p)$, for all positions $j \geq 0$, the priority $p(\pi[j])$ must be followed by a \smaller\ priority $p(\pi[j+l])$ within at most $\lambda-1$ steps. The undirect fixed variant $\PR(\lambda,p)$ asks this objective to be satisfied eventually (i.e., for all positions $j \geq i$, for some~$i$). The direct bounded variant $\DirBP(p)$ (resp.~the undirect bounded variant $\BP(p)$) asks for the existence of a bound $\lambda$ for which $\DirPR(\lambda,p)$ (resp.~$\PR(\lambda,p)$) is satisfied.

In the case of window parity objectives, we rather call $\lambda$ the \emph{window size}. Given a play $\pi = v_0v_1 \ldots$, a \emph{$\lambda$-\window{j}} is a window of size $\lambda$ placed along $\pi$ from position $j$ to $j + \lambda-1$. The direct fixed window parity objective $\DirFWP(\lambda,p)$ asks that for all $j \geq 0$, inside the $\lambda$-\window{j}, one can find a priority $p(\pi[j+l])$ with $l \leq \lambda - 1$ that is the \smallest\ one in $\pifactor{j}{j+l}$. When the window size $\lambda$ is clear from the context, we drop the prefix $\lambda$ and simply talk about windows instead of $\lambda$-windows.

\begin{example} \label{ex:1} We illustrate the previous definitions on a simple example of one-player game, where all vertices belong to $\playerOne$ (see Figure~\ref{fig:Firstexample}). In this example and in the sequel, the priority $p(v)$ is always put under vertex~$v$, and circle (resp. square) vertices all belong to $\playerOne$ (resp. $\playerTwo$).
\begin{figure}[h]
\centering
  \begin{tikzpicture}[scale=5]
    \everymath{\footnotesize}
    \draw (0,0) node [circle, draw] (A) {$v_0$};
    \draw (0.3,0) node [circle, draw] (B) {$v_1$};
    \draw (0.6,0) node [circle, draw] (C) {$v_2$};
    \draw (0.9,0) node [circle, draw] (G) {$v_3$};
    
    \draw (0,-.12) node[] (D) {$3$};
	\draw (0.3,-.12) node[] (E) {$1$};    
    \draw (0.6,-.12) node[] (F) {$2$};
    \draw (0.9,-.12) node[] (H) {$0$};
    
    \draw[->,>=latex] (A) to (B);
    \draw[->,>=latex] (B) to (C);
    \draw[->,>=latex] (C) to (G);
    
    \draw[->,>=latex] (G) to[bend right] (A);
    
	\path (-0.2,0) edge [->,>=latex] (A);    
    
    \end{tikzpicture}
    \caption{A simple example of one-player-game: all vertices belong to $\playerOne$.} 
\label{fig:Firstexample}
\end{figure}
\noindent
In the game of Figure~\ref{fig:Firstexample}, there is a unique play from the initial vertex $v_0$ equal to $\pi = (v_0v_1v_2v_3)^{\omega}$. On the one hand, we have that $\pi \in \DirPR(\lambda,p)$ for $\lambda = 3$. Indeed, the odd priority $3$ (resp. $1$) is followed by the even priority $2$ (resp. $0$) in exactly $\lambda - 1 = 2$ steps, whereas the even priority $2$ (resp. $0$) is ``followed" by itself in $0$ steps. Similarly, $\pi$ also belongs to the three variants $\PR(\lambda,p)$, $\DirBP(p)$ and $\BP(p)$. On the other hand, $\pi \not\in \DirFWP(3,p)$. Indeed, in the $3$-\window{0}, there is no $l \in \{0,1,2\}$ such that $p(v_l)$ is the \smallest\ priority in $\pifactor{0}{l}$ because $p(v_0)$ and $p(v_1)$ are odd, and $p(v_2)$ is even but $p(v_2) = 2 \not\stronger 1 = p(v_1)$. However, one can check that $\pi \in \DirFWP(4,p)$, and it also belongs to $\FWP(4,p)$, $\DirBWP(p)$ and $\BWP(p)$.\hfill$\triangleleft$
\end{example}

\paragraph{\bf Links with finitary and window objectives.} Before continuing, we clarify the relation between the objectives introduced in Definition~\ref{def:obj} and the objectives studied in~\cite{ChatterjeeHH09,Chatterjee0RR15,Weinert016}.

\begin{remark}\label{rem:finitary}
In \cite{ChatterjeeHH09}, Chatterjee et al.~study the \emph{bounded parity} (resp. \emph{finitary parity}) objective which asks for the existence of a bound $\lambda$ such that for all positions $j \geq 0$ (resp. $j \geq i$, for some $i$) with an odd priority $p(\pi[j])$, the \emph{minimum} number of steps $l$ to see a \smaller\ priority $p(\pi[j+l])$ does not exceed $\lambda$. These two objectives are respectively equal to the $\DirBP(p)$ and $\BP(p)$ objectives. Indeed asking for a minimum number of steps is not a restriction with respect to our definition, and if $p(\pi[j])$ is even then with $l=0$ we trivially have $p(\pi[j+l]) \stronger p(\pi[j])$. The $\PR(\lambda, p)$ objective can be seen as a particular case of the problem studied in~\cite{Weinert016}, namely the synthesis of so-called optimal strategies in \textit{parity games with costs}, hence $\PR(\lambda, p)$ games are in $\sf PSPACE$. Careful inspection of~\cite{Weinert016} reveals that the $\sf PSPACE$-membership also holds for the direct variant $\DirPR(\lambda, p)$, and that the $\sf PSPACE$-hardness proof can easily be adapted to our sub-cases ($\PR(\lambda, p)$ and $\DirPR(\lambda, p)$).
\end{remark}

\begin{remark}\label{rem:fixwp}
The $\sf WP$ objective and its variants are inspired by the \emph{window mean-payoff} objective and its variants introduced in \cite{Chatterjee0RR15}. In that paper, the classical mean-payoff objective, that requires the average weight to be non-negative in the long run (i.e., at the limit), is replaced by the need for a non-negative average weight inside every bounded window along the play. Here the classical parity objective $\Par(p)$, asking the smallest priority seen infinitely often to be even, is replaced by asking the smallest priority to be even in every bounded window.
\end{remark}

\subsection{Decomposition of plays for the window parity objective}
\label{subsec:decomposition}

We introduce additional terminology concerning the ${\sf WP}$ objective. Let $\pi = v_0v_1 \ldots$ be a play and consider an \textit{unbounded} \window{j}. Essentially, it represents the suffix $\pi[j,\infty]$. We say that this window is \textit{closed} in position $j+k$, for $k \geq 0$, if there exists $l \in \{0, \ldots, k\}$ such that $p(\pi[j+l])$ is the \smallest\ priority in $\pifactor{j}{j+l}$. For the \textit{smallest} such $l$, we say that the \window{j} \textit{closes} in $j+l$. Observe that if $p(\pi[j])$ is even, the \window{j} closes immediately.  On the opposite, the unbounded \window{j} is still \textit{open} in position $j+k$ if no such $l$ exists. Observe that a closed window stays closed forever, and a window can stay open forever if no \smallest\ priority is ever found in prefixes of $\pi[j,\infty]$.

Now, for the objectives presented in Definition~\ref{def:obj}, we are especially interested in windows of bounded size. Let $\lambda \in \mathbb{N}_0$. We say that the \window{j} is \emph{\good} if it closes in a position $j+l$ such that $l < \lambda$, that is, if it closes in at most $(\lambda-1)$ steps.\footnote{This is equivalent to saying that the $\lambda$-\window{j} is closed.} If this is the case, then we also say that the history $\pifactor{j}{j+l}$ is $\lambda$-good.
On the contrary, we say that the \window{j} is \emph{$\lambda$-\bad} if it is still open in position $j + (\lambda - 1)$.\footnote{This is equivalent to saying that the $\lambda$-\window{j} is open.}
\begin{figure}[thb]
\centering
\begin{tikzpicture}[scale=0.8]

    \draw (-1.6,0) node [circle] (A) {$\ldots$};
    \draw (0,0) node [circle, draw,inner sep=4pt,minimum width=13mm] (B) {$\pi[j]$};
    \draw (2.4,0) node [circle, draw,,inner sep=1pt] (C) {};
    \draw (4,0) node [circle] (D) {$\ldots$};
    \draw (5.6,0) node [circle, draw,inner sep=1pt] (E) {};
    \draw (8,0) node [circle, draw,inner sep=2pt,minimum width=13mm] (F) {$\pi[j+l]$};
    \draw (10.4,0) node [circle] (G) {$\ldots$};
    
    \draw (A) -- (B);
	\draw (B) -- (C);
	\draw (C) -- (D);
	\draw (D) -- (E);
	\draw (E) -- (F);
	\draw (F) -- (G);
    
    \draw (B) to[bend left] node[above,pos =0.1,xshift=-3mm] {\good} (F);
    \draw (C) to[bend left] node[above,pos =0.2,xshift=-3mm] {\good} (F);
    \draw (E) to[bend left] node[above,pos =0.1,xshift=-3mm] {\good~~~} (F);

\end{tikzpicture}
\caption{History $\pifactor{j}{j+l}$ is \good\ implies that for all $j' \in \{j+1, \ldots, j+l\}$, $\pifactor{j'}{j+l}$ is also \good.}
\label{fig:iclosed}
\end{figure}

An interesting property is the following one: if $\pifactor{j}{j+l}$ is \good, then $\pifactor{j'}{j+l}$ is also \good\ for all $j' \in \{j+1, \ldots, j+l\}$ (see Figure~\ref{fig:iclosed}). In particular, each window at position $j' \in \{j+1, \ldots, j+l\}$ is closed in position $j+l$.

The next lemma will prove useful later on. Intuitively, it states that any play that is winning for a \textsf{WP} objective can be seen as a succession of $\lambda$-good histories which can serve as the basis for a corresponding decomposition.

\begin{lemma} \label{lem:goodDec}
A play $\pi$ belongs to $\FWP(\lambda, p)$ (resp. $\DirFWP(\lambda, p)$) if and only if there exists an increasing sequence of indices $(k_i)_{i \geq 0}$ (resp. with $k_0 = 0$) such that for each $i$, $\pifactor{k_i}{k_{i+1}-1}$ is \good.
\end{lemma}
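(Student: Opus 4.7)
\medskip

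\noindent\textbf{Proof plan.} My plan is to establish the $\DirFWP$ statement first and then bootstrap the $\FWP$ statement from it by shifting to an appropriate suffix. In both directions, the key technical ingredient is the \emph{inductive property of windows} illustrated in Figure~\ref{fig:iclosed}: if $\pifactor{j}{j+l}$ is \good, then every sub-history $\pifactor{j'}{j+l}$ with $j < j' \leq j+l$ is also \good.

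\medskip

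\noindent\emph{Direct case, $(\Rightarrow)$:} I construct the decomposition greedily. Set $k_0 = 0$. Given $k_i$, since $\pi \in \DirFWP(\lambda, p)$, the \window{k_i} is \good, so it closes at some position $k_i + l$ with $0 \leq l < \lambda$; define $k_{i+1} = k_i + l + 1$. By construction $\pifactor{k_i}{k_{i+1}-1}$ is exactly a \good\ history of length at most~$\lambda$, and $k_{i+1} > k_i$, so $(k_i)_{i \geq 0}$ is strictly increasing and starts at~$0$.

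\medskip

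\noindent\emph{Direct case, $(\Leftarrow)$:} Fix any $j \geq 0$. Since the sequence is strictly increasing and $k_0 = 0$, there is a unique $i$ with $k_i \leq j \leq k_{i+1}-1$. If $j = k_i$, the assumption already says $\pifactor{k_i}{k_{i+1}-1}$ is \good, so the \window{j} is \good. Otherwise $k_i < j \leq k_{i+1}-1$, and the \WindowProp\ applied to the \good\ history $\pifactor{k_i}{k_{i+1}-1}$ yields that $\pifactor{j}{k_{i+1}-1}$ is also \good. Since $k_{i+1}-1 - j < k_{i+1}-1-k_i < \lambda$, the \window{j} is \good. As $j$ was arbitrary, $\pi \in \DirFWP(\lambda, p)$.

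\medskip

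\noindent\emph{Undirect case:} For $(\Rightarrow)$, take $i$ such that $\pifactor{i}{\infty} \in \DirFWP(\lambda, p)$ and apply the direct construction to that suffix, yielding indices $(l_j)_{j \geq 0}$ with $l_0 = 0$; then set $k_j = i + l_j$. The family $(k_j)_{j \geq 0}$ is strictly increasing and each $\pifactor{k_j}{k_{j+1}-1}$ is \good. For $(\Leftarrow)$, given a strictly increasing $(k_i)_{i \geq 0}$ satisfying the hypothesis, shift: let $\pi' = \pifactor{k_0}{\infty}$ and $k'_i = k_i - k_0$, so $k'_0 = 0$ and $\pi'[k'_i, k'_{i+1}-1]$ is \good\ for every $i$. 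Apply the direct case to conclude $\pi' \in \DirFWP(\lambda, p)$, hence $\pi \in \FWP(\lambda, p)$.

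\medskip

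\noindent The only real subtlety is the bookkeeping with indices: I need $k_{i+1}$ to be one past the closing position (not the closing position itself) and to verify that the strict inequalities coming from $l < \lambda$ yield the \good ness of all intermediate windows. The \WindowProp\ is doing all the heavy lifting in the $(\Leftarrow)$ direction; everything else is routine.
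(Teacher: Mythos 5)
Your proof is correct and follows essentially the same route as the paper's: the forward direction builds the decomposition greedily by repeatedly closing the window at the current index, and the backward direction invokes the inductive property of windows (Figure~\ref{fig:iclosed}) to cover all intermediate positions. The only difference is that you spell out the undirect case via a suffix shift, which the paper omits as ``similar to the direct variant.''
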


\noindent
When such a sequence $(k_i)_{i \geq 0}$ with $k_0 = 0$ exists for a play $\pi$, we say that it is a \emph{\GoodDec{\lambda}} of $\pi$. When $k_0 \geq 0$, it is called an \emph{\EGoodDec{\lambda}}.

\begin{proof}[of Lemma~\ref{lem:goodDec}]
We only give the proof for the direct variant. Suppose that $\pi$ belongs to $\DirFWP(\lambda, p)$. Then by definition of $\DirFWP(\lambda, p)$, the \window{k_0 = 0} is \closed\ in at most $(\lambda-1)$ steps, i.e.,  there exists $k_1> k_0$ such that $\pifactor{k_0}{k_1-1}$ is \good. Next, as the \window{k_1} is \closed\ in at most $(\lambda-1)$ steps, there exists $k_2 > k_1$ such that $\pifactor{k_1}{k_2-1}$ is \good\ also. This leads to a \GoodDec{\lambda} $(k_i)_{i \geq 0}$ of $\pi$. 

Conversely, if there exists a \GoodDec{\lambda} $(k_i)_{i \geq 0}$ of $\pi$, then for all $i \geq 0$, $\pifactor{k_i}{k_{i+1}-1}$ is \good, as well as $\pifactor{k'}{k_{i+1}-1}$ for all $k' \in \{k_i+1,\ldots,k_{i+1}-1\}$. It follows that $\pi$ belongs to $\DirFWP(\lambda, p)$.
\qed\end{proof}

\subsection{Relationship between objectives}
\label{subsec:relationship}

We now detail the inclusions and equalities between the various objectives introduced in Definition~\ref{def:obj} as well as with the parity objective.

\begin{proposition} \label{prop:inclusions}
Let $G = (V_1, V_2, E)$ be a game structure and $p$ be a priority function. Let $\lambda \in \Nzero$.
\begin{enumerate}
\item For all ${\sf X} \in \{\PR(\lambda, p), \FWP(\lambda, p), \BP(p), \BWP(p)\}$, ${\sf DirX}$ $\subseteq {\sf X}$. \label{item:direct}
\item For all ${\sf X} \in \{\sf{PR},\sf{WP}\}$, $({\sf Dir}){\sf FixX}(\lambda, p)$ $\subseteq ({\sf Dir}){\sf BndX}(p)$. \label{item:bounded}
\item For all $\lambda' > \lambda$, for all ${\sf X} \in \{\PR,\FWP\}$, $({\sf Dir}){\sf X}(\lambda,p) \subseteq ({\sf Dir}){\sf X}(\lambda',p)$.\label{item:monotone}
\item ${\sf(Dir)}\FWP(\lambda,p) \subseteq {\sf(Dir)}\PR(\lambda,p)$. \label{item:WPcPR}
\item $({\sf Dir})\PR(\lambda,p) \subseteq ({\sf Dir})\FWP(\frac{d}{2} \cdot \lambda,p)$. \label{item:encadrement}
\item $({\sf Dir})\BP(p) = ({\sf Dir})\BWP(p)$.\label{item:samebnd}
\item ${\sf(Dir)}\BWP(p)\subseteq \Par(p)$. \label{item:BWPcPar}
\end{enumerate}
\end{proposition}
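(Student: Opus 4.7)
The plan is to dispatch items (1)--(4) as short consequences of the definitions, invest the real work in item (5), and then derive (6) and (7) from what precedes, using Lemma~\ref{lem:goodDec} as the key tool for (7).

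Items (1)--(4) amount to unpacking definitions. For (1), a direct play is a special case of its undirect variant via $i = 0$. For (2), the given $\lambda$ itself witnesses the existential quantifier in the bounded version. For (3), the range $\{0, \ldots, \lambda' - 1\}$ contains $\{0, \ldots, \lambda - 1\}$, so any witness $l$ for the fixed variant at $\lambda$ still works at $\lambda'$. For (4), if some $l$ witnesses the closure of the window at $j$ for $\DirFWP(\lambda, p)$, then specializing the \smallest\ condition to $k = 0$ yields $p(\pi[j+l]) \stronger p(\pi[j])$, which is exactly what $\DirPR(\lambda, p)$ demands at~$j$.

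The main technical step is item (5). Fix $\pi \in \DirPR(\lambda, p)$ and a position~$j$; the goal is to exhibit $l < (d/2) \cdot \lambda$ such that $p(\pi[j+l])$ is the \smallest\ priority in $\pifactor{j}{j+l}$. If $p(\pi[j])$ is even, $l = 0$ already works. Otherwise I iteratively construct positions $j = j_0, j_1, \ldots$ and even witness priorities $c_1, c_2, \ldots$ as follows: from $j_{i-1}$, the $\DirPR(\lambda, p)$ property produces some $j_{i-1} + l_i$ with $l_i \leq \lambda - 1$ at which $c_i := p(\pi[j_{i-1} + l_i])$ is even and $c_i \leq p(\pi[j_{i-1}])$. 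If $c_i$ is the \smallest\ priority in $\pifactor{j}{j_{i-1} + l_i}$ the window at~$j$ closes there and I stop; otherwise I let $j_i \in [j, j_{i-1} + l_i]$ be a position carrying the strict minimum $m_i < c_i$, and iterate. The key counting is that $c_{i+1} \leq m_i < c_i$ with both $c_i, c_{i+1}$ even forces $c_{i+1} \leq c_i - 2$; combined with $c_1 \leq d - 2$ (since $p(\pi[j])$ is odd and $d$ is even), this yields $c_i \leq d - 2i$, so the loop must terminate with a closure by iteration $i \leq d/2$. Each step extends the reached position by at most $\lambda - 1$, so the closing index is at distance at most $(d/2)(\lambda - 1) < (d/2)\lambda$ from~$j$, as required. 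The undirect variant follows by applying the argument to the suffix that sits in $\DirPR(\lambda, p)$. The main obstacle is precisely getting this constant right: a naive count yields $(d/2 + 1)$ iterations, and it is the oddness of $p(\pi[j])$ --- the only case actually needing work --- that tightens $c_1 \leq d - 2$ and recovers the advertised $(d/2)\lambda$ bound.

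With (5) at hand, (6) follows by combining (4) and (5): any $\lambda$ witnessing ${\sf (Dir)}\BWP(p)$ also witnesses ${\sf (Dir)}\BP(p)$ via (4), and any $\lambda$ witnessing ${\sf (Dir)}\BP(p)$ yields $(d/2)\lambda$ witnessing ${\sf (Dir)}\BWP(p)$ via (5); both inclusions respect the direct/undirect distinction. For (7), I take $\pi \in {\sf (Dir)}\BWP(p)$, fix $\lambda$ as a witness, and apply Lemma~\ref{lem:goodDec} to obtain a (possibly eventual) \GoodDec{\lambda} $(k_i)_{i \geq 0}$ of~$\pi$. Suppose for contradiction that the $\stronger$-smallest priority seen infinitely often is some odd~$c^*$. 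Past some index, no priority strictly below $c^*$ appears while $c^*$ itself recurs, so I can pick some $k_i$ in the decomposition beyond that index with $p(\pi[k_i]) = c^*$. The \good\ history $\pifactor{k_i}{k_{i+1}-1}$ must close on an even priority $c$ with $c \stronger p(\pi[k_i]) = c^*$, hence $c$ even and $c \leq c^*$, which together with $c^*$ odd gives $c \leq c^* - 1$; but $c \geq c^*$ since no smaller priority appears past the threshold, a contradiction.
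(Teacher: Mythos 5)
Items 1 through 6 of your proposal are correct and essentially coincide with the paper's proof: items 1--4 are immediate from the definitions, item 6 is derived from items 4 and 5 exactly as in the paper, and your argument for item 5 is the paper's argument with slightly different bookkeeping (the paper tracks the strictly decreasing sequence of odd minima $p(\pi[j_1]) > p(\pi[j_2]) > \cdots$, which must hit $0$ after at most $\frac{d}{2}$ rounds, while you track the even witnesses $c_1 > c_2 > \cdots$ decreasing by at least $2$ each time; both yield a closing index within $\frac{d}{2}\cdot(\lambda-1) < \frac{d}{2}\cdot\lambda$ steps of $j$).

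Item 7 is where your proof has a genuine flaw. You route the argument through Lemma~\ref{lem:goodDec}, which is already a heavier path than the paper's, and the specific step ``I can pick some $k_i$ in the decomposition beyond that index with $p(\pi[k_i]) = c^*$'' is unjustified: the indices $(k_i)_{i \geq 0}$ are whatever the decomposition happens to be, and nothing forces any of them to land on an occurrence of $c^*$ --- those occurrences may all fall strictly inside the blocks $\pifactor{k_i}{k_{i+1}-1}$. Since the ensuing contradiction relies on $c \stronger p(\pi[k_i]) = c^*$, it does not follow as written. The repair is small: take any position $j$ past the threshold with $p(\pi[j]) = c^*$ and let $\pifactor{k_i}{k_{i+1}-1}$ be the block of the decomposition containing $j$; because this block is \good, its closing priority $c = p(\pi[k_{i+1}-1])$ is \smaller\ than every priority occurring in it, in particular $c \stronger c^*$, so $c$ is even and $c < c^*$; but the position $k_{i+1}-1 \geq j$ lies past the threshold, so $c \geq c^*$, a contradiction. (A terminological slip in the same sentence: a ``\smallest'' priority is even by the definition of $\stronger$, so what you mean is the minimal priority of $\Inf(\pi)$ for the usual order.) Note that the paper's own proof of item 7 avoids the decomposition entirely: it picks a late position $j$ with $p(\pi[j]) = c^*$ after which only vertices of $\Inf(\pi)$ occur, observes that no later priority is \smaller\ than $c^*$, hence the window at $j$ never closes, directly contradicting membership in $\DirBWP(p)$ for any bound; that direct argument is shorter and immune to the issue above.
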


Before proving those statements formally, we give an intuitive explanation of Item~\ref{item:encadrement}, as it is the most interesting one technically. Assume we have a play $\pi \in \DirPR(\lambda,p)$, like the one depicted in Figure~\ref{fig:fromPRtoWP}. Since it satisfies objective $\DirPR(\lambda,p)$, we know that each odd priority is followed by a smaller even priority in at most $(\lambda-1)$ steps. We argue that we can give a $\lambda'$-good decomposition of this play for some $\lambda' \leq \frac{d}{2}\cdot \lambda$, hence that it belongs to $\DirFWP(\frac{d}{2} \cdot \lambda,p)$. The key idea is depicted in Figure~\ref{fig:fromPRtoWP}. Let $c_1$ be an odd priority. It must be followed by a \smaller\ priority $c'_1$ in at most $(\lambda - 1)$ steps. If $c'_1$ is the minimal priority encountered from $c_1$ to $c'_1$, then we are done as the corresponding history is $\lambda$-good. Assume it is not, then there exists $c_2$ between $c_1$ and $c'_1$ such that $c_2$ is odd and $c'_1 \not\stronger c_2$. But again, $c_2$ must be followed by $c'_2 \stronger c_2$ in at most $(\lambda - 1)$ steps. Repeating this argument, we obtain that $c_1$ is followed by a priority $c'_k$ in strictly less than $\frac{d}{2}\cdot \lambda$ steps\footnote{Actually, $\frac{d}{2}\cdot(\lambda-1) + 1$ but we use the simpler bound $\frac{d}{2}\cdot\lambda$ from now on for the sake of readability.} (as there are $\frac{d}{2}$ odd priorities and each of them is answered in $(\lambda-1)$ steps) such that $c'_k$ is even and smaller than all priorities encountered from $c_1$ to $c'_k$. Therefore, the corresponding history is $\lambda'$-good for $\lambda' \leq \frac{d}{2}\cdot\lambda$. As this argument can be repeated from the vertex following priority $c'_k$, we obtain a $\lambda'$-good decomposition of the play, which implies that it satisfies $\DirFWP(\lambda',p)$ as claimed.

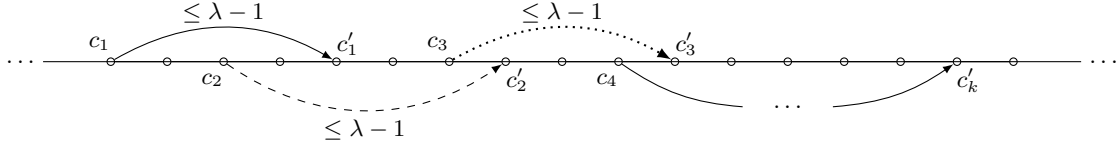
\begin{figure}[t]
\centering
\begin{tikzpicture}[scale=3]

	\draw (-0.4,0) node (R) {$\ldots$};
	\draw (4.4,0) node (R) {$\ldots$};
	\draw (0,0) node [circle, draw,inner sep=1pt] (A) {};
	\draw (-0.05,0.08) node [] (AA) {$c_1$};
	\draw (1.05,0.08) node [] (AAb) {$c'_1$};
	\draw (0.25,0) node [circle, draw,inner sep=1pt] (B) {};
	\draw (0.5,0) node [circle, draw,inner sep=1pt] (C) {};
	\draw (0.75,0) node [circle, draw,inner sep=1pt] (D) {};
	\draw (0.75,-0.15) node [] (DD) {};
	\draw (1,0) node [circle, draw,inner sep=1pt] (E) {};
	\draw (1.25,0) node [circle, draw,inner sep=1pt] (F) {};
	\draw (1.5,0) node [circle, draw,inner sep=1pt] (G) {};
	\draw (1.75,0) node [circle, draw,inner sep=1pt] (H) {};
	\draw (2,0) node [circle, draw,inner sep=1pt] (I) {};
	\draw (2,-0.15) node [] (II) {};
	\draw (2.25,0) node [circle, draw,inner sep=1pt] (J) {};
	\draw (2.5,0) node [circle, draw,inner sep=1pt] (K) {};
	\draw (2.5,-0.15) node [] (KK) {};
	\draw (2.75,0) node [circle, draw,inner sep=1pt] (L) {};
	\draw (3,0) node [circle, draw,inner sep=1pt] (M) {};
	\draw (3.25,0) node [circle, draw,inner sep=1pt] (N) {};
	\draw (3.5,0) node [circle, draw,inner sep=1pt] (O) {};
	\draw (3.75,0) node [circle, draw,inner sep=1pt] (P) {};
	\draw (4,0) node [circle, draw,inner sep=1pt] (Q) {};
	\draw (4,-0.15) node [] (QQ) {};
	\draw (3,-0.2) node (R) {$\ldots$};
	\draw (1.45,0.08) node [] (l) {$c_3$};
	\draw (2.55,0.08) node [] (l) {$c'_3$};
	\draw (0.45,-0.08) node [] (l) {$c_2$};
	\draw (1.80,-0.08) node [] (l) {$c'_2$};
	\draw (2.2,-0.08) node [] (l) {$c_4$};
	\draw (3.80,-0.08) node [] (l) {$c'_k$};
	
	\draw[->,>=latex] (A) to[bend left=30] (E);
	\draw[->,>=latex,dashed] (C) to[bend right=35] (H);
	\draw[->,>=latex,dotted,thick] (G) to[bend left=30] (K);
	\draw (J) to[bend right=15] (2.8,-0.2);
	\draw[->,>=latex] (3.2,-0.2) to[bend right=15] (P);
	\draw (0.5,0.22) node (l1) {$\leq \lambda - 1$};
	\draw (2,0.22) node (l1) {$\leq \lambda - 1$};
	\draw (1.125,-0.3) node (l1) {$\leq \lambda - 1$};

	\draw (-0.3,0) -- (4.3,0);
	\draw (A) -- (B);
	\draw (B) -- (C);
	\draw (C) -- (D);
	\draw (D) -- (E);
	\draw (E) -- (F);
	\draw (F) -- (G);
	\draw (G) -- (H);
	\draw (H) -- (I);
	\draw (I) -- (J);
	\draw (J) -- (K);
	\draw (K) -- (L);
	\draw (L) -- (M);
	\draw (M) -- (N);
	\draw (N) -- (O);
	\draw (O) -- (P);
	\draw (P) -- (Q);

\end{tikzpicture}
\caption{Illustration of inclusion $({\sf Dir})\PR(\lambda,p) \subseteq ({\sf Dir})\FWP(\frac{d}{2} \cdot \lambda,p)$. Priorities $c_i$ are all odd and decreasing ($c_{i+1} < c_i$), priorities $c'_i$ are even and such that $c'_i \stronger c_i$ but $c'_i \not\stronger c_{i+1}$. Eventually, a priority $c'_k$ is reached such that the history from $c_1$ to $c'_k$ is $\lambda'$-good for some $\lambda'$. Furthermore $k \leq \frac{d}{2}$, hence $\lambda' \leq \frac{d}{2}\cdot\lambda$.}
\label{fig:fromPRtoWP}
\end{figure}

\begin{proof} The proof is given for the direct variants only, as it is similar for the undirect variants. The first four items immediately follow from the definitions.

We now prove Item~\ref{item:encadrement}. Let $\pi \in \DirPR(\lambda,p)$ and let us show that $\pi \in \DirFWP(\lambda',p)$ with $\lambda' = \frac{d}{2} \cdot \lambda$. Given $j_1 \in \N$, we are looking for an integer $l' \in \{0,\ldots,\lambda'-1\}$ such that $p(\pi[j_1+l'])$ is the \smallest\ priority in $\pifactor{j_1}{j_1+l'}$. If $p(\pi[j_1])$ is even then take $l' = 0$. Otherwise, as $\pi \in \DirPR(\lambda,p)$, (*)  there exists $l_1 \in \{0,\ldots,\lambda-1\}$ (that we take minimal) such that $p(\pi[j_1+l_1]) \stronger p(\pi[j_1])$. If $p(\pi[j_1+l_1])$ is the \smallest\ priority in $\pifactor{j_1}{j_1+l_1}$, then take $l' = l_1$. Otherwise, the smallest priority in $\pifactor{j_1}{j_1+l_1}$ is odd (by minimality of $l_1$) and strictly smaller than $p(\pi[j_1])$. Let $j_2$ be a position in $\pifactor{j_1}{j_1+l_1}$ of this smallest priority and repeat (*) for $j_2$. There now exists a minimal $l_2 \in \{0,\ldots,\lambda-1\}$ such that $p(\pi[j_2+l_2]) \stronger p(\pi[j_2])$. Again, if $p(\pi[j_2+l_2])$ is the \smallest\ priority in $\pifactor{j_2}{j_2+l_2}$, then by definition of $j_2$, $p(\pi[j_2+l_2])$ is also the \smallest\ priority in $\pifactor{j_1}{j_2+l_2}$. Therefore we can take $l' = j_2+l_2-j_1$. Otherwise, the smallest priority $p(\pi[j_3])$ in $\pifactor{j_2}{j_2+l_2}$ is odd and strictly smaller than $p(\pi[j_2])$, and we now repeat (*) for $j_3$, also. As $p(\pi[j_1]), p(\pi[j_2]), p(\pi[j_3]) \ldots$ is a decreasing sequence of odd priorities with $j_2 - j_1 \leq \lambda-1$ and $j_3 - j_1 \leq 2(\lambda-1)$, we end the process with $p(\pi[j_\frac{d}{2} + l_\frac{d}{2}]) = 0$ in the worst case. It follows that $p(\pi[j_\frac{d}{2}+l_\frac{d}{2}])$ is the \smallest\ priority in $\pifactor{j_1}{j_\frac{d}{2} + l_\frac{d}{2}}$ and we take $l' = j_\frac{d}{2} + l_\frac{d}{2} - j_1$. Notice that $l' < \frac{d}{2} \cdot \lambda$.

Item~\ref{item:samebnd} is a direct consequence of Items~\ref{item:WPcPR} and~\ref{item:encadrement}.

In order to prove Item~\ref{item:BWPcPar}, let $\pi \in \DirBWP(p)$ and suppose that $\pi \not\in \Par(p)$. Thus among the vertices that are visited infinitely often, the least priority, call it $c$, is odd. Let $j \in \N$ be such that $p(\pi[j]) = c$ and every vertex in $\pifactor{j}{\infty}$ belongs to $\Inf(\pi)$. It follows that for all $l \geq 0$, $p(\pi[j+l]) \not\stronger p(\pi[j])$. This is in contradiction with $\pi$ belonging to $\DirBWP(p)$ as the window at position $j$ would never close, and therefore we conclude that $\pi \in \Par(p)$.
\qed
\end{proof}

From the inclusions $\Obj \subseteq \Obj'$ of Proposition~\ref{prop:inclusions}, we immediately derive the inclusions $\WinG{1}{\Obj}{G} \subseteq \WinG{1}{\Obj'}{G}$. It yields two interesting observations.
\begin{enumerate}
\item By Items~\ref{item:direct},~\ref{item:bounded},~\ref{item:samebnd}, and~\ref{item:BWPcPar}, we see that all \textsf{WP} and \textsf{PR} variants provide conservative approximations of the classical parity objective. While the bounded variants will all be shown to be in $\sf P$ (Theorem~\ref{thm:finitaryparity}), \textit{for the fixed variants, the \textsf{WP} objective will be the only polynomial-time alternative} (Theorem~\ref{prop:WPonedim}).
\item The \textit{fixed \textsf{WP} objective can actually be used to provide both an under-approximation and an over-approximation of the fixed \textsf{PR} one in polynomial-time} (Theorem~\ref{prop:WPonedim}), as witnessed by Items~\ref{item:WPcPR} and~\ref{item:encadrement}. This is particularly interesting since the fixed \textsf{PR} objective itself is $\sf PSPACE$-complete (Theorem~\ref{prop:PRonedim}).
\end{enumerate}
Notice that the inclusions of Proposition~\ref{prop:inclusions} are strict in general. This is also the case when one replaces the objectives by the winning sets of $\playerOne$ for these objectives. We illustrate this on the following example.

\begin{figure}[th]
\centering
  \begin{tikzpicture}[scale=5]
    \everymath{\footnotesize}
    \draw (0,0) node [circle, draw] (A) {$v_0$};
    \draw (0.45,0) node [rectangle, draw,inner sep=4pt] (B) {$v_1$};
    \draw (0.9,0) node [circle, draw] (C) {$v_2$};
    
    \draw (0,-.12) node[] (D) {$1$};
	\draw (0.45,-.12) node[] (E) {$2$};    
    \draw (0.9,-.12) node[] (F) {$0$};
    
    \draw[->,>=latex] (A) to (B);
    \draw[->,>=latex] (B) to (C);
    
    \draw[->,>=latex] (C) to[bend right] (A);
    \draw[->,>=latex] (B) .. controls +(315:0.3cm) and +(225:0.3cm) .. (B);
    
	\path (-0.2,0) edge [->,>=latex] (A);    
    
    \end{tikzpicture}
\caption{The initial vertex $v_0$ is winning for the parity objective but is losing for all variants of objectives \textsf{WP} and \textsf{PR}: $\playerTwo$ has the possibility to use the self-loop on $v_1$ to delay for an arbitrarily long time the response $0$ to priority $1$, and can do so repeatedly using the other loop, thus defeating both direct and undirect variants of the objectives.}
\label{fig:Secondexample}
\end{figure}
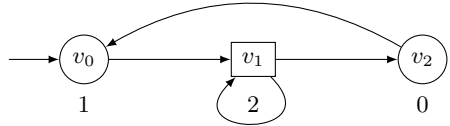

\begin{example} \label{ex:2}
Recall the game in Figure~\ref{fig:Firstexample}. We showed in Example~\ref{ex:1} that the only possible outcome, $\pi = (v_0v_1v_2v_3)^{\omega}$, belongs to $\DirPR(\lambda,p) \setminus \DirFWP(\lambda,p)$ for $\lambda = 3$. This shows that inclusion of Item~\ref{item:WPcPR} in Proposition~\ref{prop:inclusions} is strict. As this game is a one-player game with a unique play, it follows that $\WinG{1}{\DirFWP(3,p)}{G} \subsetneq \WinG{1}{\DirPR(3,p)}{G}$. 

Consider now the game depicted in Figure~\ref{fig:Secondexample}. First, we show that vertex $v_0$ is winning for $\playerOne$ for the parity objective. Indeed, either $\playerTwo$ eventually loops forever on vertex $v_1$, or he visits vertex $v_2$ infinitely often. Then, either priority~$2$ (of $v_1$) is the only one seen infinitely, or priority~$0$ (of $v_2$) is seen infinitely often, which shows that $v_0 \in \WinG{1}{\Par(p)}{G}$. However, $\playerOne$ loses from $v_0$ for the objective $\DirPR(\lambda,p)$ (for all $\lambda \in \Nzero$) since $\playerTwo$ can ensure that priority~$1$ is never followed by a \smaller\ priority by always looping on vertex $v_1$. Thus, $\WinG{1}{\DirPR(\lambda,p)}{G}\subsetneq \WinG{1}{\Par(p)}{G}$ for all $\lambda$, and thus also $\WinG{1}{\DirBP(p)}{G}\subsetneq \WinG{1}{\Par(p)}{G}$. The same strategy permits $\playerTwo$ to win the game when the objective is $\DirFWP(\lambda,p)$ for all $\lambda$, and thus also when the objective is $\DirBWP(p)$.

Let us go further with the game in Figure~\ref{fig:Secondexample} by considering the $\BWP(p)$ and $\BP(p)$ objectives, which are prefix-independent as the parity objective. We can see that $v_0$ is still losing for $\playerOne$ for both objectives. Indeed, consider the infinite-memory strategy $\sigma_2$ of $\playerTwo$ which consists in looping for longer and longer time periods in $v_1$ before going to $v_2$. Formally, let $n$ be a counter initialized to $1$, the strategy $\sigma_2$ loops $n$ times in $v_1$, increments $n$ and goes to $v_2$. Such a strategy increases the number of steps before priority~$1$ is followed by a \smaller\ priority each ``round''. Therefore, there exists no bound $\lambda \in \mathbb{N}_0$ for which the outcome belongs to $\PR(\lambda,p)$ and, similarly, there is no window size $\lambda \in \mathbb{N}_0$ for which the outcome belongs to $\FWP(\lambda,p)$. Essentially, $\playerTwo$ uses his infinite-memory to create ever-increasing delays. This shows that for all $\lambda \in \mathbb{N}_0$, $v_0 \not\in \WinG{1}{\PR(\lambda,p)}{G}$ and $v_0 \not\in \WinG{1}{\FWP(\lambda,p)}{G}$. Consequently, we also have that $v_0 \not\in \WinG{1}{\BP(p)}{G}$ and $v_0 \not\in \WinG{1}{\BWP(p)}{G}$.

Observe that $\sigma_2$ uses \textit{infinite memory}. Actually, $\playerTwo$ cannot win the \textit{undirect bounded} objectives from $v_0$ for any finite-memory strategy~$\sigma'_2$. Indeed, in this case, either the outcome $\pi$ eventually loops on $v_1$ forever, or each time it visits $v_1$, $\pi$ loops on $v_1$ at most $m$ times (for some $m \in \mathbb{N}$ depending on the finite memory of~$\sigma'_2$) and then goes to $v_2$. It follows that the delay is \textit{eventually} bounded in both cases and $\pi \in \PR(m+2,p) \subseteq \BP(p)$ and $\pi \in \FWP(m+2,p) \subseteq \BWP(p)$.\hfill$\triangleleft$
\end{example}

We close this section by establishing that for the sub-case of games with priorities in $\{0, 1, 2\}$, \textsf{WP} and \textsf{PR} objectives coincide.

\begin{lemma}\label{lem:sameobj}
Let $G$ be a game structure and $p \colon V \rightarrow \{0,1,2\}$ be a priority function. For all $\lambda \in \Nzero$,
\[
({\sf Dir})\PR(\lambda,p)  = ({\sf Dir})\FWP(\lambda,p).
\]
\end{lemma}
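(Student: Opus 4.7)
The plan is to observe that this equality is essentially already packaged inside Proposition~\ref{prop:inclusions}. One inclusion, $({\sf Dir})\FWP(\lambda,p) \subseteq ({\sf Dir})\PR(\lambda,p)$, is free: it is exactly Item~\ref{item:WPcPR} of that proposition and requires no further argument. The only thing to do is the reverse inclusion $({\sf Dir})\PR(\lambda,p) \subseteq ({\sf Dir})\FWP(\lambda,p)$, and here the point is that the restriction to priorities in $\{0,1,2\}$ means $d=2$, so $\frac{d}{2} = 1$.

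Concretely, I would first invoke Item~\ref{item:encadrement} of Proposition~\ref{prop:inclusions}, which gives $({\sf Dir})\PR(\lambda,p) \subseteq ({\sf Dir})\FWP(\frac{d}{2}\cdot\lambda,p)$. Substituting $d = 2$ yields exactly $({\sf Dir})\FWP(\lambda,p)$, closing the argument. Combined with the first inclusion, this gives the equality stated in the lemma, both for the direct and the undirect variants (since Proposition~\ref{prop:inclusions} covers both).

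If one wants to make the argument transparent rather than citing the general bound, I would alternatively sketch the direct reasoning: fix $\pi \in ({\sf Dir})\PR(\lambda,p)$ and a position $j$. If $p(\pi[j]) \in \{0,2\}$, it is already even, so $l=0$ trivially witnesses the \textsf{WP} condition at $j$. If $p(\pi[j]) = 1$, then by the \textsf{PR} condition there is some $l \in \{0,\ldots,\lambda-1\}$ with $p(\pi[j+l]) \stronger 1$, forcing $p(\pi[j+l]) = 0$. Since $0$ is the minimum possible priority, it is automatically \smaller\ than every priority in $\pifactor{j}{j+l}$, so the same $l$ also witnesses the \textsf{WP} condition at $j$. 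The only (minor) obstacle is keeping the direct and undirect variants aligned, but this is handled uniformly since the witness $l$ at each position is the same in both settings, and the quantification over $j \geq i$ in the undirect variant transfers without change.
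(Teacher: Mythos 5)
Your proposal is correct and matches the paper: the paper also gets $({\sf Dir})\FWP(\lambda,p) \subseteq ({\sf Dir})\PR(\lambda,p)$ for free from Proposition~\ref{prop:inclusions} and then proves the converse by exactly the elementary case analysis you sketch (even priority gives $l=0$; priority $1$ must be answered by $0$, which is automatically the \smallest\ priority in the window). Your primary route of instead instantiating Item~\ref{item:encadrement} with $d=2$, so that $\frac{d}{2}\cdot\lambda=\lambda$, is an equally valid shortcut that the paper simply does not bother to take.
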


\begin{proof}
Again, we only give the proof for the direct variant. By Proposition~\ref{prop:inclusions}, we already know that $\DirFWP(\lambda,p) \subseteq \DirPR(\lambda,p)$. To show the other inclusion, let $\pi$ be a play in $\DirPR(\lambda,p)$. Then, for all $j \geq 0$, there exists $l \in \{0,\ldots,\lambda-1\}$ such that $p(\pi[j+l]) \stronger p(\pi[j])$. Now, as $p \colon V \rightarrow \{0,1,2\}$, we have that either $p(\pi[j])$ is even (and $l=0$) or $p(\pi[j]) = 1$ and $p(\pi[j+l]) = 0$. In particular, $p(\pi[j+l])$ is the \smallest\ priority in $\pifactor{j}{j+l}$ showing that $\pi \in  \DirFWP(\lambda,p)$. 
\qed
\end{proof}

\section{One-dimension games} \label{sec:onedim}

We begin our study of \textsf{WP} and \textsf{PR} objectives with one-dimension games: in this setting, there is a unique priority function $p$ and the objective $\Obj$ is a single objective $({\sf Dir}){\sf FixX}$ or $({\sf Dir}){\sf BndX}$ for ${\sf X} \in \{{\sf PR}, {\sf WP}\}$. We first address the \textit{bounded} variants in Section~\ref{subsec:bounded}, then turn to the \textit{fixed} ones in Section~\ref{subsec:fixed}.

\subsection{Bounded variants} \label{subsec:bounded} 

Recall that we know by Proposition~\ref{prop:inclusions} that the bounded variants are equivalent. Furthermore, it is already known that games with objective $({\sf Dir})\BP(p)$ are solvable in polynomial time~\cite{ChatterjeeHH09}. The next theorem sums up the complexity landscape for bounded variants and enrich it by proving $\sf P$-hardness for the associated decision problems. In terms of memory requirements, $\playerOne$ can play without memory whereas Example~\ref{ex:2} already illustrated that $\playerTwo$ requires infinite memory in general. The linear memory bound for $\playerTwo$ and the direct variant was established in~\cite{DBLP:journals/corr/abs-1207-0663}.

\begin{theorem} \label{thm:finitaryparity} 
Let $G = (V_1,V_2,E)$ be a game structure, $v_0$ be an initial vertex, $p$ be a priority function, and $\Obj$ be the objective $\DirBP(p)$ or $\DirBWP(p)$ (resp. $\BP(p)$ or $\BWP(p)$). 
\begin{enumerate}
\item Deciding the winner in $(G,\Obj)$ from $v_0$ is $\sf P$-complete with an algorithm in $\mathcal{O}(|V| \cdot |E|)$ (resp.~$\mathcal{O}(|V|^2 \cdot |E|)$) time, memoryless strategies are sufficient for $\playerOne$, and linear-memory strategies are necessary and sufficient  for $\playerTwo$ (resp.~infinite memory is necessary for $\playerTwo$). \label{item:finitaryparityItem1}
\item $\forall\,\lambda \geq |V|,\; \forall\, \lambda' \geq \frac{d}{2}\cdot|V|$, the winning sets for the objectives $\BP(p)$, $\PR(\lambda,p)$, $\BWP(p)$, and $\FWP(\lambda',p)$
are all equal. \label{item:finitaryparityItem2}
\item The equalities given in Item~\ref{item:finitaryparityItem2} also hold for the direct variants ${\sf (Dir)}$. \label{item:finitaryparityItem3}
\end{enumerate}
\end{theorem}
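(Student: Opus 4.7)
The plan is to assemble the three items from a few ingredients: the existing algorithmic and strategic results on finitary parity games, the equivalences collected in Proposition~\ref{prop:inclusions}, and a uniform-bound argument exploiting the finiteness of the graph. By Item~\ref{item:samebnd} of Proposition~\ref{prop:inclusions} (winning sets for ${\sf (Dir)}\BP$ and ${\sf (Dir)}\BWP$ coincide), it suffices throughout to work on the parity-response side; the window-parity side then comes for free.

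For P-membership and the memory bounds in Item~\ref{item:finitaryparityItem1}, I would directly invoke the algorithms and memoryless $\playerOne$ strategies for finitary parity from~\cite{ChatterjeeHH09}, yielding the $\mathcal{O}(|V|^2\cdot|E|)$ bound for $\BP$ and, for the direct variant, an $\mathcal{O}(|V|\cdot|E|)$ reachability-style computation of the attractor of safe vertices. The infinite-memory lower bound for $\playerTwo$ in the undirect case is already witnessed by Example~\ref{ex:2}, and the linear memory bound for the direct case is cited from~\cite{DBLP:journals/corr/abs-1207-0663}. For P-hardness, I would reduce from reachability games (P-complete): given $(G',\Reach(U))$, add a self-loop at every $u\in U$, set $p(u)=0$, and set $p(v)=1$ elsewhere. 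A play is then in ${\sf (Dir)}\BP(p)$ iff it eventually reaches and stays in $U$, which corresponds exactly to $\playerOne$ winning the original reachability game.

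For Items~\ref{item:finitaryparityItem2} and~\ref{item:finitaryparityItem3}, one direction (inclusion of the fixed winning sets into the bounded ones) follows from Items~\ref{item:bounded},~\ref{item:WPcPR}, and~\ref{item:samebnd} of Proposition~\ref{prop:inclusions}. For the converse, I take a memoryless winning strategy $\sigma_1$ for $\playerOne$ in ${\sf (Dir)}\BP(p)$ from $v_0$ and consider the one-player graph $G[\sigma_1]$ obtained by freezing $\sigma_1$. From any vertex $v$ reachable from $v_0$ in $G[\sigma_1]$ with odd priority $c=p(v)$, every path of length at least $|V|$ in $G[\sigma_1]$ starting at $v$ must visit some priority $\stronger c$: otherwise, by pigeonhole, there is a simple cycle reachable from $v_0$ in $G[\sigma_1]$ that avoids all priorities $\stronger c$, and $\playerTwo$ can steer the play to that cycle and loop forever, contradicting $\sigma_1$ being winning for $\BP$. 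Hence $\sigma_1$ is also winning for $\PR(|V|,p)$ (resp.~$\DirPR(|V|,p)$), and by monotonicity (Item~\ref{item:monotone}) for every $\lambda\geq|V|$. The window-parity part of the statement then follows from Items~\ref{item:WPcPR} and~\ref{item:encadrement}: $\PR(|V|,p)\subseteq\FWP(\tfrac{d}{2}\cdot|V|,p)\subseteq\PR(\tfrac{d}{2}\cdot|V|,p)\subseteq\BP(p)$, so all four winning sets coincide for $\lambda'\geq\tfrac{d}{2}\cdot|V|$.

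The main obstacle is making the uniform-bound step fully rigorous in both variants: one must verify that any bad cycle extracted from a long path in $G[\sigma_1]$ is actually reachable along a $\sigma_1$-consistent prefix (so $\playerTwo$ can commit to it), and that the argument respects the prefix-sensitivity of the direct objective (which is straightforward, since under $\sigma_1$ the bound already applies from step $0$). Everything else is either a direct appeal to cited results or a routine combinatorial manipulation of the inclusions in Proposition~\ref{prop:inclusions}.
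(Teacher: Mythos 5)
Your treatment of Item~\ref{item:finitaryparityItem1} matches the paper's: $\sf P$-membership, the $\mathcal{O}(|V|\cdot|E|)$ and $\mathcal{O}(|V|^2\cdot|E|)$ algorithms and the memory bounds are imported from~\cite{ChatterjeeHH09,DBLP:journals/corr/abs-1207-0663} together with Item~\ref{item:samebnd} of Proposition~\ref{prop:inclusions}, and $\sf P$-hardness comes from the same reachability reduction with priorities in $\{0,1\}$. One caveat: you should make the target vertices \emph{absorbing} (as the paper does), not merely \emph{add} self-loops. If a $\playerTwo$-owned target keeps its other outgoing edges, $\playerTwo$ can leave $U$ immediately and loop forever among priority-$1$ vertices, so winning $\Reach(U)$ no longer implies winning ${\sf (Dir)}\BP(p)$ and the reduction breaks.

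The genuine gap is in your self-contained proof of Items~\ref{item:finitaryparityItem2}--\ref{item:finitaryparityItem3}, in the undirect case. The paper simply cites~\cite{ChatterjeeHH09} for the equality $\WinG{1}{{\sf (Dir)}\BP(p)}{G}=\WinG{1}{{\sf (Dir)}\PR(|V|,p)}{G}$; you instead try to derive it from a memoryless winning strategy $\sigma_1$ via a pigeonhole argument on $G[\sigma_1]$. For the \emph{direct} variant this works: a single unanswered request already falsifies $\DirBP(p)$, so a reachable cycle avoiding all priorities $\stronger c$ after a request $c$ yields a genuine contradiction. But $\BP(p)$ is prefix-independent, and your invariant (``from every reachable vertex with odd priority $c$, every path of length $\geq|V|$ meets a priority $\stronger c$'') is false for $\BP$-winning strategies. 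Take $p(v_0)=1$ with a single edge into a cycle all of whose vertices have priority $2$: the unique strategy wins $\BP(p)$ (the tail lies in $\DirPR(1,p)$), yet the request at $v_0$ is never answered and no path from $v_0$ ever meets a priority $\stronger 1$. In general, when the extracted cycle $C$ has an \emph{even} minimal priority (necessarily $>c$, since $C$ avoids even priorities $\leq c$), looping on $C$ forever produces a play whose suffix belongs to $\DirBP(p)$, so $\playerTwo$'s ``steer to $C$ and loop forever'' deviation does \emph{not} contradict $\sigma_1$ winning $\BP(p)$ --- and note that your argument, if sound, would conclude that $\sigma_1$ wins $\DirPR(|V|,p)$, which is false in the example above. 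To repair the undirect case you must show that only \emph{finitely many} requests along any consistent play escape the $|V|$-bound; the natural route is an unbounded-delay pumping deviation in the spirit of Example~\ref{ex:2} (loop on $C$ for $n$ rounds, return to the requesting vertex inside $\Inf(\pi)$, increase $n$ each round), which needs the recurrence structure of the play and not just reachability of one bad cycle. Alternatively, do as the paper does and take the bound $\lambda=|V|$ directly from~\cite{ChatterjeeHH09}; the remaining chain of inclusions through $\FWP(\tfrac{d}{2}\cdot|V|,p)$ via Items~\ref{item:monotone},~\ref{item:WPcPR},~\ref{item:encadrement} and~\ref{item:samebnd} of Proposition~\ref{prop:inclusions} is correct as you wrote it.
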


\begin{proof}
$(a)$ Let us prove Item~\ref{item:finitaryparityItem1}. The $\DirBP(p)$ (resp. $\BP(p)$) objective is studied in \cite{ChatterjeeHH09} and shown to be in $\sf P$ with an algorithm in $\mathcal{O}(|V| \cdot |E|)$ (resp.~$\mathcal{O}(|V|^2 \cdot |E|)$) time. Moreover it is shown that memoryless strategies are sufficient for $\playerOne$, and linear-memory strategies are necessary and sufficient for $\playerTwo$~\cite{DBLP:journals/corr/abs-1207-0663} (resp. infinite memory is necessary for $\playerTwo$). As $({\sf Dir})\BP(p) = ({\sf Dir})\BWP(p)$ by Item~\ref{item:samebnd} of Proposition~\ref{prop:inclusions}, we have the same complexity results and memory requirements for the $({\sf Dir})\BWP(p)$ objectives. 

To complete the proof of Item~\ref{item:finitaryparityItem1}, we have to prove that deciding the winner in $(G,({\sf Dir})\BWP(p))$ is $\sf P$-hard. We begin with the undirect variant. Let $G_r = (V_1,V_2,E_r)$ be a game structure, and consider the reachability objective $\Reach(U)$ with the target set $U \subseteq V$. From $G_r$, we build the game structure $G = (V_1,V_2,E)$ by $(i)$ making the target vertices absorbing with a self-loop and $(ii)$ defining the priority function $p \colon V \rightarrow \{0,1\}$ as follows: $p(v) = 0$ if $v\in U$, and $p(v) = 1$ otherwise. We claim that $\playerOne$ has a winning strategy in $G_r$ from an initial vertex $v_0$ for the $\Reach(U)$ objective if and only if he has a winning strategy for the $\BWP(p)$ objective in $G$ from $v_0$. Indeed, any outcome that never reaches the target set $U$ in $G_r$ is such that for all $j \in \N$ the \window{j} in $G$ stays open forever, i.e., it is $\lambda$-\bad\ for any size $\lambda \in \mathbb{N}_0$. Conversely, any outcome that reaches some $v \in U$ in $j$ steps in $G_r$ has a corresponding outcome in $G$ that reaches $v$ in $j$ steps and then loops on $v$ (recall that $v$ is absorbing), that is, from position $j$ all windows of size $\lambda = 1$ are closed. Thus $v_0$ is winning for the $\Reach(U)$ objective in $G_r$ if and only if $v_0$ is winning for the $\BWP(p)$ objective in $G$. This concludes the proof of $\sf P$-hardness in case of $\BWP(p)$ objective since deciding the winner in reachability games is $\sf P$-complete~\cite{Beeri80,Immerman81}.

The same reduction holds for the direct variant $\DirBWP(p)$. We already know that if $\playerOne$ has a winning strategy for $\BWP(p)$ in $G$, he has one for $\Reach(U)$ in $G_r$. Since $\DirBWP(p) \subseteq \BWP(p)$ by Proposition~\ref{prop:inclusions}, Item~\ref{item:direct}, extending this direction to $\DirBWP(p)$ is trivial. It remains to consider the converse, that is, if $\playerOne$ has a winning strategy for $\Reach(U)$ in $G_r$, he has one for $\DirBWP(p)$ in $G$. Using the arguments from above, we actually see that any winning play $\pi = v_0 v_1 \ldots{}$ in $G_r$ belongs to $\DirFWP(k+1,p)$ in $G$ where $k$ is the first index such that $v_k \in U$. Hence, all winning plays in $G_r$ belong to $\DirBWP(p)$ in $G$.

\medskip
$(b)$ Let us now proceed with the proof of Items~\ref{item:finitaryparityItem2} and~\ref{item:finitaryparityItem3}. Recall that $({\sf Dir})\BP(p) = {(\sf Dir)}\BWP(p)$ by Item~\ref{item:samebnd} of Proposition~\ref{prop:inclusions}. Let us consider the ${\sf PR}$ objectives. It is shown in~\cite{ChatterjeeHH09} that $\WinG{1}{{\sf (Dir)}\BP(p)}{G} = \WinG{1}{{\sf (Dir)}\PR(\lambda,p)}{G}$ with $\lambda = |V|$. By Items~\ref{item:bounded} and~\ref{item:monotone} of Proposition~\ref{prop:inclusions}, we also have equalities of those sets with $\WinG{1}{{\sf (Dir)}\PR(\lambda',p)}{G}$ for all $\lambda' \geq |V|$. We get the required equalities for $\sf WP$ objectives by Items~\ref{item:monotone}, \ref{item:WPcPR}, \ref{item:encadrement} and~\ref{item:samebnd} of Proposition~\ref{prop:inclusions}.  
\qed \end{proof}

\subsection{Fixed variants} \label{subsec:fixed}

The fixed variants are more interesting: the \textsf{PR} and \textsf{WP} approaches yield different results in this setting. We start with the \textsf{PR} one, for which we provide two polynomial-time algorithms for fixed-parameter sub-cases, hence significantly reducing the complexity of the problem (which is $\sf PSPACE$-complete in the general case).

\paragraph*{\bf Parity-response objectives.}

Deciding the winner in $({\sf Dir})\PR$ games was very recently proved to be $\sf PSPACE$-complete~\cite{Weinert016}. As mentioned in Remark~\ref{rem:finitary}, the proof was actually provided for a more general model, but already holds for both $\PR$ and $\DirPR$ games.

\begin{theorem}[\cite{Weinert016}] \label{prop:PRonedim}
Let $G = (V_1,V_2,E)$ be a game structure, $v_0$ be an initial vertex, $p$ be a priority function, and $\Obj$ be the objective $({\sf Dir})\PR(\lambda,p)$ for some $\lambda < \vert V\vert$. Deciding the winner in $(G,\Obj)$ from $v_0$ is $\sf PSPACE$-complete. Moreover, exponential memory is both necessary and sufficient for $\playerOne$, and exponential memory is sufficient for $\playerTwo$ while linear memory is necessary.
\end{theorem}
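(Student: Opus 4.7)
The theorem is obtained by specialising the results of~\cite{Weinert016} on parity games with costs, so my plan is to verify that this specialisation is faithful for both the direct and undirect variants, and then to deal with the memory bounds that are specific to the one-dimension fixed setting. For $\sf PSPACE$-membership, I observe that a game with objective $({\sf Dir})\PR(\lambda,p)$ is the special case of a parity game with costs in which every edge has unit cost, so that the cost of traversing a history coincides with its length. Under this encoding, the requirement ``every odd priority is followed within $\lambda-1$ steps by a $\stronger$-smaller priority'' is exactly the bounded-cost condition of~\cite{Weinert016}, and the $\sf PSPACE$ algorithm there applies unchanged; the direct variant $\DirPR(\lambda,p)$ is recovered by requiring the condition to hold at position $0$ as well, which also fits the framework of~\cite{Weinert016}. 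For $\sf PSPACE$-hardness, I would reread the reduction of~\cite{Weinert016} (from a canonical $\sf PSPACE$-complete problem such as QBF) and check that it can be carried out with unit costs everywhere: whenever the original reduction uses an edge of cost $c>1$, subdivide it into $c$ unit-cost edges, assigning to the new intermediate vertices a harmless even priority (e.g.~the maximum even priority used in the construction) so that their presence does not alter which windows close nor which priorities answer others. The output of this adjusted reduction is a one-dimension $({\sf Dir})\PR(\lambda,p)$ game whose winner coincides with the answer of the original QBF instance.

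For the memory bounds of $\playerOne$, the exponential upper bound follows from the $\sf PSPACE$ algorithm above, which can be made to produce a positional strategy in the product of the arena with a finite summary of ``currently pending obligations'' — essentially, for each odd priority seen, the residual number of steps before a $\stronger$-smaller priority must be witnessed. Since the bound $\lambda$ may be exponential in the binary encoding of the input and several obligations can be pending simultaneously, this summary is of exponential size. For the exponential lower bound, I would design a family of games parameterised by $\lambda$ in which $\playerTwo$ can force $\playerOne$ to commit to answering several nested odd-priority requests on separate countdowns, so that distinguishing any two configurations of remaining countdowns is necessary to avoid losing; a standard pigeonhole argument then yields a $2^{\Omega(n)}$ lower bound on the size of any winning Moore machine.

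For $\playerTwo$, the exponential upper bound is again a byproduct of the $\sf PSPACE$ procedure: $\playerTwo$'s optimal strategies can be synthesised from the same finite summary of pending obligations (now used to postpone answers as long as allowed). The linear lower bound is established by a small one-dimension arena in which $\playerTwo$ must count how many steps remain in the current window before committing to a vertex that causes a $\lambda$-bad window: any strategy with strictly less than linear memory will either commit too early (and lose because the window still closes) or too late (and allow $\playerOne$ to force closure). The main technical obstacle is the exponential lower bound for $\playerOne$, because one must prevent any smart bookkeeping that would summarise several obligations into a single counter; the nesting construction sketched above, combined with a careful choice of priorities so that pending obligations genuinely interfere with one another, is what makes the lower bound go through.
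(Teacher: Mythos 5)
Your proposal takes essentially the same route as the paper: the theorem is imported wholesale from~\cite{Weinert016}, and the paper's only ``proof'' is the observation (Remark~\ref{rem:finitary}) that $({\sf Dir})\PR(\lambda,p)$ is the unit-cost special case of parity games with costs, that careful inspection of~\cite{Weinert016} shows the $\sf PSPACE$ upper bound holds for the direct variant too, and that the hardness reduction adapts to both sub-cases --- exactly the specialisation you describe. Your additional sketches of the memory lower bounds are not needed (they are also established in~\cite{Weinert016}) and are only plans rather than constructions, but the core argument --- reduce to/from the unit-cost instance of the cost-parity framework --- is correct and matches the paper.
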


Observe that the $\sf PSPACE$-hardness only holds for time bounds $\lambda < \vert V\vert$ since we know by Theorem~\ref{thm:finitaryparity}, Items~\ref{item:finitaryparityItem2} and~\ref{item:finitaryparityItem3}, that for larger values, the objectives are equivalent to the bounded variants, hence the corresponding decision problems lie in $\sf P$.
In addition, we focus on the case $\lambda < |V|$: we show in the next theorem that when we fix either the largest priority~$d$ or the bound $\lambda$, the complexity collapses to~$\sf P$. We briefly sketch the two algorithms here (we illustrate the memory bounds in the upcoming Example~\ref{ex:memory}).

First, consider the case where $d$ is fixed. We reduce the $\PR(\lambda, p)$ (resp.~$\DirPR(\lambda,p)$) game to a co-B\"uchi (resp.~safety) game on an extended graph where we keep track of additional information in the vertices. Namely, we keep a vector that represents, for each odd priority $c$, the number of steps since seeing~$c$ without seeing any \smaller\ priority in the meantime. When this number reaches $\lambda$ for any odd priority, we visit a special ``bad vertex'' and then reset the counters in the vector and resume the game. Essentially, winning for $\PR(\lambda, p)$ (resp.~$\DirPR(\lambda,p)$) boils down to eventually (resp.~completely) avoiding those bad vertices, hence to a co-B\"uchi (resp.~safety) game. This extended game has size $\mathcal{O}(|V|\cdot \lambda^{\frac{d}{2}})$ and can be solved in polynomial time since $\lambda < |V|$ (otherwise we use the algorithm for the \textit{bounded} variants presented in Theorem~\ref{thm:finitaryparity}) and $d$ is fixed.

Second, consider the case where $\lambda$ is fixed (and $< |V|$ for the same reason as before). We also reduce the $\PR(\lambda, p)$ (resp.~$\DirPR(\lambda,p)$) game to a co-B\"uchi (resp.~safety) game, but with a different extended graph. Specifically, we here keep track of the last $\lambda$ vertices seen in the original game, and we want to avoid vertices of the extended graph that correspond to histories where an odd priority $c$ is not followed by a priority $c' \stronger c$ within $(\lambda-1)$ steps. Again, this can be expressed as either a co-B\"uchi or a safety objective depending on whether we are interested in the undirect or the direct variant respectively. The extended game has size $\mathcal{O}(|V|^\lambda)$ hence can be solved in polynomial time since $\lambda$ is fixed.

\begin{theorem}\label{thm:fixed}
Let $G = (V_1,V_2,E)$ be a game structure, $v_0$ be an initial vertex, $p \colon V \rightarrow \{0,\ldots,d\}$ be a priority function, and $\Obj$ be the objective $\DirPR(\lambda,p)$ (resp. $\PR(\lambda,p)$) for some $\lambda \in \Nzero$. If either $d$ is fixed or $\lambda$ is fixed, deciding the winner in $(G,\Obj)$ from $v_0$ is in $\sf P$. More precisely,
\begin{enumerate}
\item If $\lambda \geq |V|$, deciding the winner can be done in $\mathcal{O}(|V| \cdot |E|)$ (resp. $\mathcal{O}(|V|^2 \cdot |E|)$) time, memoryless strategies are sufficient for $\playerOne$, and linear-memory strategies are both necessary and sufficient for $\playerTwo$ (resp. infinite memory is necessary for $\playerTwo$). \label{item:fixed1}
\item If $\lambda < |V|$ and $d$ is fixed, deciding the winner can be done in $\mathcal{O}((|V| + |E|) \cdot \lambda^\frac{d}{2})$ (resp. $\mathcal{O}(|V|^2 \cdot \lambda^d)$) time, polynomial-memory strategies with $\mathcal{O}(\lambda^\frac{d}{2})$ memory are sufficient for both players, and memory is necessary even in one-player games. \label{item:fixed2}
\item If $\lambda < |V|$ and $\lambda$ is fixed, deciding the winner can be done in $\mathcal{O}((|V|+|E|)\cdot |V|^{\lambda-1})$ (resp. $\mathcal{O}(|V|^{2\lambda})$) time, polynomial-memory strategies with $\mathcal{O}(|V|^{\lambda-1})$ memory are sufficient for both players, and memory is necessary even in one-player games. \label{item:fixed3}
\end{enumerate}
\end{theorem}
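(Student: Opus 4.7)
\medskip
\noindent\textbf{Proof proposal.}
Item~\ref{item:fixed1} is immediate: whenever $\lambda \geq |V|$, Theorem~\ref{thm:finitaryparity} (Items~\ref{item:finitaryparityItem2} and~\ref{item:finitaryparityItem3}) tells us that $\PR(\lambda,p)$ and $\DirPR(\lambda,p)$ coincide with $\BP(p)$ and $\DirBP(p)$ respectively, and all the bounds on complexity and memory requirements for the bounded variants transfer directly. So I only need to address Items~\ref{item:fixed2} and~\ref{item:fixed3}, and the common strategy will be to build an extended game on which the objective reduces to a safety objective (for the direct variant) or a co-B\"uchi one (for the undirect variant), both of which can be solved in linear time with memoryless strategies (see Section~\ref{sec:preliminaries}).

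For Item~\ref{item:fixed2}, the plan is to augment each vertex of $G$ with a vector of $d/2$ counters, one per odd priority $c \in \{1,3,\ldots,d-1\}$, each counter ranging over $\{0,1,\ldots,\lambda\}$. The intended invariant is that the counter for odd priority $c$ records the number of steps elapsed since the last visit to a vertex of priority $c$ that has not yet been followed by a \smaller\ priority, and takes a distinguished value (say $0$) when no such visit is currently pending. The update rule on reading a new vertex of priority $c'$ is: counters for odd priorities $c$ with $c' \stronger c$ reset to the inactive value; counters for odd priorities $c$ with $c' \not\stronger c$ that were active increment by~$1$; and, if $c'$ itself is odd and its counter was inactive, it is activated at value~$1$. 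Whenever some counter reaches $\lambda$ we route the play through a distinguished ``bad'' vertex and then reset all counters. Player~$\playerOne$ wins $\DirPR(\lambda,p)$ (resp.~$\PR(\lambda,p)$) iff he wins the safety (resp.~co-B\"uchi) game on this extended graph where bad vertices are forbidden (resp.~must be visited only finitely often); the equivalence follows straightforwardly from the definitions. The extended arena has $\mathcal{O}(|V|\cdot \lambda^{d/2})$ vertices and $\mathcal{O}(|E|\cdot \lambda^{d/2})$ edges, yielding the announced running times, and the memoryless winning strategies of the reduced game translate into strategies of size $\mathcal{O}(\lambda^{d/2})$ on the original game.

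For Item~\ref{item:fixed3}, the extended graph simply memorizes the sequence of the last $\lambda-1$ vertices visited, i.e., states are elements of $V^{\lambda-1} \times V$ reachable by following edges of $G$; an extended state $(\rho, v)$ is declared bad if the window recorded by $\rho v$ is a bad window, namely the priority at its first position is odd and no \smaller\ priority appears inside. Again, $\DirPR(\lambda,p)$ (resp.~$\PR(\lambda,p)$) reduces to safety (resp.~co-B\"uchi) on this arena. The size is $\mathcal{O}(|V|^\lambda)$, giving the announced running times and a memory of $\mathcal{O}(|V|^{\lambda-1})$.

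Finally, for the memory lower bounds ``memory is necessary even in one-player games'', I would exhibit small one-player gadgets tailored to each sub-case, in the spirit of Example~\ref{ex:2} and of the upcoming Example~\ref{ex:necmemory} referred to in the introduction: a cycle carrying several distinct odd priorities forces $\playerOne$ to remember which odd priorities are still pending and how long ago each was seen, which cannot be done memorylessly once $d$ or $\lambda$ grows. I expect the main obstacles to be: (i) verifying the correctness of the counter-update rule in Item~\ref{item:fixed2}, in particular that the safety/co-B\"uchi winning conditions on the extension faithfully capture $\DirPR(\lambda,p)$ and $\PR(\lambda,p)$ (both directions of the implication must be checked carefully when resetting counters after a bad event in the undirect case); and (ii) pinpointing families of one-player examples whose minimal memory matches $\lambda^{d/2}$ and $|V|^{\lambda-1}$ respectively, so as to show that the upper bounds on strategy size are not just artifacts of the reduction.
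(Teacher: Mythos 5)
Your proposal is correct and follows essentially the same route as the paper: Item~\ref{item:fixed1} via Theorem~\ref{thm:finitaryparity}, and Items~\ref{item:fixed2} and~\ref{item:fixed3} via reductions to safety (direct) and co-B\"uchi (undirect) games on extended arenas that track, respectively, a step counter per odd priority (the paper uses a $\bot$ symbol for your ``inactive'' value and a per-vertex bad state $\beta_v$ from which counters are reset, but this is cosmetically equivalent) and the last $\lambda$ vertices seen. One small remark on your anticipated obstacle~(ii): the theorem only claims that \emph{some} memory is necessary, so the paper settles this with two small fixed one-player examples (Example~\ref{ex:memory}, one for $\playerOne$ and one for $\playerTwo$, each forcing alternation between two cycles) rather than families matching the $\lambda^{d/2}$ or $|V|^{\lambda-1}$ upper bounds --- you are aiming at a stronger lower bound than is actually asserted.
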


\begin{proof}
$(a)$ The first item directly follows from Theorem~\ref{thm:finitaryparity}.

\medskip
$(b)$ Let us prove Item~\ref{item:fixed2}. Let $\lambda < |V|$. We first consider the undirect variant. From $G$, we construct a game~$G'$ that keeps track in its vertices of the current vertex $v$ of $G$ and whether each seen odd priority $c$ has been followed by a \smaller\ priority within at most $\lambda-1$ steps. To this end, to each odd priority $c$ is associated a counter $l_c \in \{\bot,0,1,\ldots,\lambda - 2\}$ such that during the $\lambda - 1$ last steps $(i)$ either $l_c = \bot$ when $c$ has not been seen, or $c$ has been seen and followed by a \smaller\ priority, $(ii)$ or $l_c$ is the number of steps from $c$ to the current vertex $v$. If $l_c = \bot$ and a new occurrence of $c$ is seen, $l_c$ is initialized to $0$. If $l_c \neq \bot$, if a \smaller\ priority is seen in at most $\lambda-1$ steps, we reset $l_c$ to $\bot$, otherwise we move to a special vertex. Formally, we define $V' = V \times \{\bot,0,1,\ldots,\lambda - 2\}^{\frac{d}{2}} \cup \beta_V$, where the additional set $\beta_V =  \{\beta_v \mid v \in V \}$ is composed of the special vertices. 

Let $u = (v,\bar l) \in V' \setminus \beta_V$ be such that $v \in V$, $\bar l$ is a vector of counters $l_c$, one for each odd priority $c$. Given $(v,v') \in E$, we construct the edge $(u,u') \in E'$ such that 
\[
u' = \begin{cases}
\beta_{v'} & \mbox{if } \exists c,~ l_c = \lambda - 2 \mbox{ and } p(v') \not\stronger c,\\
(v',\bar l') & \mbox{otherwise},
\end{cases}
\]
where in the second case
\[
l'_c = \begin{cases}
l_c + 1 & \mbox{if } l_c \neq \bot \mbox{ and } p(v') \not\stronger c,\\
\bot      & \mbox{if } l_c \neq \bot \mbox{ and } p(v') \stronger c,\\
0          & \mbox{if } l_c = \bot \mbox{ and } p(v') = c,\\
\bot       & \mbox{otherwise}.
\end{cases}
\]
Notice that in the previous cases, if $l_c \neq \bot$ and $p(v') = c$, we do not define $l'_c = 0$. Indeed, if a new occurrence of $c$ is detected ($p(v') = c$), we have to check that the previous occurrence of $c$ ($l_c \neq \bot$) is followed by a \smaller\ priority $c'$ within at most $\lambda-1$ steps. If this happens, the last occurrence will automatically be followed by the same \smaller\ priority $c'$ and it is not necessary to keep track of the second occurence specifically.  

For all $v \in V$, we also add the edge $(\beta_v,(v,\bar l))$ to $E'$ such that $l_c = \bot$ for all odd priorities $c$ except if $p(v)$ is odd in which case $l_{p(v)} = 0$. In this way, we allow to delay the check performed on each odd priority. If $v_0$ is an initial vertex of $G$, then the corresponding initial vertex in $G'$ is $u_0 = (v_0,\bar l)$ such that $\bar l$ is defined as done previously. Finally we define the objective $\Obj' = \CoBuchi(U')$ with $U' = V' \setminus \beta_V$. One can check that $\playerOne$ has a winning strategy from $v_0$ in $(G,\Obj)$ if and only if $\playerOne$ has a winning strategy from $u_0$ in $(G',\Obj')$. As $|V'| = \mathcal{O}(|V| \cdot \lambda^{\frac{d}{2}})$ and $|E'| = \mathcal{O}(|E| \cdot \lambda^{\frac{d}{2}})$, the size of the game $G'$ is polynomial in the size of the original game $G$ since $\lambda < |V|$ and $d$ is fixed. As deciding the winner in the co-B\"uchi game $(G',\Obj')$ can be solved in $\mathcal{O}(|V'|^2)$ time~\cite{ChatterjeeH14}, deciding the winner in the game $(G,\Obj)$ can be solved in $\mathcal{O}( (|V|^2 \cdot \lambda^d)$ time. Moreover, as memoryless strategies are sufficient for both players to win in $(G',\Obj')$, finite-memory strategies with $\mathcal{O}(\lambda^{\frac{d}{2}})$ memory are sufficient for both players to win in $(G,\Obj)$. We show in Example~\ref{ex:memory} that memory is necessary for both players.

We now turn to the direct variant. The construction of the game $G'$ is rather similar, except that a unique absorbing special vertex $\beta$ is sufficient, and $\Obj' = \Safe(U')$ with $U' = V' \setminus \{\beta\}$. Then, $\playerOne$ wins the game $(G,\DirFWP(\lambda,p))$ if and only if he wins the new game $(G',\Safe(U'))$. The corresponding algorithm runs in $\mathcal{O}(|V'| + |E'|) = \mathcal{O}((|V| + |E|) \cdot \lambda^\frac{d}{2})$ time. Moreover, the memoryless winning strategies in the safety game $(G',\Obj')$ lead to finite-memory strategies with $\mathcal{O}(\lambda^\frac{d}{2})$ memory in the original game. The need for memory is also presented in Example~\ref{ex:memory}.

\medskip
$(c)$ We now proceed to the proof of Item~\ref{item:fixed3}. For the undirect variant, we construct from $G$ a game $G'$ that keeps in its vertices the last $\lambda$ vertices (of $G$) seen including the current vertex $v$. Formally, we define $V' = V^{\lambda}$. For each $u = (w_1, \ldots, w_{\lambda-1},v) \in V'$, and $(v,v') \in E$, we construct the edge $(u,u') \in E'$ such that $u' = (w_2,\ldots,w_{\lambda-1}, v, v')$.
Given $v_0$ an initial vertex in $G$, we let $u_0 = (w, \ldots, w,v_0)$ be the corresponding initial vertex of $G'$ such that $w$ is a vertex of $V$ with the highest even priority ($w$ is chosen in a way to have no influence for the objectives considered here). We also define the objective $\Obj' = \CoBuchi(U')$ with $U' = \{ (w_1, \ldots,w_{\lambda}) \in V' \mid \exists\, l \in \{0,\ldots,\lambda-1\} \mbox{ such that } p(w_{l+1}) \stronger p(w_1)\}$. Clearly, $\playerOne$ has a winning strategy from $v_0$ in $(G,\Obj)$ if and only if $\playerOne$ has a winning strategy from $u_0$ in $(G',\Obj')$. Note that the size of the game $G'$ is polynomial in the size of the original game $G$, with $|V'| = \mathcal{O}(|V|^{\lambda})$ and $|E'| = \mathcal{O}(|E| \cdot |V|^{\lambda-1})$ since $\lambda$ is fixed. Therefore deciding the winner in $(G,\Obj)$ can be done in $\mathcal{O}(|V|^{2  \lambda})$ time and both players have finite-memory winning strategies with $\mathcal{O}(|V|^{\lambda-1})$ memory. We show in Example~\ref{ex:memory} that both players need memory to win.

For the direct variant, the game $G'$ is identical but with the safety objective $\Obj' = \Safe(U')$. We get an algorithm in $\mathcal{O}((|V|+|E|)\cdot |V|^{\lambda-1})$ time and winning strategies with $\mathcal{O}(|V|^{\lambda-1})$ memory for both players. Example~\ref{ex:memory} shows that memory is necessary for both players.
\qed\end{proof}

The next example shows that both players need memory in fixed \textsf{PR} games with fixed parameters.

\begin{example} \label{ex:memory} 
Consider the game depicted in Figure~\ref{fig:memoryj1} where all vertices belong to $\playerOne$. We claim that $\playerOne$ needs memory to win for $({\sf Dir})\PR(\lambda,p)$ with $\lambda = 4$. Indeed, if he plays memoryless by always going to~$v_1$ from $v_0$, then in the resulting outcome $\pi = (v_0v_1v_2)^\omega$ the odd priority $1$ of vertex $v_2$ is followed by no \smaller\ priority, showing that $\pi \notin \PR(4,p)$ (and thus $\pi \notin \DirPR(4,p)$). Now, if $\playerOne$ always goes to $v_3$ from $v_0$ then the priority $3$ of vertex $v_5$ is followed by the \smaller\ priority $0$ of vertex $v_4$ in $4$ steps ($\lambda' = 5$), showing that $\pi = (v_0v_3v_4v_5v_6)^{\omega} \notin \PR(4,p)$ (hence $\pi \notin \DirPR(4,p)$). Thus $\playerOne$ has no memoryless winning strategy for $({\sf Dir})\PR(4,p)$. However, if he alternates between the two cycles, thus producing the outcome $\pi = (v_0v_1v_2v_0v_3v_4v_5v_6)^{\omega}$, one can check that $\pi$ belongs to $\DirPR(4,p)$, and thus also to $\PR(4,p)$.

\begin{figure}[ht]
\begin{minipage}[T]{.45\linewidth}
\centering
  \begin{tikzpicture}[scale=5]
    \everymath{\footnotesize}
	\draw (0,0) node [circle, draw] (A) {$v_0$};
	\draw (-0.5,0.25) node [circle, draw] (B) {$v_1$};
	\draw (0,0.25) node [circle, draw] (C) {$v_2$};

	\draw (0.08,-0.08) node [] (AA) {$3$};
	\draw (-0.45,0.16) node [] (BB) {$2$};
	\draw (0.08,0.17) node [] (CC) {$1$};	
	
	\draw (0.5,0.25) node [circle, draw] (D) {$v_3$};
	\draw (0.5,0) node [circle, draw] (E) {$v_4$};
	\draw (0.5,-0.25) node [circle, draw] (F) {$v_5$};
	\draw (0,-0.25) node [circle, draw] (G) {$v_6$};
	
	\draw (0.58,0.17) node [] (DD) {$1$};
	\draw (0.58,-0.08) node [] (EE) {$0$};
	\draw (0.58,-0.33) node [] (FF) {$3$};
	\draw (0.08,-0.33) node [] (GG) {$3$};

    \draw[->,>=latex] (A) to (B);
    \draw[->,>=latex] (B) to (C);
    \draw[->,>=latex] (C) to (A);
    \draw[->,>=latex] (A) to (D);
    \draw[->,>=latex] (D) to (E);
    \draw[->,>=latex] (E) to (F);
    \draw[->,>=latex] (F) to (G);
    \draw[->,>=latex] (G) to (A);

	\path (-0.15,0) edge [->,>=latex] (A);
	    \end{tikzpicture}
\caption{$\playerOne$ needs to alternate between the two simple cycles to win for $({\sf Dir})\PR(4,p)$.}
\label{fig:memoryj1}
\end{minipage}\hfill
\begin{minipage}[T]{.45\linewidth}
\centering
  \begin{tikzpicture}[scale=5]
    \everymath{\footnotesize}
    \draw (0,0) node [rectangle, inner sep = 5pt, draw] (A) {$v_0$};
    \draw (0.5,0.25) node [rectangle, inner sep = 5pt, draw] (B) {$v_1$};
    \draw (0,0.25) node [rectangle, inner sep = 5pt, draw] (C) {$v_2$};
    \draw (0.5,-0.25) node [rectangle, inner sep = 5pt, draw] (D) {$v_3$};
    \draw (0,-0.25) node [rectangle, inner sep = 5pt, draw] (E) {$v_4$};
    
    \draw (0.1,-0.1) node [] (AA) {$1$};
    \draw (0.6,0.15) node [] (BB) {$0$};
    \draw (0.1,0.15) node [] (CC) {$1$};
    \draw (0.6,-0.35) node [] (DD) {$1$};
    \draw (0.1,-0.35) node [] (EE) {$0$};
    
    \draw[->,>=latex] (A) to (B);
    \draw[->,>=latex] (B) to (C);
    \draw[->,>=latex] (C) to (A);
    \draw[->,>=latex] (A) to (D);
    \draw[->,>=latex] (D) to (E);
    \draw[->,>=latex] (E) to (A);
  
	\path (-0.15,0) edge [->,>=latex] (A);

  \end{tikzpicture}
  \vspace{0mm}
\caption{$\playerTwo$ needs to alternate between the two simple cycles to win for $\overline{({\sf Dir})\PR(3,p)}$.}
\label{fig:memoryj2}
\end{minipage}
\end{figure}

Let us now focus on the game depicted in Figure~\ref{fig:memoryj2}, where all vertices belong to $\playerTwo$. We show that $\playerTwo$ needs memory to win both $\overline{({\sf Dir})\PR(\lambda,p)}$ objectives with $\lambda = 3$. If $\playerTwo$ plays memoryless by always going to $v_1$ from $v_0$, then the resulting play $(v_0v_1v_2)^{\omega}$ belongs to $\DirPR(3,p) \subseteq \PR(3,p) $, and similarly if he always goes to $v_3$ from $v_0$. He is thus losing for $\overline{({\sf Dir})\PR(3,p)}$. However, if $\playerTwo$ alternates between the two cycles to produce the outcome $\pi = (v_0v_1v_2v_0v_3v_4)^{\omega}$, we have $\pi \not\in \PR(3,p)$ (hence $\pi \notin \DirPR(3,p)$) since priority $1$ of vertex $v_2$ is followed by the \smaller\ priority $0$ of vertex $v_4$  in $3$ steps ($\lambda' = 4$). This way, $\playerTwo$ wins for $\overline{({\sf Dir})\PR(3,p)}$.\hfill$\triangleleft$

\end{example}

\paragraph*{\bf Window parity objectives.}

Whereas deciding the winner in $({\sf Dir})\PR$ games is {\sf PSPACE}-complete, we establish in the next theorem that it is {\sf P}-complete for $({\sf Dir})\FWP$ games. 
First observe that if $\lambda \geq \frac{d}{2} \cdot |V|$, the problem boils down to solving the bounded variant thanks to Theorem~\ref{thm:finitaryparity}. Hence, we focus on the case where $\lambda < \frac{d}{2} \cdot |V|$. 

Our algorithm is inspired by the approach developed for \textit{window mean-payoff games} in~\cite{Chatterjee0RR15}. It can be sketched as follows. As for the fixed-parameter algorithms for $({\sf Dir})\PR$ games presented in Theorem~\ref{thm:fixed}, we want to reduce the $\FWP$ and $\DirFWP$ games to co-B\"uchi and safety games respectively, where $\playerOne$ wants to avoid ``bad vertices'' representing a violation of the condition at stake. Here, such a violation represents a $\lambda$-\textit{bad window}, i.e., a window for which no even minimum priority is found before $\lambda$ steps (see the terminology in Section~\ref{subsec:decomposition}). Detecting such $\lambda$-bad windows can be achieved by considering an extended game structure where we encode additional information for the minimum priority of the current window and the number of steps in this window. A ``bad vertex'' is visited whenever we reach the end of a $\lambda$-window with an odd minimum priority. If an even minimum is found, a \good\ history is detected and the step counter is reset. The extended game has size $\mathcal{O}(|V|\cdot d \cdot \lambda)$, hence polynomial size since $\lambda < \frac{d}{2} \cdot |V|$. Therefore, we can solve it in polynomial time. This is in contrast to window mean-payoff games where the fixed variant requires \textit{pseudo}-polynomial time in general~\cite{Chatterjee0RR15}.

Upper bounds on the memory are obtained by construction of our reduction and we prove polynomial lower bounds in the upcoming Example~\ref{ex:necmemory}.

\begin{theorem}\label{prop:WPonedim}
Let $G = (V_1,V_2,E)$ be a game structure, $v_0$ be an initial vertex, $p$ be a priority function, and $\Obj$ be the objective $\DirFWP(\lambda,p)$ (resp. $\FWP(\lambda,p)$) for some $\lambda \in \Nzero$. Then deciding the winner in $(G,\Obj)$ from $v_0$ is {\sf P}-complete.
\begin{enumerate}
\item If $\lambda \geq \frac{d}{2} \cdot |V|$, deciding the winner can be done in $\mathcal{O}(|V| \cdot |E|)$ (resp. $\mathcal{O}(|V|^2 \cdot |E|)$) time, memoryless strategies are sufficient for $\playerOne$ and linear-memory strategies are necessary and sufficient for $\playerTwo$ (resp. infinite memory is necessary for $\playerTwo$).\label{item:WPonedim1}
\item If $\lambda < \frac{d}{2} \cdot |V|$, deciding the winner can be done in $\mathcal{O}((|V| + |E|) \cdot d \cdot \lambda)$ (resp. $\mathcal{O}(|V|^2 \cdot d^2 \cdot \lambda^2)$) time, and polynomial-memory strategies with $\mathcal{O}(d \cdot \lambda)$ memory are sufficient for both players. Moreover, polynomial memory is necessary for both players.
\end{enumerate}
\end{theorem}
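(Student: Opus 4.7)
The plan is to handle the two items separately. For Item~1, when $\lambda \geq \frac{d}{2}\cdot|V|$, I invoke Items~\ref{item:finitaryparityItem2} and~\ref{item:finitaryparityItem3} of Theorem~\ref{thm:finitaryparity}: the fixed WP winning sets coincide with the bounded ones, so the complexity bounds and the memory characterizations ($\playerOne$ memoryless, $\playerTwo$ linear or infinite depending on the variant) transfer verbatim from Item~\ref{item:finitaryparityItem1} of that theorem.

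The bulk of the work is Item~2. My plan is to reduce $\DirFWP(\lambda,p)$ (resp.\ $\FWP(\lambda,p)$) to a safety game (resp.\ co-B\"uchi game) on a polynomially larger extended game $G'$, guided by Lemma~\ref{lem:goodDec}: $\playerOne$ wins iff he can guarantee a \GoodDec{\lambda} (resp.\ an \EGoodDec{\lambda}), i.e., iff he can avoid $\lambda$-bad windows (resp.\ avoid them from some position on). By the inductive property of windows (Figure~\ref{fig:iclosed}), I only need to track a single \emph{current} window: once this window closes at some position $j+l$, all concurrently-open windows have closed by then, and the next window to monitor starts at position $j+l+1$.

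Concretely, extended vertices come in three flavors: tuples $(v, m, c)$ with $v \in V$, $m \in \{1,3,\ldots,d-1\}$ the smallest priority seen in the current open window (only odd values are retained, since an even $m$ triggers closure), and $c \in \{1,\ldots,\lambda-1\}$ the number of vertices already in this window; ``closed-at-$v$'' vertices signalling that the current window has just closed at $v$; and ``bad'' vertices $\beta_v$ reached when the window exceeds $\lambda-1$ steps without closing. Ownership is inherited from the underlying $v$. From $(v,m,c)$ along $(v,v') \in E$, compute $m' = \min(m, p(v'))$ and $c' = c+1$, and move to ``closed-at-$v'$'' if $m'$ is even, to $(v',m',c')$ if $m'$ is odd and $c' < \lambda$, and to $\beta_{v'}$ if $m'$ is odd and $c' = \lambda$. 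From a ``closed-at-$v$'' or $\beta_v$ vertex, each edge $(v,v') \in E$ launches a fresh window, leading to $(v', p(v'), 1)$ if $p(v')$ is odd and to ``closed-at-$v'$'' otherwise. Letting $U'$ be the set of non-bad extended vertices, the reduction asserts: $\playerOne$ wins $\DirFWP(\lambda,p)$ (resp.\ $\FWP(\lambda,p)$) from $v_0$ iff he wins $\Safe(U')$ (resp.\ $\CoBuchi(U')$) on $G'$ from the corresponding initial extended vertex. Correctness follows by combining Lemma~\ref{lem:goodDec} with the inductive property.

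The extended game $G'$ has $\mathcal{O}(|V|\cdot d\cdot \lambda)$ vertices and $\mathcal{O}(|E|\cdot d\cdot \lambda)$ edges, yielding the announced time bounds via the linear (safety) and quadratic (co-B\"uchi) algorithms recalled in Section~\ref{sec:preliminaries}. Memoryless optimal strategies in $G'$ translate to strategies in $G$ of memory $\mathcal{O}(d\cdot\lambda)$, giving the upper bounds; the matching polynomial lower bounds on memory for both players are the content of the upcoming Example~\ref{ex:necmemory}. Finally, $\sf P$-completeness: $\sf P$-membership holds by the two items, and $\sf P$-hardness is inherited from Theorem~\ref{thm:finitaryparity}, Item~\ref{item:finitaryparityItem1}, via the equality of winning sets given by Items~\ref{item:finitaryparityItem2}--\ref{item:finitaryparityItem3} of the same theorem (for $\lambda = \frac{d}{2}\cdot|V|$, the $\sf P$-hard $({\sf Dir})\BWP$ instance is also a $({\sf Dir})\FWP$ instance with the same winner). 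The main technical subtlety lies in verifying the correctness of the reset mechanism at ``closed'' and ``bad'' vertices, and this is precisely where the inductive property of windows is essential.
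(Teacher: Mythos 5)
Your proposal is correct and follows essentially the same route as the paper: Item~1 via Theorem~\ref{thm:finitaryparity}, and Item~2 via a reduction to a safety (resp.\ co-B\"uchi) game on an extended structure of size $\mathcal{O}(|V|\cdot d\cdot\lambda)$ that tracks the minimum priority and step count of a single current window, justified by Lemma~\ref{lem:goodDec} and the inductive property of windows, with lower bounds deferred to Example~\ref{ex:necmemory} and $\sf P$-hardness inherited from the reachability reduction. The only differences are cosmetic encoding choices (explicit ``closed-at-$v$'' vertices and odd-only minima versus the paper's immediate reset to $(v',p(v'),0)$), which do not affect the asymptotics or the correctness argument.
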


\begin{proof}
The reduction from reachability games used for Theorem~\ref{thm:finitaryparity} also suffices to obtain $\sf P$-hardness for $({\sf Dir})\FWP$ games, so it remains to establish a polynomial-time algorithm and study the memory requirements for the winning strategies.
If $\lambda \geq \frac{d}{2} \cdot |V|$, the results of Item~\ref{item:WPonedim1} follow from Theorem~\ref{thm:finitaryparity}. Hence we now suppose that $\lambda < \frac{d}{2} \cdot |V|$.

We begin by studying the \textit{undirect} variant.  By Lemma~\ref{lem:goodDec}, a play belongs to $\FWP(\lambda,p)$ if and only if it has an \EGoodDec{\lambda}. Therefore from $G$, we construct a game $G'$ able to detect \good\ histories. That is, we keep in the vertices of $G'$ the current vertex of $G$, the minimum priority of the current window and the number of steps performed in the current window. As soon as the minimum priority is even (in at most $\lambda-1$ steps), a \good\ history has been detected, and the information is reset with a new window. More precisely, we define $V' = V \times \{0,\ldots,d\} \times\{0,\ldots,\lambda - 1\} \cup \beta_V$, where the additional set $\beta_V =  \{\beta_v \mid v \in V \}$ is composed of special vertices for the detection of a $\lambda$-\bad\ window. Let $u = (v,c,l) \in V' \setminus \beta_V$ be such that $v \in V$, $c$ is the current minimum priority and $l$ is the current number of steps. Given $(v,v') \in E$, we construct the edge $(u,u') \in E'$ such that
\[
u' = \begin{cases}
(v',p(v'),0) &\mbox{ if } c \mbox{ is even}, \\
(v',\min(c,p(v')),l+1) &\mbox{ if } c \mbox{ is odd and } l < \lambda-1, \\
\beta_{v'} &\mbox{ otherwise}.
\end{cases}
\]
We also add the edges $(\beta_v,(v,p(v),0))$ to $E'$ for all $v \in V$. If $v_0$ is an initial vertex of $G$, then the corresponding initial vertex in $G'$ is $u_0 = (v_0,p(v_0),0)$. Finally, we define the objective $\Obj' = \CoBuchi(U')$ with $U' = V' \setminus \beta_V$. Observe that a play winning for $\Obj'$ corresponds to a play accepting an \textit{eventually \good\ decomposition} (which is easily obtained by looking at the ``resets'' of the step counter).

Clearly, thanks to Lemma~\ref{lem:goodDec}, $\playerOne$ has a winning strategy from $v_0$ in $(G,\Obj)$ if and only if $\playerOne$ has a winning strategy from $u_0$ in $(G',\Obj')$. As $|V'| = \mathcal{O}(|V| \cdot d \cdot \lambda)$ and $|E'| = \mathcal{O}(|E| \cdot d \cdot \lambda)$, the size of the game $G'$ is polynomial in the size of the original game $G$ (since $\lambda < \frac{d}{2} \cdot |V|$). It follows that deciding the winner in the game $(G,\Obj)$ can be solved in $\mathcal{O}(|V|^2 \cdot d^2 \cdot \lambda^2)$ time~\cite{ChatterjeeH14}, and that finite-memory strategies with $\mathcal{O}(d \cdot \lambda)$ memory are sufficient for both players to win.

Let us turn to the \textit{direct} variant. By Lemma~\ref{lem:goodDec}, a play belongs to $\DirFWP(\lambda,p)$ if and only if it has a \GoodDec{\lambda}. The construction of the game $G'$ is thus rather similar, except that a unique absorbing vertex $\beta$ is sufficient for the detection of a $\lambda$-\bad\ window, and $\Obj' = \Safe(U')$ with $U' = V' \setminus \{\beta\}$. Then, $\playerOne$ wins the game $(G,\DirFWP(\lambda,p))$ if and only if he wins the new game $(G',\Safe(U'))$. The corresponding algorithm runs in $\mathcal{O}((|V| + |E|) \cdot d \cdot \lambda)$ time. Moreover, both players have finite-memory strategies with $\mathcal{O}(d \cdot \lambda)$ memory. 

The necessity of polynomial memory is established in Example~\ref{ex:necmemory}.\qed
\end{proof}
 
In the next example, we illustrate the need for polynomial memory, for both players, in $({\sf Dir})\FWP$ games. We use game families proposed in~\cite{ChatterjeeF13}.

\begin{example} \label{ex:necmemory} Consider the game in Figure~\ref{fig:P1needsmemory} where the unlabeled vertices have all priority $d = 6$. This example can easily be generalized to any even $d \geq 0$, and observe that the size of the game is $|V| = 2 + \frac{d}{2} \cdot (\frac{d}{2} + 1)$, hence polynomial in $d$. 

\begin{figure}[htb]
\centering
  \begin{tikzpicture}[scale=5]
    \everymath{\footnotesize}
    \draw (0,0) node [rectangle, draw] (A) {$v_0$};
    \draw (0.3,0) node [circle, draw] (B) {};
    \draw (0.6,0) node [circle, draw] (C) {};
    \draw (1.2,0) node [circle, draw] (G) {};
    
	\draw (0.3,-.07) node[] (E) {$3$};    
    
    \draw[->,>=latex] (A) to (B);
    \draw[->,>=latex] (B) to (C);
    \draw[->,>=latex] (C) to (G);
    
    \draw (0.3,-0.2) node [circle, draw] (H) {};
    \draw (0.3,-.27) node[] (HH) {$1$};
    \draw[->,>=latex] (A) to (H);
    
    \draw (0.3,0.2) node [circle, draw] (I) {};
    \draw (0.6,0.2) node [circle, draw] (J) {};
    \draw (0.9,0.2) node [circle, draw] (K) {};
    \draw (0.3,0.13) node[] (II) {$5$};
    \draw[->,>=latex] (A) to (I);
    \draw[->,>=latex] (I) to (J);
    \draw[->,>=latex] (J) to (K);
    \draw[->,>=latex] (K) to (G);
    \draw[->,>=latex] (H) to (G);
    
    \draw (1.5,0.2) node [circle, draw] (L) {};
    \draw (1.5,0.13) node[] (LL) {$4$};
    \draw (1.5,0) node [circle, draw] (M) {};
    \draw (1.5,-0.2) node [circle, draw] (N) {};
    
    \draw[->,>=latex] (G) to (L);
    \draw[->,>=latex] (G) to (M);
    \draw[->,>=latex] (G) to (N);
    
    \draw (1.8,0) node [circle, draw] (O) {};
    \draw (1.8,-0.07) node[] (OO) {$2$};
    \draw (1.8,-0.2) node [circle, draw] (P) {};
    
    \draw[->,>=latex] (M) to (O);
    \draw[->,>=latex] (N) to (P);
    
    \draw (2.1,-0.2) node [circle, draw] (Q) {};
     \draw (2.1,-0.27) node[] (QQ) {$0$};
    
   \draw[->,>=latex] (P) to (Q);
  
    \draw (2.4,0) node [] (R) {back to $v_0$};
    \draw[->,>=latex] (Q) to (R);
    \draw[->,>=latex] (O) to (R);
    \draw[->,>=latex] (L) to (R);
    
	\path (-0.2,0) edge [->,>=latex] (A);    
    
    \end{tikzpicture}
    \caption{In order to win for objective $({\sf Dir})\FWP(5, p)$, $\playerOne$ needs to answer to priority $5$ (resp.~$3$, $1$) by choosing the path with priority $4$ (resp.~$2$, $0$). This requires $\mathcal{O}(\frac{d}{2})$ memory (here, $d = 6$).} 
\label{fig:P1needsmemory}
\end{figure}
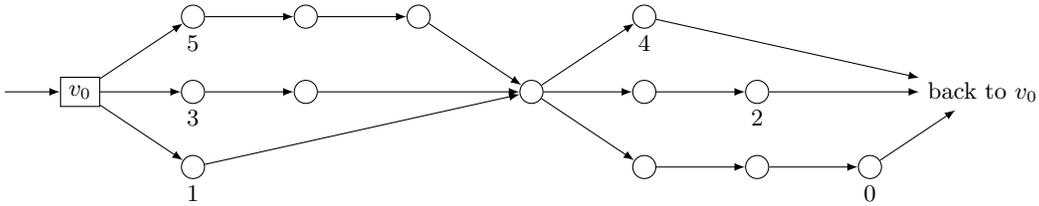

We claim that $\playerOne$ needs memory of size $\frac{d}{2}$ to win the game $(G,\Obj)$ with $\Obj = ({\sf Dir})\FWP(\lambda,p)$ with $\lambda = \frac{d}{2} + 2$. Indeed, each time $\playerTwo$ chooses the path with priority~$(d-1)$ (resp. $(d-3)$, \ldots{}, $1$), the only possibility for $\playerOne$ is to choose the path with priority $(d-2)$ (resp. $(d-4)$, \ldots{}, $0$) otherwise he creates a $\lambda$-bad window. If $\playerOne$ uses a finite-memory strategy with less than $\frac{d}{2}$ memory, he has to answer to two different odd priorities with the same choice, and $\playerTwo$ can take advantage of this to create $\lambda$-bad windows at every visit of~$v_0$. Hence, $\playerTwo$ can prevent $\playerOne$ from winning for $\FWP(\lambda,p)$, a fortiori for $\DirFWP(\lambda,p)$.

We thus have a family of games polynomial in $d$ and for which $\playerOne$ needs polynomial memory to win for objective $({\sf Dir})\FWP(\lambda,p)$.

We now turn to a second example to illustrate the necessity of polynomial memory for $\playerTwo$. Consider the family of game structures (parameterized by $n \geq 2$) depicted in Figure~\ref{fig:P2needsmemory}. From vertex $v_0$, $\playerTwo$ can choose one among the $n$ outgoing edges $(v_0,v_{\neq i})$, and from vertex $v_{\neq i}$, $\playerOne$ can choose one among the $(n-1)$ outgoing paths $\rho_j = v_1 \ldots v_n$ of length $n$, with $j \neq i$, the vertices of which all have priority $1$ except vertex~$v_j$ having priority~$0$. Vertices $v_0$ and $v_{\neq i}$, $i \in \{1, \ldots, n\}$, all have priority $1$. These choices of both players alternate infinitely many times. Observe that the size of this game is $|V| = 1 + n\cdot(n+1)$, hence polynomial in $n$.

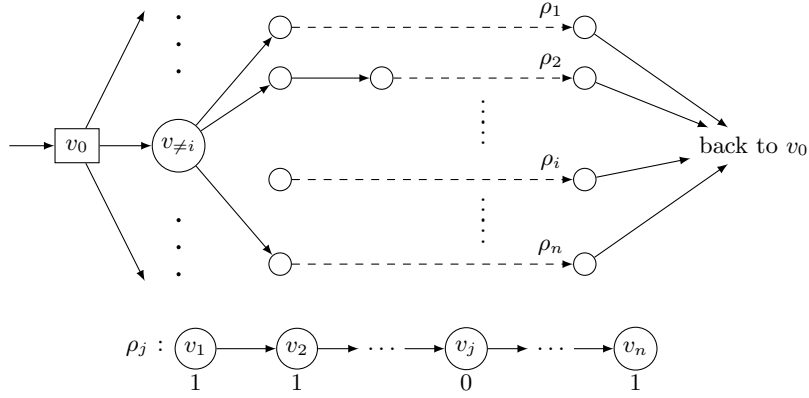
\begin{figure}[h]
\centering
  \begin{tikzpicture}[scale=4.5]
    \everymath{\footnotesize}
    \draw (0,0) node [rectangle, draw,inner sep=4pt] (A) {$v_0$};
	\path (-0.2,0) edge [->,>=latex] (A);    
	\path (A) edge [->,>=latex] (0.2,0.4);
	\path (A) edge [->,>=latex] (0.2,-0.4);
	\draw (0.3,0) node [circle,draw,inner sep=2pt] (B) {$v_{\ne i}$};
	\draw[->,>=latex] (A) to (B);
	\draw (0.3,0.38) node [circle,draw,inner sep=0.4pt,fill] {};
	\draw (0.3,0.3) node [circle,draw,inner sep=0.4pt,fill] {};
	\draw (0.3,0.22) node [circle,draw,inner sep=0.4pt,fill] {};
	\draw (0.3,-0.38) node [circle,draw,inner sep=0.4pt,fill] {};
	\draw (0.3,-0.30) node [circle,draw,inner sep=0.4pt,fill] {};
	\draw (0.3,-0.22) node [circle,draw,inner sep=0.4pt,fill] {};
	
	\draw (0.6,0.35) node [circle,draw,inner sep=3pt] (C) {};
	\draw (0.6,0.29) node [] {};
	\draw (0.6,0.2) node [circle,draw,inner sep=3pt] (D) {};
	\draw (0.6,-0.1) node [circle,draw,inner sep=3pt] (E) {};
	\draw (0.6,-0.35) node [circle,draw,inner sep=3pt] (F) {};
	\draw[->,>=latex] (B) to (C);
	\draw[->,>=latex] (B) to (D);
	\draw[->,>=latex] (B) to (F);
	\draw (0.9,0.2) node [circle,draw,inner sep=3pt] (DD) {};
	\draw (0.9,0.14) node [] {};
	\draw[->,>=latex] (D) to (DD);
	\draw (1.5,0.35) node [circle,draw,inner sep=3pt] (CC) {};
	\draw (1.5,0.2) node [circle,draw,inner sep=3pt] (DDD) {};
	\draw (1.5,-0.1) node [circle,draw,inner sep=3pt] (EE) {};
	\draw (1.5,-0.35) node [circle,draw,inner sep=3pt] (FF) {};
	\draw[->,>=latex,dashed] (DD) to (DDD);
	\draw[->,>=latex,dashed] (C) to (CC);
	\draw[->,>=latex,dashed] (E) to (EE);
	\draw[->,>=latex,dashed] (F) to (FF);
	\draw (1.4,0.4) node [] () {$\rho_1$};
	\draw (1.4,0.25) node [] () {$\rho_2$};
	\draw (1.4,-0.05) node [] () {$\rho_i$};
	\draw (1.4,-0.3) node [] () {$\rho_n$};
	
	\draw (1.2,0.1) node[] () {\rotatebox{90}{$\ldots$}};
	\draw (1.2,0.04) node[] () {\rotatebox{90}{$\ldots$}};
	\draw (1.2,-0.19) node[] () {\rotatebox{90}{$\ldots$}};
	\draw (1.2,-0.25) node[] () {\rotatebox{90}{$\ldots$}};
	
	\draw (2,0) node [] (G) {back to $v_0$};
	\draw[->,>=latex] (CC) to (G);
	\draw[->,>=latex] (DDD) to (G);
	\draw[->,>=latex] (EE) to (G);
	\draw[->,>=latex] (FF) to (G);
	
	\draw (0.2,-0.6) node [] () {$\rho_j$ :};
	\draw (0.35,-.6) node [circle,draw,inner sep=2pt] (M) {$v_1$};
	\draw (0.65,-.6) node [circle,draw,inner sep=2pt] (N) {$v_2$};
	\draw[->,>=latex] (M) to (N);
	\draw (0.9,-.6) node [] (OO) {$\ldots$};
	\draw[->,>=latex] (N) to (OO);
	\draw (1.15,-.6) node [circle,draw,inner sep=2pt] (O) {$v_j$};
	\draw[->,>=latex] (OO) to (O);
	\draw (1.4,-.6) node [] (OOO) {$\ldots$};
	\draw[->,>=latex] (O) to (OOO);
	\draw (1.65,-.6) node [circle,draw,inner sep=2pt] (P) {$v_{n}$};
	\draw[->,>=latex] (OOO) to (P);
	\draw (0.35,-.7) node [] {$1$};
	\draw (0.65,-.7) node []  {$1$};
	\draw (1.15,-.7) node []  {$0$};
	\draw (1.65,-.7) node []  {$1$};
	
    \end{tikzpicture}
    \caption{In order to win for objective $\overline{({\sf Dir})\FWP(n+3, p)}$, $\playerTwo$ needs to answer to the choice $\rho_i$ made by $\playerOne$ by taking vertex $v_{\neq i}$, hence eventually forcing an $(n+3)$-bad window when $\playerOne$ is forced to pick $\rho_j$ with $j > i$. This requires $\mathcal{O}(n)$ memory.} 
\label{fig:P2needsmemory}
\end{figure}

We claim that $\playerTwo$ needs memory of size $n$ to win the game $(G,\overline{\Obj})$ with $\Obj = {\sf (Dir)}\FWP(\lambda,p)$ such that $\lambda = n+3$. $(i)$ Let us first explain that he can win with such a memory. Indeed, at each alternation, $\playerTwo$ records the last choice $\rho_i$ of $\playerOne$ and then chooses $(v_0,v_{\neq i})$. At the next alternation, $\playerOne$ must choose $\rho_j$ with either $j > i$ or $j < i$. The first case necessarily occurs infinitely often since eventually $i$ equals $1$ if $\playerOne$ keeps choosing $j < i$. Now, observe that when $\playerOne$ chooses $j > i$, the play contains $(n+2)$ consecutive vertices with priority~$1$, hence a $\lambda$-bad window since $\lambda = n +3$. This shows that $\playerTwo$ wins for objective $\overline{\FWP(\lambda,p)}$ and thus for $\overline{\DirFWP(\lambda,p)}$ too. $(ii)$ Let us now show that $\playerTwo$ is losing with a memory of size less then $n$. In this case, there is a vertex $v_{\neq i}$ that he will never choose. Thus $\playerOne$ can win for objective $\DirFWP(\lambda,p)$ (a fortiori for $\FWP(\lambda,p)$) by choosing the path $\rho_i$ at each alternation. Indeed, this will only induce sequences of $(n+1)$ consecutive priorities equal to $1$ separated by priority $0$, which is fine since $\lambda = n +3$.

We thus have a family of games polynomial in $n$ and for which $\playerTwo$ needs polynomial memory to win for objective $\overline{({\sf Dir})\FWP(\lambda,p)}$.\hfill$\triangleleft$
\end{example}

\section{Multi-dimension games} \label{sec:manydim}

We now consider multi-dimension games: in this setting, there are $n$ priority functions $p_1$, \ldots{}, $p_n$ and the objective $\Obj$ is the \textit{conjunction} of \textit{identical} objectives $\Obj_m$ for each ``dimension'' (i.e., priority function), with  $\Obj_m$ being $({\sf Dir}){\sf FixX}$ or $({\sf Dir}){\sf BndX}$ for ${\sf X} \in \{{\sf PR}, {\sf WP}\}$. As in the one-dimension case, we first address the \textit{bounded} variants in Section~\ref{subsec:multibounded}, then turn to the \textit{fixed} ones in Section~\ref{subsec:multifixed}.

\subsection{Bounded variants} \label{subsec:multibounded}

Recall that Proposition~\ref{prop:inclusions} established the equality of objectives $({\sf Dir})\BWP(p)$ and $({\sf Dir})\BP(p)$ in the one-dimension setting. This equality trivially carries over to the multi-dimension setting, i.e., we have that $ \cap_{m=1}^n ({\sf Dir})\BWP(p_m) = \cap_{m=1}^n ({\sf Dir})\BP(p_m)$ since the individual objectives (one per priority function) are equal. Hence, it suffices to obtain our results for either \textsf{WP} or \textsf{PR} objectives.

\paragraph{\bf Overview.} The next theorem presents an overview of our results. For the sake of readability, its proof is split in several lemmas: we prove ${\sf EXPTIME}$-membership and upper bounds on memory in Lemma~\ref{lem:finitarystreett}, the equalities of Items~\ref{item:multidimequalities} and~\ref{item:multidimequalities2} in Lemma~\ref{lem:multibornes}, ${\sf EXPTIME}$-hardness in Lemma~\ref{prop:exptimehard}, and finally, lower bounds on memory in Lemma~\ref{lem:memexp}.

\begin{theorem} \label{prop:multiDirBP} 
Let $G = (V_1,V_2,E)$ be a game structure, $v_0$ be an initial vertex, $p_1,\ldots,p_n$ be $n$ priority functions, and $\Obj$ be the objective $\cap_{m=1}^n \DirBP(p_m)$ or $\cap_{m=1}^n \DirBWP(p_m)$ (resp. $\cap_{m=1}^n \BP(p_m)$ or $\cap_{m=1}^n \BWP(p_m)$). Let $b = |V| \cdot 2^{n\cdot\frac{d}{2}} \cdot n\cdot\frac{d}{2}$.
\begin{enumerate}
\item Deciding the winner in $(G,\Obj)$ from $v_0$ is $\sf EXPTIME$-complete with an algorithm in $\mathcal{O}(b^2)$ (resp. $\mathcal{O}(|V| \cdot b^2)$) time, and exponential-memory strategies are necessary and sufficient for both players (resp. for $\playerOne$ and infinite-memory is necessary for $\playerTwo$).
\vspace{1mm}
\item $\forall\, \lambda \geq b$, $\forall\, \lambda' \geq b \cdot \frac{d}{2}$, the winning sets for the following objectives are all equal: $\cap_{m=1}^n \BP(p_m)$, $\cap_{m=1}^n \PR(\lambda,p_m)$, $\cap_{m=1}^n \BWP(p_m)$, and $\cap_{m=1}^n \FWP(\lambda',p_m)$.\label{item:multidimequalities}
\vspace{1mm}
\item The equalities given in Item~\ref{item:multidimequalities} also hold for the direct variants ${\sf (Dir)}$. \label{item:multidimequalities2}
\end{enumerate}
\end{theorem}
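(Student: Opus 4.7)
The plan is to reduce the eight sub-cases to two: by Item~\ref{item:samebnd} of Proposition~\ref{prop:inclusions}, $({\sf Dir})\BWP(p_m)=({\sf Dir})\BP(p_m)$ holds per dimension, and therefore also at the level of conjunctions, so it suffices to establish everything for the \textsf{WP} variants. I would then treat the direct and undirect cases in parallel, as they admit parallel reductions.

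For \textsf{EXPTIME}-membership and the upper bound on $\playerOne$'s memory, the key step is to invoke Lemma~\ref{lem:finitarystreett}: encode $\cap_{m=1}^n \BP(p_m)$ as a finitary Streett game on $G$ whose request-response pairs are indexed by $(m,c)$ with $m\in\{1,\ldots,n\}$ and $c\in\{1,3,\ldots,d-1\}$, the request being $\{v\mid p_m(v)=c\}$ and the matching response $\{v\mid p_m(v)\stronger c\}$. A play wins the conjunction iff in the resulting finitary Streett game every request is answered within a uniformly bounded delay, which is precisely the finitary Streett semantics. With $k=n\cdot d/2$ pairs, the standard finitary Streett algorithm yields the announced $\mathcal{O}(b^2)$ running time and a winning strategy for $\playerOne$ of size $2^{\mathcal{O}(k)}$. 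The direct variant is obtained via a safety strengthening (freeze the play in a trap vertex as soon as any pair overshoots a deadline), yielding the $\mathcal{O}(|V|\cdot b^2)$ bound. For lower bounds, \textsf{EXPTIME}-hardness is Lemma~\ref{prop:exptimehard}; the exponential memory requirements for $\playerOne$ (both variants) and $\playerTwo$ (direct variant only) come from Lemma~\ref{lem:memexp} via the known exponential-memory bound for generalized reachability games~\cite{FijalkowH13}; and the infinite-memory lower bound for $\playerTwo$ in the undirect case is inherited directly from Theorem~\ref{thm:finitaryparity} by taking $n=1$.

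For Items~\ref{item:multidimequalities} and~\ref{item:multidimequalities2}, the plan is to fix a winning strategy $\sigma_1$ for $\playerOne$ of size $|M|\leq 2^{n\cdot d/2}$ provided by the previous step, and reason inside the product $G\times M$, of size at most $|V|\cdot 2^{n\cdot d/2}$. If along some outcome consistent with $\sigma_1$, an odd-priority request in some dimension stayed unanswered for more than $b=|V|\cdot 2^{n\cdot d/2}\cdot n\cdot d/2$ consecutive steps, a pigeonhole on product configurations combined with a pigeonhole over the $n\cdot d/2$ request types would expose a cycle along which $\playerTwo$ can pump the delay arbitrarily, contradicting the assumption that $\sigma_1$ is winning for the bounded objective. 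This gives $\WinG{1}{\cap_{m=1}^n \BP(p_m)}{G}\subseteq \WinG{1}{\cap_{m=1}^n \PR(\lambda,p_m)}{G}$ for every $\lambda\geq b$; the reverse inclusion is immediate from Items~\ref{item:bounded} and~\ref{item:monotone} of Proposition~\ref{prop:inclusions}. Items~\ref{item:WPcPR} and~\ref{item:encadrement} of Proposition~\ref{prop:inclusions} then sandwich \textsf{WP} between two \textsf{PR} objectives, extending the equality to $\cap_{m=1}^n \FWP(\lambda',p_m)$ for $\lambda'\geq b\cdot d/2$. The direct case (Item~\ref{item:multidimequalities2}) proceeds identically using the direct variants throughout. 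The main obstacle I anticipate is pinning down $b$ with exactly the stated form: a naive product analysis gives weaker bounds, and reaching the announced $|V|\cdot 2^{n\cdot d/2}\cdot n\cdot d/2$ requires separating cleanly the role of the Moore memory (tracking the currently open requests and contributing the factor $2^{n\cdot d/2}$) from the pigeonhole over pairs (contributing the factor $n\cdot d/2$).
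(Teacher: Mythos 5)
Your overall architecture coincides with the paper's: the eight cases collapse to two via Proposition~\ref{prop:inclusions}, Item~\ref{item:samebnd}; membership and memory upper bounds come from exactly the request-response encoding with $n\cdot\frac{d}{2}$ pairs that the paper uses in Lemma~\ref{lem:finitarystreett}; and the lower bounds are Lemmas~\ref{prop:exptimehard} and~\ref{lem:memexp} plus the infinite-memory argument inherited from the one-dimension case. (A minor slip: you attach the $\mathcal{O}(|V|\cdot b^2)$ bound to the direct variant and $\mathcal{O}(b^2)$ to the prefix-independent one; it is the other way around.)

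The genuine gap is in your derivation of the bound $b$ for Items~\ref{item:multidimequalities} and~\ref{item:multidimequalities2}. The paper does not re-derive it: it imports it as Item~\ref{item:finitarystreett1} of Theorem~\ref{thm:finitarystreett}, a by-product of the Wallmeier et al.\ construction, in which the round-robin pointer over the $r=n\cdot\frac{d}{2}$ pairs built into the strategy's memory forces every pending request to be answered within one tour of the product graph, of size $|V|\cdot 2^r\cdot r=b$. Your replacement --- take an arbitrary finite-memory winning strategy $\sigma_1$ and pigeonhole on $G\times M$ --- is fine for the \emph{direct} variants: a request pending for more than $|V|\cdot|M|$ steps exposes a cycle containing no response, $\playerTwo$ pumps it forever, and that request is never answered, contradicting $\cap_{m}\DirBP(p_m)$. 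But for the prefix-independent variants the step ``pump the delay arbitrarily, contradicting the bounded objective'' fails: the pumped play only leaves unanswered a request sitting in the \emph{prefix}, and if the minimum priority of the exposed cycle in the relevant dimension is even (merely larger than the pending priority $c$, hence not a response to $c$), then every request raised \emph{inside} the cycle is answered within one cycle length and the pumped play is winning for $\cap_{m}\BP(p_m)$ --- no contradiction. To make this route work you would have to show that delays exceeding $|V|\cdot|M|$ recur infinitely often along a single consistent play and then stitch together pumps of increasing length, resynchronizing with $\sigma_1$ after each pump via its finite memory, exactly as in Example~\ref{ex:2}; that is the part your sketch elides. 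Relatedly, your hoped-for factorization of $b$ into ``Moore memory $2^{n d/2}$'' times ``pigeonhole over pairs $n\cdot d/2$'' is not how the constant arises: the factor $n\cdot\frac{d}{2}$ is already part of the memory of the request-response strategy ($2^r\cdot r$ states), so invoking Theorem~\ref{thm:finitarystreett} directly both closes the gap and yields the stated constant for free.
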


Before continuing with the proof of these results, let us comment them. First, observe that multi-dimension bounded \textsf{WP} or \textsf{PR} games are $\sf EXPTIME$-complete whereas multi-dimension parity games are $\sf coNP$-complete~\cite{DBLP:conf/fossacs/ChatterjeeHP07}. Hence, in this case, the boundedness requirements specified by \textsf{WP} or \textsf{PR} objectives induce higher complexity, unlike in one-dimension games, where they yield a lower one ($\sf P$-complete instead of the long-standing ${\sf UP} \cap {\sf coUP}$ barrier of parity games). This implies that, in multi-dimension games, bounded \textsf{WP} or \textsf{PR} objectives cannot be used to approximate efficiently parity objectives, in contrast to one-dimension games. A similar dichotomy was already witnessed for \textit{window mean-payoff games} with regard to approximation of classical \textit{mean-payoff games}~\cite{Chatterjee0RR15}. 

Interestingly, the decidability of multi-dimension bounded window mean-payoff games is still open and they are known to be non-primitive-recursive-hard~\cite{Chatterjee0RR15}, whereas we prove here $\sf EXPTIME$-completeness for the parity counterpart of this objective. 
This suggests that the colossal complexity of bounded window mean-payoff games is a result of the quantitative nature of mean-payoff mixed with windows, and not an inherent drawback of the window mechanism.

\paragraph{\bf Exponential-time algorithm and upper bounds on memory.} To prove ${\sf EXPTIME}$-membership, we have to introduce related games from the literature. First, let us consider \textit{request-response games}~\cite{WallmeierHT03,ChatterjeeHH11}. Consider $r$ sets of vertices $Rq_1,\ldots, Rq_r$ representing requests and $r$ sets of vertices $Rp_1,\ldots,Rp_r$ representing the corresponding responses ($Rq_i, Rp_i \subseteq V$ for all $i$). 
The \emph{request-response}
objective $\RR((Rq_i,Rp_i)_{i=1}^r)$ requires that \textit{for all} $i$, whenever a vertex of $Rq_i$ is visited, then, later on, a vertex of $Rp_i$ is also visited.\footnote{Note that a single response $Rp_i$ suffices to answer all pending requests $Rq_i$, in the same spirit as for priorities in the \textit{parity-response} objective.} Observe that by definition, this objective is \textit{direct}, i.e., the condition must hold from the start, not only eventually.
Solving these games is ${\sf EXPTIME}$-complete with an algorithm in $\mathcal{O}((|V| \cdot 2^r \cdot r)^2)$ time, and exponential memory is both sufficient and necessary for both players \cite{WallmeierHT03,ChatterjeeHH11}.

In \cite{ChatterjeeHH09}, Chatterjee et al.~studied \textit{bounded Streett} games, which using our terminology for the sake of consistency, can be equivalently seen as \emph{direct bounded request-response} games. The corresponding objective $\DirBndRR((Rq_i,Rp_i)_{i=1}^r)$ asks that there exist a bound $b \in \mathbb{N}_0$ such that if a request is visited, then the corresponding response is visited within $b$ steps. Chatterjee et al.~proved that such a bound always exists when the (``unbounded'') request-response objective $\RR$ can be won by $\playerOne$, as a by-product of the construction used to solve these games in~\cite{WallmeierHT03}.
Ergo, $\RR$ games are equivalent to $\DirBndRR$ games.

A prefix-independent variant of the $\DirBndRR$ objective is also studied in~\cite{ChatterjeeHH09}, under the name of \textit{finitary Streett}. Again, to maintain consistency, we call it the \emph{bounded request-response objective} $\BndRR$. It is naturally defined from the direct variant $\DirBndRR$ in the same way as all \textit{undirect} variants in this paper (Definition~\ref{def:obj}).

We sum up the results\footnote{The results presented here are based on the best known complexity in $\mathcal{O}(|V|^2)$ for solving B\"uchi games~\cite{ChatterjeeH14}, and not on the previously best known complexity in $\mathcal{O}(|V|\cdot |E|)$ originally used in \cite{WallmeierHT03,ChatterjeeHH09}.} for all these games in the next theorem.

\begin{theorem}[\cite{WallmeierHT03,ChatterjeeHH09,ChatterjeeHH11}] \label{thm:finitarystreett}
Let $G = \Gdev$ be a game structure, $v_0$ be an initial vertex, and $(Rq_i,Rp_i)_{i=1}^r$ be a set of $r$ pairs of requests and responses, such that $Rq_i, Rp_i \subseteq V$. Let $b = |V| \cdot 2^r \cdot r$.
\begin{enumerate}
\item If $\playerOne$ has a winning strategy for $\DirBndRR((Rq_i,Rp_i)_{i=1}^r))$ (resp.~$\BndRR((Rq_i,Rp_i)_{i=1}^r))$), then he has one that enforces that (resp.~eventually) every request is followed by a corresponding response in at most $b$ steps.\label{item:finitarystreett1}
\item Deciding the winner in the game $(G,\DirBndRR((Rq_i,Rp_i)_{i=1}^r))$ from $v_0$ is ${\sf EXPTIME}$-complete with an algorithm in $\mathcal{O}(b^2)$ time, and exponential-memory strategies are sufficient for both players and necessary for $\playerOne$.
\item Deciding the winner in the game $(G,\BndRR((Rq_i,Rp_i)_{i=1}^r))$ from $v_0$ is in ${\sf EXPTIME}$ with an algorithm in $\mathcal{O}(|V| \cdot b^2)$ time, exponential-memory strategies are both sufficient and necessary for $\playerOne$, and infinite-memory is necessary for $\playerTwo$.
\end{enumerate}
\end{theorem}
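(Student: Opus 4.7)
The plan is to prove Theorem~\ref{thm:finitarystreett} by reducing request-response games to Büchi games on an expanded state space that tracks pending obligations, following the Wallmeier--Hütten--Thomas approach. Define the expanded game $G'$ with state space $V' = V \times 2^{\{1,\ldots,r\}} \times \{0,1,\ldots,r\}$: a state $(v,P,f)$ records the current vertex $v$, the set $P$ of request indices currently pending (a request $i$ enters $P$ upon visiting $Rq_i$ and leaves $P$ upon visiting $Rp_i$), and a \emph{focus pointer} $f$ designating one element of $P$ ($f=0$ when $P = \emptyset$). Transitions update $v$ and $P$ in the obvious way; $f$ is initialized to the least element of $P$ when $f=0$ and $P\neq\emptyset$, and advances cyclically through $P$ each time the focused request is answered. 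An expanded state is Büchi-accepting whenever $f$ has just advanced or is $0$; ownership of $V'$ is inherited from the projection onto $V$. Crucially $|V'| = \Theta(b)$ with $b = |V| \cdot 2^r \cdot r$.

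The equivalence ``\playerOne\ wins $(G,\DirBndRR((Rq_i,Rp_i)_{i=1}^r))$ iff \playerOne\ wins $(G',\Buchi(\cdot))$'' is proved in both directions. The non-trivial direction uses memoryless determinacy of Büchi games: a memoryless winning strategy $\sigma'$ on $G'$ projects to a finite-memory strategy $\sigma$ on $G$ of memory $|V'|$. By contradiction, if some outcome of $\sigma'$ contained a cycle without any accepting state, the focus pointer would remain forever on the same request $i$, contradicting the Büchi condition; hence the focus advances at least once within every $|V'|\leq b$ steps. Since at most $r$ advances suffice to clear any newly pending request, a tight combinatorial count gives the bound $b$ claimed in Item~\ref{item:finitarystreett1}. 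The converse is routine: any strategy enforcing bounded response in $G$ naturally lifts to a Büchi-winning strategy in $G'$.

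The upper bounds in Item 2 follow at once: Büchi games on $G'$ are solved in $\mathcal{O}(|V'|^2) = \mathcal{O}(b^2)$ time via~\cite{ChatterjeeH14}, and the memoryless winning strategy on $G'$ yields a finite-memory strategy in $G$ of size $\mathcal{O}(b)$, exponential in $r$. For the prefix-independent $\BndRR$ (Item 3), I would adapt the construction by letting \playerOne\ nondeterministically trigger a single ``reset'' that empties the pending set once, producing a game of size $\mathcal{O}(|V|\cdot b)$ (the extra $|V|$ factor encoding where the reset happened) that still reduces to a Büchi game and yields an $\mathcal{O}(|V|\cdot b^2)$ algorithm. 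The infinite-memory lower bound for \playerTwo\ against $\BndRR$ is obtained by adapting Example~\ref{ex:2}: a finite-memory \playerTwo-strategy against a suitable \playerOne-strategy eventually produces a uniform bound on response time, so only an infinite-memory strategy creating ever-growing delays can defeat $\BndRR$.

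The main obstacle is the matching lower bounds: EXPTIME-hardness and the exponential-memory requirement for \playerOne. The natural route is a reduction from acceptance in alternating polynomial-space Turing machines, encoding the polynomially many tape cells into $r$ request-response pairs whose pending status tracks cell contents, with the game dynamics simulating machine transitions and request-response constraints enforcing consistency across successive configurations. Deciding the request-response objective then captures acceptance, which is EXPTIME-complete; the same reduction forces \playerOne's winning strategy to distinguish exponentially many configurations, yielding the memory lower bound. Checking that all constants work out to exactly $b = |V|\cdot 2^r\cdot r$, rather than a slightly different polynomial in these parameters, is routine bookkeeping once the core reduction is in place.
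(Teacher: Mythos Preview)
The paper does not prove Theorem~\ref{thm:finitarystreett}: it is stated as a summary of results from~\cite{WallmeierHT03,ChatterjeeHH09,ChatterjeeHH11} and no proof is given. So there is nothing in the paper to compare your proposal against; you are essentially reconstructing the cited literature.

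That said, a few remarks on your sketch. Your outline of the Wallmeier--H\"utten--Thomas construction (pending-set plus focus pointer, reduction to a B\"uchi game on a state space of size $\Theta(b)$) is the standard route and is faithful to~\cite{WallmeierHT03}. However, your derivation of the quantitative bound in Item~\ref{item:finitarystreett1} is loose: you argue that the focus advances at least once every $|V'|\le b$ steps and that at most $r$ advances clear any request, which na\"ively yields $r\cdot b$ rather than $b$. Getting exactly $b=|V|\cdot 2^r\cdot r$ requires a sharper argument (the factor $r$ is already baked into $|V'|$, so one should not multiply by $r$ again); calling this ``routine bookkeeping'' glosses over the point. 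For Item~3, your single-reset idea is plausible but not the construction used in~\cite{ChatterjeeHH09}; the extra $|V|$ factor there comes from an outer fixpoint iteration over the prefix-independent variant, not from enlarging the arena. Finally, the EXPTIME-hardness and memory lower bounds are indeed obtained via an APTM reduction in~\cite{ChatterjeeHH11}, as you suggest, but this is not developed in the present paper either.
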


We will now establish a polynomial-time reduction from multi-dimension $\DirBWP$ and $\BWP$ games (or equivalently, $\DirBP$ and $\BP$ games) to $\DirBndRR$ and $\BndRR$ games respectively. The crux is to consider $n\cdot\frac{d}{2}$ pairs of requests and responses, so that a request is made when an odd priority occurs and the corresponding response is the occurrence of a \smaller\ priority.
Thanks to Theorem~\ref{thm:finitarystreett}, we thus obtain an ${\sf EXPTIME}$ algorithm for multi-dimension ${\sf (Dir)}\BWP$ and ${\sf (Dir)}\BP$ games.

\begin{lemma}\label{lem:finitarystreett}
Let $G = (V_1, V_2, E)$ be a game structure, $v_0$ be an initial vertex, $p_1,\ldots,p_n$ be $n$ priority functions. Let $b = |V| \cdot 2^{n\cdot\frac{d}{2}} \cdot n\cdot\frac{d}{2}$.
\begin{enumerate}
\item Let $\Obj$ be the objective $\cap_{m=1}^n \DirBP(p_m)$ or $\cap_{m=1}^n \DirBWP(p_m)$. Deciding the winner from $v_0$ in the game $(G,\Obj)$ can be done in $\mathcal{O}(b^2)$ time and exponential-memory strategies are sufficient for both players.
\item Let $\Obj$ be the objective $\cap_{m=1}^n \BP(p_m)$ or $\cap_{m=1}^n \BWP(p_m)$. Deciding the winner from $v_0$ in the game $(G,\Obj)$ can be done in $\mathcal{O}(|V| \cdot b^2)$ time, exponential-memory strategies are sufficient for $\playerOne$ and infinite-memory is necessary for $\playerTwo$.
\end{enumerate}
\end{lemma}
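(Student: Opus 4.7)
The plan is to give a linear-time reduction from multi-dimension bounded parity-response (and, equivalently, bounded window parity) games to (direct) bounded request-response games, then invoke Theorem~\ref{thm:finitarystreett}. Since $({\sf Dir})\BWP(p_m) = ({\sf Dir})\BP(p_m)$ for every~$m$ by Item~\ref{item:samebnd} of Proposition~\ref{prop:inclusions}, the two conjunctions coincide and it suffices to treat the \textsf{PR} formulation.

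Concretely, on the same game structure $G$, for each dimension $m \in \{1,\ldots,n\}$ and each odd priority $c \in \{1,3,\ldots,d-1\}$, I would define a request-response pair
\[
Rq_{m,c} = \{v \in V \mid p_m(v) = c\}, \qquad Rp_{m,c} = \{v \in V \mid p_m(v) \stronger c\},
\]
so that raising a request corresponds to witnessing an odd priority $c$ in dimension $m$, and answering it corresponds to witnessing a \smaller\ priority afterward. This yields $r = n\cdot\frac{d}{2}$ pairs, hence the bound $b = |V| \cdot 2^{r} \cdot r = |V| \cdot 2^{n\cdot\frac{d}{2}} \cdot n\cdot\frac{d}{2}$ of Theorem~\ref{thm:finitarystreett}. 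The equivalence between the two games is then checked in both directions. Forward: any strategy winning $\cap_{m=1}^n ({\sf Dir})\BP(p_m)$ with dimensional bounds $\lambda_m$ is winning for $({\sf Dir})\BndRR((Rq_{m,c},Rp_{m,c}))$ with uniform delay $\max_m \lambda_m - 1$. Backward: by Item~\ref{item:finitarystreett1} of Theorem~\ref{thm:finitarystreett}, any winning strategy of $\playerOne$ for the request-response game can be assumed to enforce (resp.\ eventually enforce) the uniform bound $b$ between each request and its response; since an even priority already witnesses its own $\stronger$-response with $l=0$, the same strategy wins $\cap_{m=1}^n ({\sf Dir})\BP(p_m)$ with bound $\lambda = b+1$.

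From Theorem~\ref{thm:finitarystreett} I would then directly read off the $\mathcal{O}(b^2)$ algorithm in the direct case, the $\mathcal{O}(|V|\cdot b^2)$ algorithm in the undirect case, and the fact that exponential memory suffices for both players in the direct case and for $\playerOne$ in the undirect case. The infinite-memory lower bound for $\playerTwo$ in the undirect case is inherited from the one-dimension counterexample of Example~\ref{ex:2}, which trivially embeds in the multi-dimension framework by repeating its single priority function. The one step I anticipate needing care is the forward implication in the prefix-independent (undirect) variant: the dimensional bounds $\lambda_m$ are guaranteed by the $\BP$ objective only from some position $i_m$ onward, so the induced strategy only witnesses $\BndRR$ rather than $\DirBndRR$, which is precisely why Theorem~\ref{thm:finitarystreett} needs to be invoked in both of its forms to make the two reductions go through in parallel.
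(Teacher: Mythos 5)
Your proposal is correct and follows essentially the same route as the paper: reduce $\cap_m ({\sf Dir})\BWP(p_m)$ to $\cap_m ({\sf Dir})\BP(p_m)$ via Proposition~\ref{prop:inclusions}, encode the latter as a $({\sf Dir})\BndRR$ game with the pairs $(Rq_{m,c},Rp_{m,c})$ for the $n\cdot\frac{d}{2}$ odd priorities, invoke Theorem~\ref{thm:finitarystreett}, and inherit the infinite-memory lower bound for $\playerTwo$ from the one-dimension case. You in fact spell out the two directions of the equivalence (and the role of Item~\ref{item:finitarystreett1} in the backward direction) more explicitly than the paper, which simply asserts them.
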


\begin{proof}
In this proof, we only consider the objective $\Obj = \cap_{m=1}^n \DirBP(p_m)$ (resp. $\cap_{m=1}^n \BP(p_m)$) as $\cap_{m=1}^n ({\sf Dir})\BP(p_m) = \cap_{m=1}^n ({\sf Dir})\BWP(p_m)$ by Proposition~\ref{prop:inclusions} Item~\ref{item:samebnd}.

We prove this lemma by encoding the objective $\Obj$ as a $\DirBndRR$ (resp. $\BndRR$) objective. Intuitively, for any dimension, one wants to response to an odd priority by a \smaller\ priority. Formally, we define for each dimension $m$ and each \textit{odd} priority $c$ the following sets of vertices: $Rq_{m,c} = \{ v \in V \mid p_m(v) = c\}$ and $Rp_{m,c} = \{ v \in V \mid p_m(v) \strongerm c\} = \{v \in V \mid p_m(v) \in \{0,2,\ldots,c-1\} \}$. 
Clearly, $\playerOne$ has a winning strategy for the objective $\Obj$ from $v_0$ if and only if $\playerOne$ has a winning strategy from $v_0$ for the $\DirBndRR$ (resp. $\BndRR$) objective considering the $n\cdot\frac{d}{2}$ pairs $(Rq_{m,c},Rp_{m,c})$. Applying Theorem~\ref{thm:finitarystreett} with $r = n\cdot\frac{d}{2}$ we get the complexity and memory results stated in Lemma~\ref{lem:finitarystreett}, except the necessity of infinite-memory for $\playerTwo$. Recall that the latter property already holds in one-dimension $\BP$ and $\BWP$ games (see Theorem~\ref{thm:finitaryparity}).
\qed
\end{proof}

\paragraph{\bf Equalities between objectives.} The next lemma gives the last two items of Theorem~\ref{prop:multiDirBP}. The key ingredient is the bound given in Theorem~\ref{thm:finitarystreett} for $({\sf Dir})\BndRR$ games, and by extension, for $({\sf Dir})\BP$ and $({\sf Dir})\BWP$ games thanks to the reduction established in Lemma~\ref{lem:finitarystreett}. The rest follows the same lines as in the one-dimension case, i.e., it builds upon the inclusions and equalities presented in Proposition~\ref{prop:inclusions}.

\begin{lemma}\label{lem:multibornes}
Let $b = |V| \cdot 2^{n\cdot\frac{d}{2}} \cdot n\cdot\frac{d}{2}$. For all $\lambda \geq b$, $\lambda' \geq b\cdot\frac{d}{2}$, the winning sets for the following objectives are all equal: $\cap_{m=1}^n \BP(p_m)$, $\cap_{m=1}^n \PR(\lambda,p_m)$, $\cap_{m=1}^n \BWP(p_m)$, and $\cap_{m=1}^n \FWP(\lambda',p_m)$.
The same equalities also hold for the direct variants ${\sf (Dir)}$.
\end{lemma}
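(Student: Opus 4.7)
The strategy is to pivot everything around the common objective $\cap_{m=1}^n ({\sf Dir})\BP(p_m) = \cap_{m=1}^n ({\sf Dir})\BWP(p_m)$, whose equality is immediate from Item~\ref{item:samebnd} of Proposition~\ref{prop:inclusions}. Since winning sets are monotone with respect to objective inclusion, it suffices to establish, for each of the other three objectives, both that it is contained in $\cap_m ({\sf Dir})\BP(p_m)$ and that $\playerOne$'s winning set for $\cap_m ({\sf Dir})\BP(p_m)$ is contained in his winning set for that objective.

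The easy direction (inclusions of objectives) follows directly from Proposition~\ref{prop:inclusions}: by Items~\ref{item:bounded} and~\ref{item:WPcPR}, one has $({\sf Dir})\PR(\lambda,p_m) \subseteq ({\sf Dir})\BP(p_m)$ and $({\sf Dir})\FWP(\lambda',p_m) \subseteq ({\sf Dir})\PR(\lambda',p_m) \subseteq ({\sf Dir})\BP(p_m)$. Taking conjunctions yields the required inclusions of winning sets into $\WinG{1}{\cap_m ({\sf Dir})\BP(p_m)}{G}$.

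The crux is the reverse direction. Suppose $\playerOne$ wins $\cap_m ({\sf Dir})\BP(p_m)$ from $v_0$. The reduction constructed in the proof of Lemma~\ref{lem:finitarystreett} maps this game to the $({\sf Dir})\BndRR$ game on the same arena with the $r = n \cdot \frac{d}{2}$ request/response pairs $(Rq_{m,c}, Rp_{m,c})$, which $\playerOne$ also wins from $v_0$. Now I apply Item~\ref{item:finitarystreett1} of Theorem~\ref{thm:finitarystreett}: $\playerOne$ has a strategy that enforces every request to be (eventually, resp.~immediately) followed by a response within $b = |V| \cdot 2^{n \cdot d/2} \cdot n \cdot \frac{d}{2}$ steps. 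Unraveling the reduction, this strategy is winning for $\cap_m ({\sf Dir})\PR(b, p_m)$. Monotonicity (Item~\ref{item:monotone} of Proposition~\ref{prop:inclusions}) then gives the inclusion $\cap_m ({\sf Dir})\PR(b, p_m) \subseteq \cap_m ({\sf Dir})\PR(\lambda, p_m)$ for any $\lambda \geq b$, which closes the $\PR$ side.

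For the $\FWP$ side, I chain once more through Item~\ref{item:encadrement} of Proposition~\ref{prop:inclusions}: $({\sf Dir})\PR(b, p_m) \subseteq ({\sf Dir})\FWP(\frac{d}{2} \cdot b, p_m)$, and then by monotonicity this is contained in $({\sf Dir})\FWP(\lambda', p_m)$ whenever $\lambda' \geq \frac{d}{2} \cdot b$. Intersecting over $m$, the winning strategy produced above is also winning for $\cap_m ({\sf Dir})\FWP(\lambda', p_m)$, which closes the argument. The only nontrivial ingredient is Item~\ref{item:finitarystreett1} of Theorem~\ref{thm:finitarystreett}; everything else is bookkeeping with the inclusions of Proposition~\ref{prop:inclusions}, and the direct and undirect cases proceed in parallel because both the reduction of Lemma~\ref{lem:finitarystreett} and Theorem~\ref{thm:finitarystreett} handle the two variants uniformly.
\qed
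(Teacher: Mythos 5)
Your proposal is correct and follows essentially the same route as the paper: the key step is combining the reduction of Lemma~\ref{lem:finitarystreett} with Item~\ref{item:finitarystreett1} of Theorem~\ref{thm:finitarystreett} to identify the winning set of $\cap_{m=1}^n ({\sf Dir})\BP(p_m)$ with that of $\cap_{m=1}^n ({\sf Dir})\PR(b,p_m)$, after which everything reduces to the inclusions and equalities of Proposition~\ref{prop:inclusions} (Items~\ref{item:bounded},~\ref{item:monotone},~\ref{item:WPcPR},~\ref{item:encadrement},~\ref{item:samebnd}). Your write-up merely makes explicit the two-directional winning-set argument that the paper's terser proof leaves implicit.
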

\begin{proof}
By Item~\ref{item:finitarystreett1} of Theorem~\ref{thm:finitarystreett} and the reduction established in Lemma~\ref{lem:finitarystreett}, the winning set for the objective $\cap_{m=1}^n \BP(p_m)$ is equal to the winning set for the objective $\cap_{m=1}^n \PR(\lambda,p_m)$ with $\lambda = |V| \cdot 2^{n\frac{d}{2}} \cdot n\cdot\frac{d}{2}$. Now, as in the one dimension case, the other equalities follow from Proposition~\ref{prop:inclusions} (Items~\ref{item:monotone},~\ref{item:WPcPR},~\ref{item:encadrement},~\ref{item:samebnd}). We get the equalities for the direct variant with the same proof.
\qed
\end{proof}

\paragraph{\bf Lower bound on complexity.} To prove the $\sf EXPTIME$-hardness of objective $({\sf Dir})\BWP$ (and equivalently, of objective $({\sf Dir})\BP$), we establish a reduction from the \textit{membership problem for alternating polynomial-space Turing machines (APTMs)}~\cite{DBLP:journals/jacm/ChandraKS81}. Our proof is adapted from the reduction presented in~\cite[Lemma 23]{Chatterjee0RR15} in the related context of window mean-payoff games. Similar techniques have been used for request-response games~\cite{ChatterjeeHH11}. Since technical details are similar to~\cite[Lemma 23]{Chatterjee0RR15}, we only include here a high-level sketch of the reduction. The main change is the way we open and close windows: whereas weights were used for window mean-payoff games, we need here to emulate the same actions with adapted priorities.
Interestingly, our proof also shows $\sf EXPTIME$-hardness of the fixed variants, $({\sf Dir})\FWP$ and $({\sf Dir})\PR$. Furthermore, the hardness already holds with only three priorities ($d = 2$).

\begin{lemma}\label{prop:exptimehard}
Let $G = (V_1,V_2,E)$ be a game structure, $v_0$ be an initial vertex, $p_1,\ldots,p_n$ be $n$ priority functions. Let $\Obj$ be an objective $\Obj = \cap_{m=1}^n \Obj_m$ such that $\forall\, m$, $\Obj_m = {\sf (Dir)BndPR}(p_m)$ (resp. ${\sf (Dir)BndWP}(p_m)$, ${\sf (Dir)FixWP}(\lambda,p_m)$, ${\sf (Dir)FixPR}(\lambda,p_m)$). Deciding the winner from $v_0$ in the game $(G,\Obj)$ is $\sf EXPTIME$-hard even if for all $m \in \{1,\ldots,n\}$, $p_m \colon V \rightarrow \{0,1,2\}$.
\end{lemma}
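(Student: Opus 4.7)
The plan is to reduce from the $\sf EXPTIME$-complete membership problem for \emph{alternating polynomial-space Turing machines} (APTMs)~\cite{DBLP:journals/jacm/ChandraKS81}, following the blueprint of~\cite[Lemma~23]{Chatterjee0RR15} for multi-dimension window mean-payoff games and of the related constructions for request-response games~\cite{ChatterjeeHH11}. The main adaptation is that, instead of using weights $+1$/$-1$ to open and close windows, I will use priorities: priority $1$ will serve as the ``request'' that opens a window in a given dimension, priority $0$ as the ``response'' that closes it, and priority $2$ as the neutral default. This keeps each $p_m$ in $\{0,1,2\}$, matching the statement.

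Given an APTM $M$ with polynomial space bound $N = p(|x|)$ on input $x$, I would build a game $G$ whose plays encode sequences of configurations of $M$ written cell by cell, with $\playerOne$ simulating existential moves and $\playerTwo$ simulating universal ones. The key design choice is to introduce one priority function $p_m$ per cell position $m \in \{1,\ldots,N\}$, yielding $n = \mathcal{O}(N)$ dimensions. In dimension~$m$, a vertex receives priority $1$ when cell $m$ is written in the current configuration (opening a window), priority $0$ when cell $m$ of the next configuration is written \emph{consistently} with the local transition rule applied to cells $m-1, m, m+1$ of the previous one (closing the window), and priority $2$ everywhere else. Consistency is enforced through the classical ``challenge'' gadget: one player declares the transition consistent, the other may contest by pointing to one of the three preceding cells, and a deterministic sub-game then forces a window in the contested dimension to stay open forever iff the declaration was wrong. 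The game remains polynomial in $|M|$ and $|x|$.

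The correctness argument splits in two directions. If $M$ accepts $x$, I would lift the accepting strategy of the existential player into a strategy $\sigma_1$ for $\playerOne$ that produces a play in which every window in every dimension closes within $\mathcal{O}(N)$ steps, so the play lies in $\cap_{m=1}^n \DirFWP(\lambda, p_m)$ for some polynomial $\lambda$; by Proposition~\ref{prop:inclusions} it then also lies in the three weaker objectives ${\sf FixPR}(\lambda,\cdot)$, $\DirBP$ and $\BP$ (and the corresponding $\sf WP$ variants). Conversely, if $M$ rejects~$x$, a winning strategy for the universal player yields a $\playerTwo$ strategy that (i)~forces at least one inconsistent write in some dimension $m$ along any outcome, and (ii)~pads this outcome with longer and longer ``stalling'' sub-branches before revealing the inconsistency, making the delay between an opening and the corresponding closing in dimension~$m$ unbounded. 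Hence no bound $\lambda$ works, and the play violates even the weakest objective $\cap_{m=1}^n \BP(p_m)$. Since a single construction works for all four objectives uniformly, this establishes $\sf EXPTIME$-hardness of each.

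The main obstacle will be the explicit design of the challenge and stalling gadgets using only priorities~$\{0,1,2\}$ per dimension, while guaranteeing that $\playerTwo$'s stalling does not spuriously open windows in dimensions other than the contested one: every ``idle'' vertex visited in the stalling loop must carry priority $2$ in every dimension $m' \neq m$, and the only way to enter such a loop should be under $\playerTwo$'s control once an inconsistency has been flagged. This is essentially a bookkeeping exercise whose details mirror those of~\cite{Chatterjee0RR15,ChatterjeeHH11}, which is why I would present the reduction at a high level rather than fully formally. Once the gadget is in place, the uniformity over the four variants follows from Proposition~\ref{prop:inclusions}: accepting APTM instances land the play in the strongest objective, while rejecting ones push it outside the weakest.
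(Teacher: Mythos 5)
Your overall strategy (reduce from APTM membership, use priorities $1/0/2$ as open/close/neutral so that $d=2$, and handle all four objectives with one construction via the inclusions of Proposition~\ref{prop:inclusions}) matches the paper's, but the encoding you propose has two concrete gaps that are not mere ``bookkeeping.'' First, your assignment of priorities is not well-defined: you want dimension~$m$ to carry priority $0$ ``when cell $m$ of the next configuration is written \emph{consistently} with the local transition rule,'' but a priority function labels \emph{vertices}, and consistency is a property of the play history (it depends on the contents of cells $m-1,m,m+1$ of the previous configuration). To route the play to a ``consistent'' versus ``inconsistent'' vertex the polynomial-size graph would have to remember those contents, which is exactly the information that cannot be stored. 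The paper avoids the whole issue of cross-configuration consistency by never writing configurations out explicitly: it uses \emph{two} dimensions $(h,0)$ and $(h,1)$ per cell, stores the tape content in \emph{which of the two windows is open}, has $\playerOne$ ``disclose'' a cell's content by closing the window of the claimed symbol, and lets $\playerTwo$ verify by branching to a vertex that closes every window except those of cell $h$ --- so a lie leaves the true symbol's window open, to be stretched arbitrarily at a restart vertex. With one dimension per cell and only ``open/closed'' as state, your windows cannot encode the symbol at all, so the challenge gadget has no source of truth to check against.

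Second, your soundness argument breaks when $\mathcal{M}$ rejects $\zeta$: you claim $\playerTwo$ ``forces at least one inconsistent write along any outcome,'' but if $\playerOne$ simply simulates the (rejecting) computation faithfully forever, there is no inconsistency, every window in your construction closes on schedule, and $\playerOne$ wins all four objectives --- so the reduction as described is unsound. You need a separate mechanism that punishes non-acceptance; the paper adds one extra dimension whose window is opened at the start of each round and can only be closed at the accepting vertex $q_{\text{acc}}$ (or at the verification branch), which is what forces $\playerOne$ to actually drive the simulation to acceptance within the bound $\lambda$. Relatedly, to defeat the prefix-independent bounded variants $\cap_m \BP(p_m)$ and $\cap_m \BWP(p_m)$, $\playerTwo$ must be able to restart the simulation and repeat losing rounds of strictly increasing length infinitely often (as in Example~\ref{ex:2}); your ``stalling before revealing the inconsistency'' gestures at this, but without the restart-and-reinitialize structure and the acceptance dimension the argument does not go through.
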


\begin{proof}[Sketch]
Given an APTM $\mathcal{M}$ and a word $\zeta \in \{0,1\}^*$, such that the tape contains at most $f(|\zeta|)$ cells, where $f$ is a polynomial function, the membership problem asks to decide if $\mathcal{M}$ accepts $\zeta$. It is well-known to be $\sf EXPTIME$-hard~\cite{DBLP:journals/jacm/ChandraKS81}. We first show how to reduce this problem to deciding if $\playerOne$ has a winning strategy in a game $G$ with an objective $\Obj = \cap_{m=1}^n \BWP(p_m)$. We discuss the other objectives later as the corresponding results are easily obtained with the same skeleton.

We build the game $G$ so that $\playerOne$ has to simulate the run of $\mathcal{M}$ on $\zeta$, and $\playerOne$ has a winning strategy in~$G$ if and only if the word is accepted by the machine. For each tape cell $h \in \{1, 2, \ldots, f(|\zeta|)\}$, we have two dimensions, $(h, 0)$ and $(h, 1)$. The game starts in a vertex $q_\text{in}$ that initializes those dimensions in order to encode the contents of the APTM tape: if the cell $h$ contains $1$ (resp.~$0$), then vertex $q_\text{in}$ has priority $1$ on dimension $(h,1)$ (resp.~$0$) and priority $0$ on dimension $(h,0)$ (resp.~$1$). In terms of windows, this means that when we start the game, we open a window in the dimensions that correspond to the actual contents of the tape. The goal for $\playerOne$ is now to correctly simulate the operation of the APTM by disclosing the correct symbols at each step.

The gadget used to simulate one step of the APTM is presented in Figure~\ref{fig:multiDimFixedMembership}. When $\playerOne$ reaches the vertex $(q,h)$, he must disclose the symbol under the tape head: he can either claim that it contains a $0$ and go to $(q, h, 0)_\text{check}$, which has priority $0$ in dimension $(h,0)$, or claim that the cell contains $1$ and go to $(q, h, 1)_\text{check}$, which has priority $0$ in dimension $(h,1)$. Priorities in all other dimensions are always set to $2$ (hence they do not open nor close any window). Intuitively, disclosing the correct symbol permits to close the open window on dimension $(h,i)$ whereas lying about the symbol leaves this window open.

\begin{figure}[htb]
  \centering   
  \scalebox{1}{\begin{tikzpicture}[->,>=stealth',shorten >=1pt,auto,node
    distance=2.5cm,bend angle=45,scale=0.6, font=\scriptsize]
    \tikzstyle{p1}=[draw,circle,text centered,minimum size=12mm]
    \tikzstyle{p2}=[draw,rectangle,text centered,minimum size=8mm]
    \tikzstyle{p3}=[draw, rounded rectangle, dashed, text centered,minimum size=8mm]
    \node[p1]  (0)  at (0, 0) {$(q, h)$};
    \node[p2]  (1) at (3, 2.5) {$(q, h, 0)_{\text{check}}$};
    \node[p2]  (2) at (3, -2.5)  {$(q, h, 1)_{\text{check}}$};
    \node[p2]  (3) at (6,0)  {$(q, h)_{\text{branch}}$};
    \node[p2]  (4)  at (10, 0) {$q_{\text{restart}}$};
    \node[p3]  (5)  at (8, 2.5) {$(q, h, 0)$};
    \node[p3]  (6)  at (8, -2.5) {$(q, h, 1)$};
    \path
    (0) edge (1)
    (0) edge (2)
    (1) edge (3)
    (1) edge (5)
    (2) edge (3)
    (2) edge (6)
    (3) edge (4)
    (5) edge (10, 2.5)
    (5) edge (10, 3.5)
    (5) edge (10, 1.5)
    (6) edge (10, -2.5)
    (6) edge (10, -3.5)
    (6) edge (10, -1.5);
    \draw[-,decorate,decoration={brace,amplitude=5pt},xshift=-3mm,yshift=0pt] (10.5,3.7) -- (10.5,1.3) node [black,midway,xshift=9pt] {Transitions of $(q, 0)$};
    \draw[-,decorate,decoration={brace,amplitude=5pt},xshift=-3mm,yshift=0pt] (10.5,-1.3) -- (10.5,-3.7) node [black,midway,xshift=9pt] {Transitions of $(q, 1)$};
      \end{tikzpicture}}
      \caption[Gadget ensuring a correct simulation of the APTM on tape cell $h$]{Gadget ensuring a correct simulation of the APTM on tape cell $h$.}
\label{fig:multiDimFixedMembership}
  \end{figure}
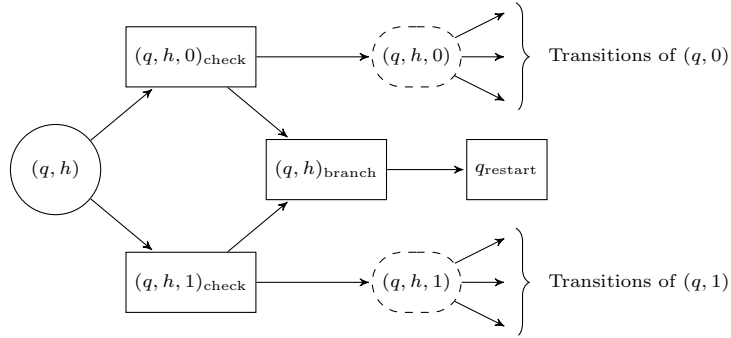

The job of $\playerTwo$ is to ensure that $\playerOne$ was faithful: in vertex $(q, h, i)_{\text{check}}$, $\playerTwo$ has the choice to either let the simulation continue by going to the vertex $(q, h, i)$ (where transitions of the APTM are encoded), or to go to $(q, h)_{\text{branch}}$ where the priority is $0$ in all dimensions except for $(h,0)$ and $(h,1)$. This means that if $\playerTwo$ decides to branch, all windows will be closed if $\playerOne$ was faithful, whereas an open window will remain on dimension $(h,i)$ if $\playerOne$ lied.

After $(q, h)_{\text{branch}}$, the game goes to $q_{\text{restart}}$, a vertex where the priority is $2$ in all dimensions and that contains a self-loop: this implies that if $\playerOne$ lied, $\playerTwo$ can take profit of this vertex to create an arbitrarily large open window. From $q_{\text{restart}}$, $\playerTwo$ additionally has the possibility to restart the game by going back to $q_{\text{in}}$ where the contents of the tape are encoded again for a new round of simulation.

Finally, the game also contains a vertex $q_{\text{acc}}$ that represents the accepting state of the APTM. To force $\playerOne$ to go toward this vertex, we use an additional dimension that sees priority $1$ at the beginning of the game (thus opening a window) and for which $q_{\text{acc}}$ and $(q, h)_{\text{branch}}$ are the only vertices with priority $0$ (hence the only ones able to close the open window). After visiting $q_{\text{acc}}$, the game goes to 
$q_{\text{restart}}$. Observe that the total number of dimensions used in our game is $(2\cdot f(|\zeta|) + 1)$.

Recall that we consider the objective $\Obj = \cap_{m=1}^n \BWP(p_m)$. Let $|\mathcal{C}|$ represent the size of the configuration graph of the APTM $\mathcal{M}$. It is known that $|\mathcal{C}|$ is a bound on the length of an accepting run in $\mathcal{M}$. In our game, this induces that if $q_{\text{acc}}$ can be reached, it can be reached in strictly less than $\lambda = 3\cdot |\mathcal{C}| + 3$ steps. We claim that $\playerOne$ has a winning strategy in our game if and only if the APTM $\mathcal{M}$ accepts the word $\zeta$.

First, assume that $\zeta$ is accepted by $\mathcal{M}$. Then, $\playerOne$ has a winning strategy consisting in always revealing the correct symbol. Indeed, either $\playerTwo$ never branches to a vertex $(q, h)_{\text{branch}}$ and the play reaches $q_{\text{acc}}$ and then $q_{\text{restart}}$ with all windows closed in strictly less than $\lambda$ steps; or $\playerTwo$ decides to branch to $(q, h)_{\text{branch}}$ but since $\playerOne$ never lied, all windows are closed when the play reaches $q_{\text{restart}}$ (again in strictly less than $\lambda$ steps). In both cases, the game can continue for another round of simulation in the same manner, yielding a play that is necessarily winning for $\Obj$ since it accepts a $\lambda$-good decomposition in each dimension.

Second, assume that $\zeta$ is not accepted by $\mathcal{M}$. If $\playerOne$ never cheats, then $\playerTwo$ can use the strategy that emulates the path in the run tree of $\mathcal{M}$ that never reaches the accepting state (this path exists since $\zeta$ is not accepted). Thus, the window in the last dimension can be kept open for an arbitrarily long time in order to make the round losing for $\playerOne$. Following this, $\playerTwo$ can restart the game and repeat this strategy. Note that in order to prevent $\playerOne$ from winning for objective $\Obj = \cap_{m=1}^n \BWP(p_m)$, it is necessary for $\playerTwo$ to keep restarting continuously and to increase the size of the open window at each round. The reason is essentially identical to the one in Example~\ref{ex:2}: ($i$) $\playerTwo$ must repeat losing rounds forever in order to counteract the prefix-independence of $\Obj$, ($ii$) rounds must be of increasing length to prevent the existence of a bound on the size of the windows.
Now, if $\playerOne$ does cheat, $\playerTwo$ can branch to $(q, h)_{\text{branch}}$ and reach $q_{\text{restart}}$ with an open window, here again creating an arbitrarily long open window, resulting in a losing round. Again, this can be repeated forever while increasing the size of open windows. Hence in both cases, $\playerOne$ has no winning strategy in $G$, which proves the correctness of our reduction.

It remains to discuss the adaptations needed for the other objectives. First observe that the right-to-left direction of the equivalence (i.e., case $\zeta$ not accepted) holds for any other objective $X$ claimed in Lemma~\ref{prop:exptimehard} since $X \subseteq \cap_{m=1}^n \BWP(p_m)$ by the various inclusions of Proposition~\ref{prop:inclusions}. For the other direction (i.e., case $\zeta$ accepted), observe that the strategy of $\playerOne$ that consists in always disclosing the correct symbol ensures that all windows close within $(\lambda -1)$ steps (with the $\lambda$ defined above): hence this strategy is also winning for $\cap_{m=1}^n \FWP(\lambda,p_m)$. The hardness also holds for the direct variants $\DirBWP$ and $\DirFWP$ as this strategy never produces any $\lambda$-bad window. Finally, the inclusions of Proposition~\ref{prop:inclusions} suffice to see that this strategy is also winning for all \textsf{PR} variants (with the same $\lambda$ for the fixed variants). Hence the proof holds for all claimed objectives.\qed
\end{proof}

\paragraph{\bf Lower bounds on memory.} The last missing pieces to the proof of Theorem~\ref{prop:multiDirBP} are the exponential lower bounds on memory. Recall that for $\playerTwo$ in \textit{undirect} bounded \textsf{WP} or \textsf{PR} games, we already proved that infinite memory is necessary in Example~\ref{ex:2}. The next lemma covers all remaining cases and establishes lower bounds matching the upper bounds granted by Lemma~\ref{lem:finitarystreett}. To achieve this, we prove a polynomial-time reduction from \textit{generalized reachability games}~\cite{FijalkowH13} to multi-dimension $({\sf Dir})\BWP$ and $({\sf Dir})\BP$ games. Since the former games are known to require exponential memory for both players, the reduction yields the desired lower bounds. A similar reduction is presented for window mean-payoff games in~\cite{Chatterjee0RR15}. Interestingly, the same technique also works for multi-dimension $({\sf Dir})\FWP$ and $({\sf Dir})\PR$ games.

Let us sketch the reduction from $\GenReach$ to multi-dimension $\FWP$ games (the other cases are similar). Intuitively, if the generalized reachability objective asks to visits $n$ different target sets, we will use $n$ dimensions. We create a modified version of the game structure such that, at the start of the game, we see priority $1$ in all dimensions, hence opening a window, and such that the only way to close the window in dimension $m \in \{1, \ldots{}, n\}$ is to visit the $m$-th target set. We also modify the game by giving $\playerTwo$ the possibility to close all open windows and restart the game by opening new ones: this is necessary to ensure that the prefix-independence of objective $\FWP$ cannot help $\playerOne$ to win without visiting all target sets. Finally, we use the fact that if $\playerOne$ has a winning strategy in a $\GenReach$ game with $n$ targets, then he has one that wins in strictly less than $n \cdot |V|$ steps (i.e., edges), to define an appropriate window size $\lambda = 2\cdot n \cdot |V|$ for which the reduction to objective $\cap_{m=1}^n \FWP(\lambda,p_m)$ on our modified game structure holds. As the reduction in Lemma~\ref{prop:exptimehard}, we only need three priorities here ($d=2$).

\begin{lemma}\label{lem:memexp}
Both players need exponential memory to win in games $(G,\Obj)$ where $\Obj = \cap_{m=1}^n \Obj_m$ such that $\forall\, m$, $\Obj_m = {\sf (Dir)BndPR}(p_m)$ (resp. ${\sf (Dir)BndWP}(p_m)$, ${\sf (Dir)FixWP}(\lambda,p_m)$, ${\sf (Dir)FixPR}(\lambda,p_m)$) even if for all $m \in \{1,\ldots,n\}$, $p_m \colon V \rightarrow \{0,1,2\}$.  
\end{lemma}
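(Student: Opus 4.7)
The plan is to establish a polynomial-time reduction from generalized reachability games to our multi-dimension WP/PR games, exploiting the fact that $\GenReach$ games are known to require exponential memory for both players~\cite{FijalkowH13}. Since all four objectives listed in the lemma are compared by inclusions in Proposition~\ref{prop:inclusions}, the reduction only needs to be crafted carefully enough so that $\playerOne$'s winning strategy in the constructed game coincides with his winning strategy for the tightest objective ($\DirFWP(\lambda,p_m)$ for all $m$) while $\playerTwo$'s winning strategy defeats even the loosest ($\BP(p_m)$ for some $m$).

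Given a game structure $G_r = (V_1, V_2, E_r)$ with generalized reachability objective $\cap_{m=1}^n \Reach(U_m)$, I would build a new game structure $G$ with one priority function $p_m$ per target set $U_m$, each taking values in $\{0, 1, 2\}$. The idea is to add an initial ``opener'' vertex $v_{\text{open}}$ (controlled by $\playerTwo$) from which the play must proceed to the simulated copy of~$G_r$, and a ``restart'' vertex $v_{\text{restart}}$ (also controlled by $\playerTwo$) reachable from every vertex of $G_r$ via a $\playerTwo$-controlled gadget. Priorities are set as follows: $v_{\text{open}}$ has priority $1$ in every dimension (thus opening $n$ windows simultaneously); each vertex $v \in U_m$ has priority $0$ in dimension $m$ and priority $2$ elsewhere; the remaining vertices of $G_r$ have priority $2$ everywhere. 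Finally, from $v_{\text{restart}}$, $\playerTwo$ may either self-loop arbitrarily long (priority $2$ everywhere) or go back to $v_{\text{open}}$. This way, the only way to close the window opened on dimension $m$ is to visit a vertex of $U_m$, and $\playerTwo$ can force the game to restart at will, reopening a fresh set of windows.

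For correctness, first recall that in a $\GenReach$ game won by $\playerOne$, he has a strategy that visits all target sets within strictly less than $n \cdot |V|$ steps. Setting $\lambda = 2n|V|$, I would show that if $\playerOne$ wins $G_r$, then translating his $\GenReach$ strategy into $G$ (and choosing, at the gadgets, to proceed rather than allowing $\playerTwo$ to restart before all targets are hit) produces a play admitting a $\lambda$-good decomposition in every dimension, since each round from $v_{\text{open}}$ closes all windows within $\lambda-1$ steps, even if $\playerTwo$ restarts. This gives winning strategies for the strongest objective $\cap_{m=1}^n \DirFWP(\lambda, p_m)$, hence also for all three weaker objectives of the statement. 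Conversely, if $\playerTwo$ wins $G_r$, he has a strategy preventing some target $U_m$ from being reached; transposed to $G$, he uses that strategy to keep the window in dimension $m$ open, then forces a restart via $v_{\text{restart}}$ after waiting increasingly long at the self-loop, producing ever-growing open windows in dimension~$m$. This defeats $\playerOne$ even for the weakest objective $\cap_{m=1}^n \BP(p_m)$, along the same lines as Example~\ref{ex:2}.

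The main obstacle lies in the prefix-independent variants: a single failed round does not immediately lose the play for $\playerOne$, so the restart gadget must be designed so that $\playerTwo$ can indefinitely repeat losing rounds of strictly increasing length, ruling out any finite bound~$\lambda'$. This requires $\playerTwo$ to use infinite memory against $\playerOne$ in the bounded variants, but only exponential memory (matching the $\GenReach$ bound) in the fixed variants, consistent with the upper bound from Lemma~\ref{lem:finitarystreett}. Since the construction uses only priorities in $\{0, 1, 2\}$ and the reduction is polynomial in the size of $G_r$ and $n$, the $2^{\Omega(n)}$ memory lower bound for both players in $\GenReach$ games~\cite{FijalkowH13} transfers directly, completing the proof for all four objectives.
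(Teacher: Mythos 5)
Your overall strategy is the paper's: a polynomial reduction from generalized reachability games (with one dimension per target set, priorities in $\{0,1,2\}$, a $\playerTwo$-controlled restart mechanism obtained by splitting edges, and $\lambda = 2\cdot n\cdot|V^r|$). However, your priority assignment breaks the soundness direction of the reduction. In your construction the only vertices carrying priority $0$ in dimension $m$ are those of $U_m$: neither $v_{\text{open}}$ (priority $1$ everywhere) nor $v_{\text{restart}}$ and its self-loop (priority $2$ everywhere) ever close a pending window. Since the branching gadgets are $\playerTwo$-controlled — and they must be, otherwise the prefix-independent variants are won trivially by never restarting — $\playerTwo$ can, in every round, branch to $v_{\text{restart}}$ after a single step in the copy of $G^r$, long before $\playerOne$ has had a chance to visit all target sets. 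The window opened at $v_{\text{open}}$ in any dimension $m$ whose target has not yet been hit then remains open while $\playerTwo$ idles on the self-loop for more than $\lambda$ steps, producing a $\lambda$-\bad{} window in every round (and windows of unbounded length if the idling grows). This defeats $\cap_m\FWP(\lambda,p_m)$, and a fortiori all the other variants, even when $\playerOne$ wins the $\GenReach$ game — so your left-to-right implication fails. Your parenthetical remark that $\playerOne$ can ``choose, at the gadgets, to proceed rather than allowing $\playerTwo$ to restart'' is not available to him precisely because those gadgets belong to $\playerTwo$.

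The fix is the one used in the paper: give $v_{\text{restart}}$ priority $0$ in \emph{every} dimension, so that a restart forced by $\playerTwo$ closes all open windows and early restarts are harmless to $\playerOne$; the delaying self-loop then becomes unnecessary. The point that must be argued instead is that $\playerOne$ cannot abuse these free $0$'s: against the $\playerTwo$ strategy that stays inside the copy of $G^r$ for $\lambda-1$ steps before restarting, $\playerOne$ must close all windows using only the priorities of the $U_m$'s, i.e., he must genuinely visit every target set within $\lambda-1$ steps, which is exactly what yields a winning $\GenReach$ strategy. With that correction (and the observation that Lemma~\ref{lem:sameobj} transfers the result to the \textsf{PR} variants since $d=2$), the rest of your argument goes through.
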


\begin{proof}
We use a polynomial reduction from generalized reachability games for which it is known that both players need exponential memory to win~\cite{FijalkowH13}. We begin with the case $\Obj = \cap_{m=1}^n \FWP(\lambda,p_m)$ and discuss the other cases at the end of the proof. 

Let $G^r = (V_1^r,V_2^r,E^r)$ be a game structure, $U_1,\ldots,U_n \subseteq V^r$ be target sets, and $v_0$ be the initial vertex. Let us recall that in the game $(G^r,\Obj^r)$ where $\Obj^r = \GenReach(U_1,\ldots,U_n)$, if $\playerOne$ has a winning strategy from~$v_0$ then there exists one which ensures visiting all sets $U_m$ in strictly less than $n \cdot |V^r|$ steps \cite{FijalkowH13}.

We build a game structure $G = \Gdev$ as follows. We define $V = V_1^r$ and $V_2 = V_2^r \cup V_{\text{branch}} \cup \{v_{\text{restart}}\}$ such that $V_{\text{branch}} = \{ b_{v,v'} \mid (v,v') \in E^r \}$ and $v_{\text{restart}}$ is a new vertex that is the initial vertex in $G$. We define $E$ as the set of edges such that $(v,b_{v,v'}), (b_{v,v'},v'), (b_{v,v'},v_{\text{restart}}) \in E$ for all $(v,v') \in E^r$, and $(v_{\text{restart}},v_0) \in E$. That is, we split each edge of $E^r$ into two edges such that $\playerTwo$ can decide in the central vertex $b_{v,v'}$ to branch to $v_{\text{restart}}$ or to continue as in $G^r$, and there is an edge from the new initial vertex $v_{\text{restart}}$ of $G$ to the old initial vertex $v_0$ of $G^r$. We define $n$ priority functions as follows: for each $m \in \{1,\ldots,n\}$, we let $p_m(v_{\text{restart}}) = 0$, $p_m(b_{v,v'}) = 2$ for all $b_{v,v'} \in V_{\text{branch}}$, $p_m(v_0) = 0$ if $v_0 \in U_m$ and $1$ otherwise, and for all the other vertices $v$ of $V$, we let $p_m(v) = 0$ if $v \in U_m$ and $2$ otherwise. Notice that each vertex has only even priorities except $v_0$ which has odd priority $p_m(v_0) = 1$ whenever $v_0 \not\in U_m$.

We claim that $\playerOne$ has a winning strategy for objective $\Obj = \cap_{m=1}^n \FWP(\lambda,p_m)$ in $G$ with $\lambda = 2 \cdot n \cdot |V^r|$ if and only if he has a winning strategy for the generalized reachability objective $\Obj^r$ in $G^r$.

We first prove the left-to-right implication. Consider a winning strategy $\sigma_1$ of $\playerOne$ in $(G,\Obj)$ from $v_{\text{restart}}$. By definition $\sigma_1$ ensures victory against any strategy of $\playerTwo$. In particular, it must win against a strategy~$\sigma_2$ that plays in rounds, such that at each round, it chooses edges $(b_{v,v'}, v')$ during the first $(\lambda-1)$ steps of the round, then chooses edge $(b_{v,v'}, v_{\text{restart}})$ and starts a new round. Consequently, strategy $\sigma_1$ must eventually be able to close all windows in $(\lambda-1)$ steps without using the priorities $0$ from $v_{\text{restart}}$, otherwise $\playerTwo$ can create infinitely-many $\lambda$-bad windows and win the game. The only way to achieve this is to visit all target sets $U_m$ within $(\lambda-1)$ steps. Consider the point from which $\sigma_1$ closes all $\lambda$-windows (it exists since $\sigma_1$ wins for~$\Obj$). Observe that from this point on, this strategy can be mimicked in $G^r$ as during the first $(\lambda-1)$ steps of the round, state $v_{\text{restart}}$ is not visited, hence there is a bijection between histories in both games. By construction, the mimicking strategy $\sigma_1^r$ ensures that all target sets are visited in at most $(n \cdot |V^r| - 1)$ steps (as we get rid of the additional vertices of $V$). Hence it is winning for $\Obj^r$.

Second, we prove the right-to-left implication. Consider now a winning strategy $\sigma_1^r$ of $\playerOne$ in $(G^r,\Obj^r)$ from~$v_0$ that ensures reaching all sets $U_m$ in strictly less than $n \cdot |V^r|$ steps (w.l.o.g.). We define the strategy~$\sigma_1$ of $\playerOne$ in $(G,\Obj)$ that mimics $\sigma_1^r$ and each time $v_{\text{restart}}$ is reached restarts mimicking $\sigma_1^r$ from $v_0$. Let $\pi$ be a play consistent with $\sigma_1$. For all dimension $m \in \{1,\ldots,n\}$, for all position $j \geq 0$, we must prove that there exists $l \in \{0, 1, \ldots{}, \lambda-1\}$ such that $\pi[j+l]$ is the \smallest\ priority in $\pi[j, j+l]$. Since we only have priorities in $\{0, 1, 2\}$ and $v_0$ is the only vertex which can have priority $1$, this boils down to checking that each position~$j$ such that $\pi[j] = v_0$ is followed within $(\lambda -1)$ steps by a position $(j+l)$ such that $p_m(\pi[j+l]) = 0$.
This is obviously the case if there exists $l \leq \lambda -1$ such that $\pi[j+l] = v_{\text{restart}}$. Otherwise, as $\sigma_1$ mimics $\sigma_1^r$ and each edge of $E^r$ has been split in $G$ into two edges, $\pi[j] = v_0$ must be followed by a vertex $v \in U_m$ in strictly less than $2 \cdot n \cdot |V^r| = \lambda$ steps, by definition of $\sigma_1^r$. Hence, there exists $l < \lambda$ such that $p_m(\pi[j+l]) = 0$, which proves that $\pi \in \Obj$ and establishes the right-to-left implication.

Finally, we discuss why this reduction also holds for the other objectives mentioned in Lemma~\ref{lem:memexp}. First, the same proof holds for the direct variant $\DirFWP$: the left-to-right implication is obvious since this objective is included in $\FWP$ (Proposition~\ref{prop:inclusions}, Item~\ref{item:direct}), and the right-to-left implication actually yields a strategy $\sigma_1$ which never produces any $\lambda$-bad window, so it also wins for $\DirFWP$. Second, the proof also holds for the bounded variants $\BWP$ and $\DirBWP$. Indeed, for the left-to-right implication, the fact that strategy $\sigma_1$ must be able to close all windows (hence reach all target sets) suffice (whatever the number of steps taken), whereas the right-to-left implication is obvious since the fixed variants are included in the bounded ones by Proposition~\ref{prop:inclusions}, Item~\ref{item:bounded}. Third, all corresponding \textsf{PR} variants can be obtained through the equality of the bounded variants for \textsf{PR} and \textsf{WP} (Proposition~\ref{prop:inclusions}, Item~\ref{item:samebnd}), and the equality of the fixed variants for \textsf{PR} and \textsf{WP} in the particular case where $d = 2$ (Lemma~\ref{lem:sameobj}), as in this reduction.\qed
\end{proof} 

For the reader's interest, we complement Lemma~\ref{lem:memexp} with an example illustrating the need for exponential memory for $\playerOne$ in $\FWP$ games (it also works for the other objectives of Lemma~\ref{lem:memexp}).

\begin{example}
Consider the family of game structures depicted in Figure~\ref{fig:multiDimExpFamily}. This family is parameterized by $n \in \mathbb{N}_0$ and is inspired by a similar one proposed in~\cite{CRR14} for a different context (i.e., energy games). For each of these games structures, the number of vertices is linear in~$n$ ($|V| = 6n$) and we define $2n$ priority functions in the following way: for all $i \in \{1, \ldots, n\}$ and for all $m \in \{1, \ldots,  2n\}$, $p_m(v_i) = p_m(u_i) = 2$, 
\begin{eqnarray*}
 &p_m(v_{i,L}) = 
\begin{cases}
1 & \mbox{if $m = 2i-1$} \\
2 & \mbox{otherwise}
\end{cases},\ & \qquad p_m(v_{i,R}) =
\begin{cases}
1 & \mbox{if $m = 2i$} \\
2 & \mbox{otherwise}
\end{cases},\\
& p_m(u_{i,L}) = 
\begin{cases}
0 & \mbox{if $m = 2i-1$} \\
2 & \mbox{otherwise}
\end{cases},\ & \qquad p_m(u_{i,R}) = 
\begin{cases}
0 & \mbox{if $m = 2i$} \\
2 & \mbox{otherwise}
\end{cases}.\\
\end{eqnarray*}
Let $\Obj = \cap_{m=1}^{2n} \FWP(3n, p_m)$ be the objective of $\playerOne$. In order to prevent $(3n)$-bad windows, $\playerOne$ has to choose $u_{i,L}$ (resp.~$u_{i,R}$) whenever $\playerTwo$ chooses $v_{i,L}$ (resp.~$v_{i,R}$). Hence in order to prevent outcomes with infinitely-many $(3n)$-bad windows, $\playerOne$ must be able to record $2^n$ different histories from $v_1$ to $u_1$. This obviously requires exponential memory in $n$, hence in the size of the game.\hfill$\triangleleft$

\end{example}

  \vspace{-12mm}
\begin{figure}[htb]
  \centering   
  \scalebox{0.8}{\begin{tikzpicture}[->,>=stealth',shorten >=1pt,auto,node
    distance=2.5cm,bend angle=45,scale=0.4, font=\small]
    \tikzstyle{p1}=[draw,circle,text centered,minimum size=8mm]
    \tikzstyle{p2}=[draw,rectangle,text centered,minimum size=8mm]
    \node[p2]  (0)  at (0, 0) {$v_{1}$};
    \node[p2]  (1) at (4, 2) {$v_{1,L}$};
    \node[p2]  (2) at (4, -2)  {$v_{1,R}$};
    \node[p2]  (3) at (8, 0)  {$v_{n}$};
    \node[p2]  (4)  at (12, 2) {$v_{n,L}$};
    \node[p2]  (5)  at (12, -2) {$v_{n,R}$};
    \node[p1]  (6)  at (16, 0) {$u_{1}$};
    \node[p1]  (7) at (20, 2) {$u_{1,L}$};
    \node[p1]  (8) at (20, -2)  {$u_{1,R}$};
    \node[p1]  (9) at (24, 0)  {$u_{n}$};
    \node[p1]  (10)  at (28, 2) {$u_{n,L}$};
    \node[p1]  (11)  at (28, -2) {$u_{n,R}$};
    \coordinate[shift={(-5mm,0mm)}] (init) at (0.west);
    \path
    (init) edge (0);
    \draw[->,>=latex] (0) to (1);
    \draw[->,>=latex] (0) to (2);
    \draw[->,>=latex] (3) to (4);
    \draw[->,>=latex] (3) to (5);
    \draw[dotted,->,>=latex] (1) to (3);
    \draw[dotted,->,>=latex] (2) to (3);
    \draw[->,>=latex] (4) to (6);
    \draw[->,>=latex] (5) to (6);
    \draw[->,>=latex] (6) to (7);
    \draw[->,>=latex] (6) to (8);
    \draw[dotted,->,>=latex] (7) to (9);
    \draw[dotted,->,>=latex] (8) to (9);
    \draw[->,>=latex] (9) to (10);
    \draw[->,>=latex] (9) to (11);
    \draw[->,>=latex] (10) to[out=160,in=60] (0);
    \draw[->,>=latex] (11) to[out=200,in=300] (0);
      \end{tikzpicture}}
\vspace*{-12mm}
      \caption{Family of multi-dimension games requiring exponential memory for $\playerOne$ for objective $\FWP$ with $\lambda = 3n$.}
      \label{fig:multiDimExpFamily}
  \end{figure}
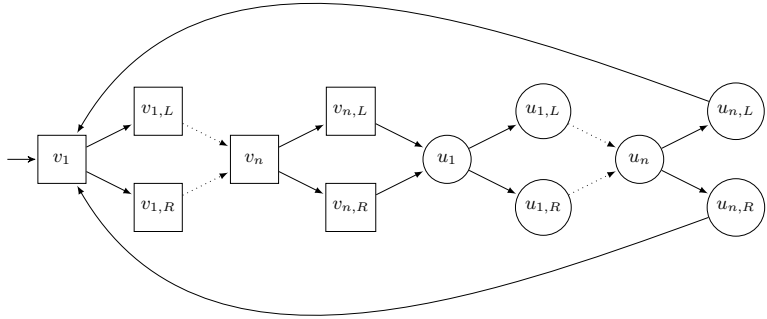
  \vspace{-5mm}

\subsection{Fixed variants} \label{subsec:multifixed}

As in the one-dimension setting, some differences arise for the fixed variants. While both \textsf{PR} and \textsf{WP} variants prove to be $\sf EXPTIME$-complete, the situation is slightly better for \textsf{WP} variants as we obtain an algorithm exponential in both the number of dimensions $n$ and the binary encoding of $\lambda$, whereas for \textsf{PR} we also get an additional exponential in the largest priority $d$ (which can be as large as the game structure). This distinction may be of importance in practical applications. We study the \textsf{PR} objective in Theorem~\ref{prop:multiPR} and the \textsf{WP} one in Theorem~\ref{prop:multiWP}.

\paragraph*{\bf Parity-response objectives.} To establish an exponential-time algorithm for multi-dimension $\DirPR$ (resp.~$\PR$) games, we reduce those games to safety (resp.~co-B\"uchi) games on an exponentially-larger game structure. Our reduction is in the same spirit\footnote{Note that the other algorithm suggested in Theorem~\ref{thm:fixed}, Item~\ref{item:fixed3} and exponential in $\lambda$ is not interesting here, since $\lambda$ can be exponential before the fixed variant becomes equivalent to the bounded one (Lemma~\ref{lem:multibornes}), hence this algorithm would take \textit{doubly}-exponential time.} as the one for Theorem~\ref{thm:fixed}, Item~\ref{item:fixed2} (i.e., case $d$ fixed in the one-dimension setting). That is, the extended structure encodes for each odd priority in each dimension, the number of steps since seeing the odd priority without seeing a \smaller\ priority in the meantime. The complexity and memory lower bounds follow from Lemma~\ref{prop:exptimehard} and Lemma~\ref{lem:memexp}.

\begin{theorem}\label{prop:multiPR} 
Let $G = (V_1,V_2,E)$ be a game structure, $v_0$ be an initial vertex, $p_1,\ldots,p_n$ be $n$ priority functions, and $\Obj$ be the objective $\cap_{m=1}^n \DirPR(\lambda,p_m)$ (resp. $\cap_{m=1}^n \PR(\lambda,p_m)$) for $\lambda \in \Nzero$. Deciding the winner in $(G,\Obj)$ from $v_0$ is $\sf EXPTIME$-complete with an algorithm in $\mathcal{O}((|V|+|E|) \cdot \lambda^{\frac{d}{2}\cdot n})$ (resp. $\mathcal{O}(|V|^2 \cdot \lambda^{d\cdot n})$) time, and exponential-memory strategies with $\mathcal{O}(\lambda^{\frac{d}{2}\cdot n})$ memory are sufficient for both players.
Moreover, exponential-memory strategies are necessary for both players.
\end{theorem}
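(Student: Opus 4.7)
The plan is to lift the one-dimension construction of Theorem~\ref{thm:fixed}, Item~\ref{item:fixed2}, to the multi-dimension setting; this yields the $\sf EXPTIME$ upper bound and the claimed exponential memory upper bound, after which the matching lower bounds follow immediately from Lemma~\ref{prop:exptimehard} and Lemma~\ref{lem:memexp}.

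Concretely, I would build an extended game structure $G' = (V_1', V_2', E')$ whose vertices remember, in addition to the current vertex of $G$, a vector of counters $\bar{l} = (l_{m,c})$ indexed by dimensions $m \in \{1,\ldots,n\}$ and odd priorities $c \in \{1, 3, \ldots, d-1\}$, with each $l_{m,c}$ ranging over $\{\bot, 0, 1, \ldots, \lambda-2\}$. As in the one-dimension case, $l_{m,c} = \bot$ encodes that no occurrence of priority $c$ in dimension $m$ is currently awaiting a $\strongerm$-smaller response, and otherwise $l_{m,c}$ counts the number of steps since the oldest such unanswered occurrence. The update rule is the componentwise product of the single-dimension rule, applied independently to each pair $(m,c)$ using the priority $p_m$ of the successor vertex; as soon as any counter would need to exceed $\lambda - 2$ without being reset, the play moves to a bad vertex~--- a single absorbing vertex $\beta$ for the direct variant, and a family $\beta_V = \{\beta_v \mid v \in V\}$ of transient bad vertices that immediately re-initialise the counters at $v$ for the undirect variant.

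The objective then becomes $\Safe(V' \setminus \{\beta\})$ for $\cap_{m=1}^n \DirPR(\lambda, p_m)$ and $\CoBuchi(V' \setminus \beta_V)$ for $\cap_{m=1}^n \PR(\lambda, p_m)$. Correctness is a dimension-by-dimension invocation of the single-dimension argument, using that a bad vertex is reached exactly when the fixed-\textsf{PR} condition is violated in some dimension, and that the conjunctive objective is satisfied iff no dimension ever triggers (resp.~eventually triggers) a violation. Since $|V'| = \mathcal{O}(|V| \cdot \lambda^{\frac{d}{2}\cdot n})$ and $|E'| = \mathcal{O}(|E| \cdot \lambda^{\frac{d}{2}\cdot n})$, the classical linear-time safety algorithm and quadratic-time co-B\"uchi algorithm recalled in Section~\ref{sec:preliminaries} yield exactly the announced running times, and memoryless winning strategies in $G'$ translate into strategies of size $\mathcal{O}(\lambda^{\frac{d}{2}\cdot n})$ in $G$, with the counter vector serving as memory.

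The $\sf EXPTIME$-hardness is already provided by Lemma~\ref{prop:exptimehard} and the exponential lower bounds on memory for both players by Lemma~\ref{lem:memexp}, so nothing further is needed on the lower-bound side. The main technical hurdle I anticipate is purely one of bookkeeping in defining the counter updates: one must ensure that a single $\strongerm$-smaller priority simultaneously resets the counters of all odd priorities $c$ it answers, and that a new occurrence of $c$ while $l_{m,c} \neq \bot$ does \emph{not} reset the counter, since it is the older still-pending occurrence that has to be answered within the window. This convention is already set up for the one-dimension case and transfers verbatim, so no new conceptual ingredient is needed beyond the product construction.
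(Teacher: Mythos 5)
Your proposal is correct and follows exactly the paper's own proof: the paper also lifts the counter construction of Theorem~\ref{thm:fixed}, Item~\ref{item:fixed2}, to a product over all $n$ dimensions (vertices $V \times \{\bot,0,\ldots,\lambda-2\}^{\frac{d}{2}\cdot n} \cup \beta_V$, with a single absorbing $\beta$ and a safety objective for the direct variant, and transient $\beta_v$ vertices with a co-B\"uchi objective for the undirect one), then invokes Lemma~\ref{prop:exptimehard} for hardness and Lemma~\ref{lem:memexp} for the memory lower bounds. The bookkeeping conventions you highlight (a single \smallerm\ priority resetting all counters it answers, and repeated occurrences of $c$ not resetting a pending counter) are precisely those of the one-dimension case applied component-wise, as in the paper.
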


\begin{proof}

In order to prove $\sf EXPTIME$-membership and that exponential-memory strategies are sufficient, we use the same construction as in the proof of Theorem~\ref{thm:fixed}, Item~\ref{item:fixed2}, except that we have to deal with multiple dimensions instead of one. More precisely, for the undirect variant $\PR$, we construct a new game structure~$G'$ from~$G$ such that the set $V'$ of vertices is equal to $V \times \{\bot,0,1,\ldots,\lambda - 2\}^{\frac{d}{2}\cdot n} \cup \beta_V$ with $\beta_V = \{\beta_v \mid v \in V\}$, and the set $E'$ of edges is defined as in the one-dimension case with the different dimensions considered component-wise. We define $U' = V' \setminus \beta_V$ and $\Obj' = \CoBuchi(U')$. With this construction, winning in the initial game $(G,\Obj)$ is equivalent to winning in the new game $(G',\Obj')$. For the direct variant $\DirPR$, the construction is similar except that $\beta_V$ is reduced to a unique absorbing vertex $\beta$ and $\Obj' = \Safe(U')$. In both constructions, the size of the game $G'$ is exponential in the size of the original game $G$, with $\mathcal{O}(|V'|) = \mathcal{O}(|V| \cdot \lambda^{\frac{d}{2}\cdot n})$ and $\mathcal{O}(|E'|) = \mathcal{O}(|E| \cdot \lambda^{\frac{d}{2}\cdot n})$. The stated complexities and memory requirements follow.

Finally, {\sf EXPTIME}-hardness of the decision problem follows from Lemma~\ref{prop:exptimehard} and the necessity of exponential memory for both players follows from Lemma~\ref{lem:memexp}.\qed
\end{proof}

\paragraph*{\bf Window parity objectives.} To conclude our study of \textsf{PR} and \textsf{WP} objectives, it remains to establish an exponential-time algorithm for multi-dimension $\DirFWP$ (resp.~$\FWP$) games. Again, we reduce those games to safety (resp.~co-B\"uchi) games on an exponentially-larger game structure. Our reduction is here based on the one used in the one-dimension setting (Theorem~\ref{prop:WPonedim}). That is, the extended structure encodes, for each dimension, the minimum priority of the current window, and the number of steps in that window. The complexity and memory lower bounds follow from Lemma~\ref{prop:exptimehard} and Lemma~\ref{lem:memexp}.

\begin{theorem}\label{prop:multiWP}
Let $G = (V_1,V_2,E)$ be a game structure, $v_0$ be an initial vertex, $p_1,\ldots,p_n$ be $n$ priority functions, and $\Obj$ be the objective $\cap_{m=1}^n \DirFWP(\lambda,p_m)$ (resp. $\cap_{m=1}^n \FWP(\lambda,p_m)$) for $\lambda \in \Nzero$. Deciding the winner in $(G,\Obj)$ from $v_0$ is $\sf EXPTIME$-complete with an algorithm in $\mathcal{O}((|V|+|E|)\cdot (d \cdot \lambda)^n)$ (resp. $\mathcal{O}(|V|^2 \cdot (d \cdot \lambda)^{2\cdot n})$) time, and exponential-memory strategies are sufficient for both players with $\mathcal{O}((d \cdot \lambda)^n)$ memory. Moreover, exponential-memory strategies are necessary for both players.
\end{theorem}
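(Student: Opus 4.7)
The plan is to lift the one-dimension construction of Theorem~\ref{prop:WPonedim} to $n$ dimensions, then invoke the existing hardness and lower-bound lemmas for the matching bounds. Concretely, from $G$ and the priority functions $p_1,\ldots,p_n$ I would build an extended game structure $G'$ whose vertices carry, for each dimension $m$, $(i)$ the current minimum priority $c_m \in \{0,\ldots,d\}$ seen in the current $\lambda$-window of dimension $m$, and $(ii)$ the step counter $l_m \in \{0,\ldots,\lambda-1\}$ of that window. Thus $V' = V \times (\{0,\ldots,d\}\times\{0,\ldots,\lambda-1\})^n \cup \beta_V$ in the undirect case (and the $\beta_V$ part replaced by a single absorbing vertex $\beta$ in the direct case). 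Edges of $G'$ mimic $E$ and update every component $(c_m, l_m)$ exactly as in the proof of Theorem~\ref{prop:WPonedim}: when the current $c_m$ is even, reset to $(p(v'), 0)$; when $c_m$ is odd and $l_m < \lambda-1$, move to $(\min(c_m, p_m(v')), l_m+1)$; otherwise (a $\lambda$-bad window in dimension $m$), branch to $\beta_{v'}$ (resp.\ $\beta$). The objective $\Obj'$ becomes $\CoBuchi(V'\setminus\beta_V)$ for $\cap_{m=1}^n\FWP(\lambda,p_m)$ and $\Safe(V'\setminus\{\beta\})$ for the direct variant.

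The correctness argument is the same as in the one-dimension case, applied component-wise: by Lemma~\ref{lem:goodDec}, each dimension $m$ admits an (eventually) $\lambda$-good decomposition of the outcome iff every dimension-$m$ window closes in at most $\lambda-1$ steps, which is exactly what ``$\beta$ is (eventually) avoided'' tracks in $G'$. Since all dimensions are monitored simultaneously in the same extended state, $\playerOne$ wins from $v_0$ in $(G,\Obj)$ iff he wins from the corresponding initial vertex $u_0$ in $(G',\Obj')$. Counting, $|V'| = \mathcal{O}(|V|\cdot(d\cdot\lambda)^n)$ and $|E'| = \mathcal{O}(|E|\cdot(d\cdot\lambda)^n)$, so solving the safety game takes $\mathcal{O}(|V'|+|E'|) = \mathcal{O}((|V|+|E|)\cdot(d\cdot\lambda)^n)$ time and solving the co-Büchi game takes $\mathcal{O}(|V'|^2) = \mathcal{O}(|V|^2\cdot(d\cdot\lambda)^{2n})$ time~\cite{ChatterjeeH14}, establishing the stated EXPTIME upper bound. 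Memorylessness of safety and co-Büchi strategies on $G'$ yields finite-memory winning strategies of size $\mathcal{O}((d\cdot\lambda)^n)$ in $G$ for both players, which is the claimed upper bound on memory.

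For the matching lower bounds, EXPTIME-hardness is already furnished by Lemma~\ref{prop:exptimehard} (which was proved directly for $\cap_{m=1}^n \FWP(\lambda,p_m)$ and its direct variant with $d=2$), and the exponential memory lower bound for both players comes from Lemma~\ref{lem:memexp} via the reduction from generalized reachability. Together, these give EXPTIME-completeness and tight exponential memory requirements for both players.

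I do not anticipate a genuine obstacle: the only subtlety is making sure the dimension-wise product of counters correctly reflects conjunction of $\DirFWP(\lambda,p_m)$ conditions, which works because a play belongs to $\cap_{m=1}^n\DirFWP(\lambda,p_m)$ iff in every dimension $m$ and every position $j$ the corresponding $\lambda$-window is good, and this is exactly what ``$\beta$ is never reached in $G'$'' expresses in a synchronous product of the per-dimension monitors. The rest is a careful but routine adaptation of the reasoning used for Theorem~\ref{prop:WPonedim}.
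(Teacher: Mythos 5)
Your proposal matches the paper's proof essentially verbatim: the same product construction $V' = V \times (\{0,\ldots,d\}\times\{0,\ldots,\lambda-1\})^n \cup \beta_V$ with component-wise counter updates, the same reduction to co-B\"uchi (undirect) and safety (direct) games yielding the stated complexities and memory upper bounds, and the same appeals to Lemma~\ref{prop:exptimehard} for $\sf EXPTIME$-hardness and Lemma~\ref{lem:memexp} for the exponential memory lower bounds. The argument is correct and complete as stated.
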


\begin{proof}
To establish $\sf EXPTIME$-membership and that exponential-memory strategies are sufficient, we proceed as in the one-dimension case (see the proof of Theorem~\ref{prop:WPonedim}). For the undirect variant $\FWP$, we construct from $G$ a new game structure $G'$ such that $V' = V \times (\{0,\ldots,d\} \times\{0,\ldots,\lambda - 1\})^n \cup \beta_V$ with $\beta_V = \{\beta_v \mid v \in V\}$, and the set $E'$ of edges is defined as in the one-dimension case with the different dimensions considered component-wise. Again $U' = V' \setminus \beta_V$ and $\Obj' = \CoBuchi(U')$. For the direct variant $\DirFWP$, $\beta_V$ is replaced by $\{\beta\}$ and $\Obj' = \Safe(U')$. In both cases, the size of $G'$ is exponential in the size of $G$, with $\mathcal{O}(|V'|) = \mathcal{O}(|V| \cdot (d \cdot \lambda)^n)$ and $\mathcal{O}(|E'|) = \mathcal{O}(|E| \cdot (d \cdot \lambda)^n)$. The stated complexities and memory requirements follow. To conclude, {\sf EXPTIME}-hardness follows from Lemma~\ref{prop:exptimehard} and the necessity of exponential memory follows from Lemma~\ref{lem:memexp}.
\qed\end{proof}

\paragraph*{\bf{Acknowledgments.}} We express our gratitude to Jean-Fran\c{c}ois Raskin (Universit\'e libre de Bruxelles, Belgium) and Martin Zimmermann (Saarland University, Germany) for insightful discussions.

\bibliographystyle{abbrv}
\bibliography{biblio}

\begin{thebibliography}{10}

\bibitem{DBLP:conf/csl/BaierKKW14}
C.~Baier, J.~Klein, S.~Kl{\"{u}}ppelholz, and S.~Wunderlich.
\newblock Weight monitoring with linear temporal logic: complexity and
  decidability.
\newblock In {\em Proc. of CSL-LICS}, pages 11:1--11:10. {ACM}, 2014.

\bibitem{Beeri80}
C.~Beeri.
\newblock On the membership problem for functional and multivalued dependencies
  in relational databases.
\newblock {\em {ACM} Trans. Database Syst.}, 5(3):241--259, 1980.

\bibitem{BHR16}
V.~Bruy\`ere, Q.~Hautem, and M.~Randour.
\newblock Window parity games: an alternative approach toward parity games with
  time bounds.
\newblock In {\em Proc. of GandALF}, EPTCS 226, pages 135--148, 2016.

\bibitem{BuhrkeLV96}
N.~Buhrke, H.~Lescow, and J.~V{\"{o}}ge.
\newblock Strategy construction in infinite games with {S}treett and {R}abin
  chain winning conditions.
\newblock In {\em Proc. of TACAS}, LNCS 1055, pages 207--224. Springer, 1996.

\bibitem{DBLP:journals/jacm/ChandraKS81}
A.~K. Chandra, D.~Kozen, and L.~J. Stockmeyer.
\newblock Alternation.
\newblock {\em J. ACM}, 28(1):114--133, 1981.

\bibitem{Chatterjee0RR15}
K.~Chatterjee, L.~Doyen, M.~Randour, and J.-F. Raskin.
\newblock Looking at mean-payoff and total-payoff through windows.
\newblock {\em Information and Computation}, 242:25--52, 2015.

\bibitem{ChatterjeeF13}
K.~Chatterjee and N.~Fijalkow.
\newblock Infinite-state games with finitary conditions.
\newblock In {\em Proc. of CSL}, LIPIcs 23, pages 181--196. Schloss Dagstuhl -
  LZI, 2013.

\bibitem{ChatterjeeH14}
K.~Chatterjee and M.~Henzinger.
\newblock Efficient and dynamic algorithms for alternating {B}{\"{u}}chi games
  and maximal end-component decomposition.
\newblock {\em J. {ACM}}, 61(3):15:1--15:40, 2014.

\bibitem{ChatterjeeHH09}
K.~Chatterjee, T.~A. Henzinger, and F.~Horn.
\newblock Finitary winning in $\omega$-regular games.
\newblock {\em {ACM} Trans. Comput. Log.}, 11(1), 2009.

\bibitem{ChatterjeeHH11}
K.~Chatterjee, T.~A. Henzinger, and F.~Horn.
\newblock The complexity of request-response games.
\newblock In {\em Proc. of LATA}, LNCS 6638, pages 227--237. Springer, 2011.

\bibitem{DBLP:conf/fossacs/ChatterjeeHP07}
K.~Chatterjee, T.~A. Henzinger, and N.~Piterman.
\newblock Generalized parity games.
\newblock In {\em Proc. of FOSSACS}, LNCS 4423, pages 153--167. Springer, 2007.

\bibitem{CRR14}
K.~Chatterjee, M.~Randour, and J.-F. Raskin.
\newblock Strategy synthesis for multi-dimensional quantitative objectives.
\newblock {\em Acta Informatica}, 51(3-4):129--163, 2014.

\bibitem{DziembowskiJW97}
S.~Dziembowski, M.~Jurdzinski, and I.~Walukiewicz.
\newblock How much memory is needed to win infinite games?
\newblock In {\em Proc. of LICS}, pages 99--110. {IEEE} Computer Society, 1997.

\bibitem{EmersonJ88}
E.~A. Emerson and C.~S. Jutla.
\newblock The complexity of tree automata and logics of programs (extended
  abstract).
\newblock In {\em Proc. of FOCS}, pages 328--337, 1988.

\bibitem{EmersonJ91}
E.~A. Emerson and C.~S. Jutla.
\newblock Tree automata, mu-calculus and determinacy (extended abstract).
\newblock In {\em Proc. of FOCS}, pages 368--377. {IEEE} Computer Society,
  1991.

\bibitem{DBLP:conf/cav/EmersonJS93}
E.~A. Emerson, C.~S. Jutla, and A.~P. Sistla.
\newblock On model-checking for fragments of $\mu$-calculus.
\newblock In {\em Proc. of CAV}, LNCS 697, pages 385--396. Springer, 1993.

\bibitem{FijalkowH13}
N.~Fijalkow and F.~Horn.
\newblock The surprizing complexity of generalized reachability games.
\newblock {\em CoRR}, abs/1010.2420, 2010.

\bibitem{DBLP:journals/corr/abs-1207-0663}
N.~Fijalkow and M.~Zimmermann.
\newblock Parity and {S}treett games with costs.
\newblock {\em LMCS}, 10(2), 2014.

\bibitem{2001automata}
E.~Gr{\"{a}}del, W.~Thomas, and T.~Wilke, editors.
\newblock {\em Automata, Logics, and Infinite Games: {A} Guide to Current
  Research}, LNCS 2500. Springer, 2002.

\bibitem{Hor05-GDV}
F.~Horn.
\newblock Streett games on finite graphs.
\newblock In {GDV}'05, 2005.

\bibitem{Immerman81}
N.~Immerman.
\newblock Number of quantifiers is better than number of tape cells.
\newblock {\em J. Comput. Syst. Sci.}, 22(3):384--406, 1981.

\bibitem{Jurdzinski98}
M.~Jurdzinski.
\newblock Deciding the winner in parity games is in {UP} $\cap$ co-{UP}.
\newblock {\em Inf. Process. Lett.}, 68(3):119--124, 1998.

\bibitem{Jurdzinski00}
M.~Jurdzinski.
\newblock Small progress measures for solving parity games.
\newblock In {\em Proc. of STACS}, LNCS 1770, pages 290--301. Springer, 2000.

\bibitem{DBLP:journals/siamcomp/JurdzinskiPZ08}
M.~Jurdzinski, M.~Paterson, and U.~Zwick.
\newblock A deterministic subexponential algorithm for solving parity games.
\newblock {\em {SIAM} J. Comput.}, 38(4):1519--1532, 2008.

\bibitem{DBLP:journals/fmsd/KupfermanPV09}
O.~Kupferman, N.~Piterman, and M.~Y. Vardi.
\newblock From liveness to promptness.
\newblock {\em FMSD}, 34(2):83--103, 2009.

\bibitem{Martin75}
D.~A. Martin.
\newblock Borel determinacy.
\newblock {\em Annals of Mathematics}, 102(2):363--371, 1975.

\bibitem{PitermanP06}
N.~Piterman and A.~Pnueli.
\newblock Faster solutions of {R}abin and {S}treett games.
\newblock In {\em Proc. of LICS}, pages 275--284. {IEEE} Computer Society,
  2006.

\bibitem{randourECCS}
M.~Randour.
\newblock Automated synthesis of reliable and efficient systems through game
  theory: A case study.
\newblock In {\em Proc. of ECCS 2012}, Springer Proceedings in Complexity XVII,
  pages 731--738. Springer, 2013.

\bibitem{DBLP:conf/fsttcs/Schewe07}
S.~Schewe.
\newblock Solving parity games in big steps.
\newblock In {\em Proc. of FSTTCS}, LNCS 4855, pages 449--460. Springer, 2007.

\bibitem{Thomas97}
W.~Thomas.
\newblock Languages, automata, and logic.
\newblock In {\em Handbook of Formal Languages}, volume 3, Beyond Words,
  chapter~7, pages 389--455. Springer, 1997.

\bibitem{WallmeierHT03}
N.~Wallmeier, P.~H{\"{u}}tten, and W.~Thomas.
\newblock Symbolic synthesis of finite-state controllers for request-response
  specifications.
\newblock In {\em Proc. of CIAA}, LNCS 2759, pages 11--22. Springer, 2003.

\bibitem{Weinert016}
A.~Weinert and M.~Zimmermann.
\newblock Easy to win, hard to master: Optimal strategies in parity games with
  costs.
\newblock In {\em Proc. of CSL}, LIPIcs 62, pages 31:1--31:17. Schloss Dagstuhl
  - LZI, 2016.

\bibitem{Zielonka98}
W.~Zielonka.
\newblock Infinite games on finitely coloured graphs with applications to
  automata on infinite trees.
\newblock {\em Theor. Comput. Sci.}, 200(1-2):135--183, 1998.

\end{thebibliography}

\end{document}